\pdfoutput=1
\documentclass[11pt]{article}

%%
%% page setup and comments:
%% - comment out the \def\comments{1} if you do not want comments
%% - leave the \def\comments{1} uncommented if you do want comments
%%

%\def\comments{1}

%% leave the next few lines below alone

\ifx\comments\undefined
  % then do normal margins and ignore comments
  \usepackage[letterpaper,margin=.7in,includefoot]{geometry}
  \newcommand{\XXXcomment}[1]{}
\else
  % then do big marginpar margins and insert comments
  \usepackage[letterpaper,margin=.5in,right=2.5in,marginpar=2in,includefoot]{geometry}
  \newcommand{\XXXcomment}[1]{\marginpar{\color{blue}{\footnotesize #1}}}
\fi

%%
%% packages
%%

\usepackage{amssymb,amsthm,latexsym,amsmath,mathrsfs,stmaryrd,graphicx,fancyhdr,paralist,mathtools,paralist}
\allowdisplaybreaks

\usepackage[pdftex,
            bookmarks=true,
            bookmarksnumbered=true,
            pdfborder={0 0 0},
            plainpages=false,
            pdfpagelabels]{hyperref}

\usepackage{float}
\floatstyle{boxed}
\restylefloat{figure}
\restylefloat{table}

\usepackage{caption,float}
\captionsetup{font=footnotesize,
              labelfont=bf,
              labelsep=period}

%%
%% tikz stuff, including custom environment
%%

\usepackage{tikz}
\usetikzlibrary{trees,arrows,automata,shapes,decorations,decorations.pathmorphing}

\newcommand{\mystretch}{\renewcommand{\arraystretch}{1.5}}
\newcommand{\normalstretch}{\renewcommand{\arraystretch}{1}}

 % for latexdiff

%%
%% bussproofs
%%

\usepackage{bussproofs}

\EnableBpAbbreviations

 % above+below tree

                % line overhang distance
                   % space above+below line
                 % line-to-label space
 % horizontal spread

\newcommand{\RLs}[1]{\RightLabel{\small #1}}

%%
%% quotation-like proof environment
%%

\usepackage{changepage}

\newenvironment{pquote}
{\begin{adjustwidth}{.03\textwidth}{}\setlength{\parindent}{0em}\setlength{\parskip}{.7em}\vspace{-.5em}}
  {\end{adjustwidth}}
%%
%% other commands, settings
%%

    % nonmon consequence |~
                 % negation of |~

   % nonmon truth
 % not nonmon truth

             % double-arrow iff
                 % double-arrow implies
           % single-arrow iff
\newcommand{\imp}{\to}                         % single-arrow implies

                        % !
                       % ?
                   % upgrade
                   % lexicographic upgrade
\newcommand{\down}{{\downarrow}}               % 

\newcommand{\col}{\,{:}\,}                     % t:F
\newcommand{\colu}[1]{\,{:}^{#1}}              % t:^{#1}F
\newcommand{\pcolu}[1]{{:}^{#1}}               % (s*t):^{#1}F

                % t{\hat:}F
       % t{\hat:}^{#1}F
        % (s*t){\hat:}^{#1}F

                % conj. of admissible flmas

\newcommand{\sem}[1]{\llbracket{#1}\rrbracket} % [[ #1 ]]

\newcommand{\Cn}{\mathsf{Cn}}

\newcommand{\sub}{\mathsf{sub}}                % sub
                  % sub
              % rsub
            % rsubF
            % rsubT

                   % >>
              % >>^{#1}
              % \not{>>}
   % \not{>>}^{#1}

                   % >>_0
            % >>_0^{#1}
              % \not{>>_0}
   % \not{>>_0}^{#1}

                % precondition
                  % variable elimination

\newcommand{\Nat}{\mathbb{N}}                  % naturals
\newcommand{\Prop}{\mathscr{P}}                % propositional letters
                % constants
                % variables
                % agents
                % grammar
\newcommand{\Term}{\mathscr{T}}                % terms
                % formulas
                % set of updates
\newcommand{\Lang}{\mathscr{L}}                % language

\newcommand{\calS}{\mathcal{S}}

      % logic of preferential proofs
      % logic of cumulative proofs
    % Gentzen LNP
        % LP
      % Gentzen LP

        % cumulative consequence \mathsf{C}
        % preferential consequence \mathsf{P}

         % classical logic
         % propositional logic
       % first-order logic
         % Peano Arithmetic (PA)
       % ZFC
\newcommand{\CPL}{{\mathsf{CPL}}}       % CPL
\newcommand{\CDL}{{\mathsf{CDL}}}       % CDL
       % CDL
    % CDL_c
    % CDL^-
 % CDL_c^-
\newcommand{\JCDL}{{\mathsf{JCDL}}}     % JCDL
  % JCDL^-
  % JCDL_c

    % P
    % R
\newcommand{\BRSIC}{{\mathsf{BRSIC}}} % BRSIC

\newcommand{\cc}{{\mathsf{cc}}}

 % {{ #1 }}

\theoremstyle{definition} 
\newtheorem{definition}{Definition}[section]
\newtheorem{theorem}[definition]{Theorem}

\newtheorem{lemma}[definition]{Lemma}

\newtheorem{remark}[definition]{Remark}

%%
%% author, title, &c
%%

%\newcommand{\ourtitle}{Nonmonotonic Reasons and Revisable Justified Belief}

\newcommand{\ourtitle}{Revisable Justified Belief: Preliminary Report}

\newcommand{\alexandru}{Alexandru Baltag}

\newcommand{\bryan}{Bryan Renne}

\newcommand{\bryanFunding}{Funded by an Innovational Research
  Incentives Scheme Veni grant from the Netherlands Organisation for
  Scientific Research (NWO) hosted by the Institute for Logic,
  Language, Information and Computation (ILLC) at the University of
  Amsterdam.}

\newcommand{\sonja}{Sonja Smets}

\newcommand{\sonjaFunding}{Funded in part by an Innovational Research
  Incentives Scheme Vidi grant from the Netherlands Organisation for
  Scientific Research (NWO) and by the European Research Council under
  the European Community's Seventh Framework Programme (FP7/2007-2013)
  / ERC Grant agreement no.~283963. Both grants were hosted by the
  Institute for Logic, Language, Information and Computation (ILLC) at
  the University of Amsterdam.}

\hypersetup{ 
  pdfauthor={\alexandru{}, \bryan{}, \sonja{}}, 
  pdftitle={\ourtitle{}}
}

\title{\ourtitle{}}

\author{\alexandru{} \and \bryan{}\thanks{\bryanFunding{}} \and
  \sonja{}\thanks{\sonjaFunding{}}}

\date{}

\pagestyle{fancy}

\fancyhf{}
\fancyfoot[c]{\thepage}
%\fancyfoot[r]{Draft (Revised \today)}

%%
%% begin of document
%%
\begin{document}
\maketitle
\thispagestyle{fancy}

\begin{abstract}
  The theory $\CDL$ of \emph{Conditional Doxastic Logic} is the
  single-agent version of Board's multi-agent theory $\BRSIC$ of
  conditional belief. $\CDL$ may be viewed as a version of AGM belief
  revision theory in which Boolean combinations of revisions are
  expressible in the language. We introduce a theory $\JCDL$ of
  \emph{Justified Conditional Doxastic Logic} that replaces
  conditional belief formulas $B^\psi\varphi$ by expressions
  $t\colu\psi\varphi$ made up of a term $t$ whose syntactic structure
  suggests a derivation of the belief $\varphi$ after revision by
  $\psi$. This allows us to think of terms $t$ as reasons justifying a
  belief in various formulas after a revision takes place.  We show
  that $\JCDL$-theorems are the exact analogs of $\CDL$-theorems, and
  that this result holds the other way around as well. This allows us
  to think of $\JCDL$ as a theory of revisable justified belief.
\end{abstract}

\section{Introduction} 

\emph{Conditional Doxastic Logic} is Baltag and Smets'
\cite{BalSme08:LOFT} name for a single-agent version of Board's
\cite{Boa04:GEB} multi-agent theory of conditional belief $\BRSIC$.
$\CDL$ has formulas $B^\psi\varphi$ to express that the agent believes
$\varphi$ conditional on $\psi$.  As we will see, $\CDL$ has a certain
relationship with the ``AGM theory'' of belief revision due to due to
Alchourr\'on, G\"ardenfors, and Makinson \cite{AGM}. So we may also
think of $B^\psi\varphi$ as say that the agent will believe $\varphi$
after revising her belief state by successfully incorporating the
information that $\psi$ is true. As with AGM theory, $\CDL$ assumes
that conditionalization (i.e., revision) is always successful: the
agent is to assume that the incoming information $\psi$ is completely
trustworthy and therefore update her belief state by consistently
incorporating this incoming information. If before the revision she
holds beliefs that imply $\lnot\psi$, then she must give up these
beliefs so that she will come to believe $\psi$ after the revision
takes place. The question then is how to do this in general.

The models of $\CDL$ are ``plausibility models.''  These are very
close to Grove's ``system of spheres'' for AGM theory
\cite{Gro88:JPL}. As we will see, we can view the
conditionalization/revision process either from the semantic
perspective, as a definite operation on plausibility models, or from
the syntactic perspective, as an axiomatic formulation analogous to
the postulate-based approach of AGM.  However, it is perhaps simplest
to start with the semantic perspective.

% XXX

\section{Plausibility models}

\begin{definition}[Plausibility models]
  \label{definition:plausibility-model}
  Let $\Prop$ be a set of propositional letters.  A \emph{plausibility
    model} is a structure $M={(W,\leq,V)}$ consisting of a nonempty
  set $W$ of ``worlds,'' a preorder (i.e., a reflexive and transitive
  binary relation) $\leq$ on $W$, and a propositional valuation
  $V:W\to\wp(\Prop)$ mapping each world $w$ to the set $V(w)$ of
  propositional letters true at $w$.  We call $\leq$ a
  \emph{plausibility relation} on $W$.  In terms of $\leq$, we define
  the converse relation $\geq$, the strict version $<$, the strict
  converse relation $>$, and the various negations of these (denoted
  by writing a slash through the symbol to be negated) as usual.
  $\simeq$ denotes the relation ${\simeq}\coloneqq({\geq}\cap{\leq})$
  of \emph{equi-plausibility}.  A \emph{pointed plausibility model} is
  a pair $(M,w)$ consisting of a plausibility model $M$ and the
  \emph{point} $w$, itself a world in $M$.  Notation: for each
  $w\in W$, we define the set
  \[
  w^\down\coloneqq\{x\in W\mid x\leq w\}\enspace.
  \]
\end{definition}

$x\leq y$ read, ``$x$ is no less plausible than $y$.''  According to
this reading, if we think of $\leq$ as a ``less than or equal to''
relation, then it is the ``lesser'' elements that are \emph{more}
plausible.  Therefore, if $\leq$ is a well-order and $S$ is a nonempty
set of worlds, $\min S$ is the set of worlds that are the most
plausble in $S$.  While it may at first seem counterintuitive to the
uninitiated, this convention of ``lesser is more plausible'' is
nevertheless standard in Belief Revision Theory.\footnote{This
  convention stems from the notion of ``Grove spheres''
  \cite{Gro88:JPL}: given a well-order, worlds are arranged so that
  they sit on the surface of a number of concentric spheres.  Worlds
  of strictly greater plausibility are assigned to spheres with
  strictly shorter radii, and equi-plausible worlds are assigned to
  the same sphere.  In this way, the most plausible worlds sit on the
  surface of the innermost sphere, which has the \emph{minimum}
  radius.  Similarly, if we restrict attention to a nonempty set $S$
  of worlds, then we ``recenter'' the sphere around $S$. By this we
  mean that we create a new system of spheres consisting of just those
  worlds in $S$.  After doing so, the most plausible worlds again sit
  on the surface of the innermost sphere, which again has the minimum
  radius.}

We think of the plausibility relation as describing the judgments of
an unnamed agent: for each pair of worlds $(x,y)$, she either judges
one world to be more plausible than the other or the two to be of
equal plausibility.  Plausibility models for multiple agents have a
number of plausibility relations, one for each agent.  For present
purposes, we restrict attention to the single-agent case, though we
say more about the multi-agent situation later.

\begin{definition}[Plausibility model terminology]
  \label{definition:smooth}
  Let $M=(W,\leq,V)$ be a plausibility model.
  \begin{itemize}
  \item To say $M$ is \emph{finite} means $W$ is finite.

  \item To say $M$ is \emph{connected} means that for each $x\in W$,
    we have $\cc(x)=W$, where  
    \[
    \cc(x)\coloneqq \{y\in x\mid x({\geq}\cup{\leq})^+y\}\enspace
    \]
    is the \emph{connected component of $x$} and
    $({\geq}\cup{\leq})^+$ is the transitive closure of
    ${\geq}\cup{\leq}$.\footnote{The \emph{transitive closure} of a
      binary relation $R$ is the smallest extension $R^+\supseteq R$
      satisfying the property that $xR^+y$ and $yR^+z$ together imply
      $xR^+z$.} A \emph{connected component} is a subset
    $S\subseteq W$ for which there exists an $x\in W$ such that
    $\cc(x)=S$.

  \item To say $M$ is \emph{well-founded} means $\leq$ is
    well-founded: for each nonempty $S\subseteq W$, the set
    \[
    \min S\coloneqq\{x\in S\mid\forall y\in S:y\not<x\}
    \]
    of \emph{minimal elements of $S$} is nonempty.

  \item To say that a set $S\subseteq W$ of worlds is \emph{smooth} in
    $M$ means that for each world $x\in S$, either $x\in\min(S)$ or
    there exists $y\in\min(S)$ such that $y<x$. Given a collection
    $\Gamma\subseteq\wp(W)$ of sets of worlds, to say that $M$ is
    \emph{smooth with respect to} $\Gamma$ means that every
    $S\in\Gamma$ is smooth in $M$.  To say that $M$ is \emph{smooth}
    means that $M$ is smooth with respect to $\wp(W)$.

  \item To say that $M$ is \emph{total} means that $\leq$ is total on
    $W$: for each $(x,y)\in W\times W$, we have $x\leq y$ or
    $y\leq x$.

  \item To say $M$ is \emph{well-ordered} (equivalently, that $M$ is a
    \emph{well-order}) means that $\leq$ is well-ordered (i.e., it is
    total and well-founded).

  \item To say that $M$ is \emph{locally total} means that $\leq$ is
    \emph{total on each connected component\/}: for each $w\in W$ and
    $(x,y)\in\cc(w)\times\cc(w)$, we have $x\leq y$ or $y\leq x$.

  \item To say $M$ is \emph{locally well-ordered} (equivalently, that
    $M$ is a \emph{local well-order}) means that $\leq$ is locally
    well-ordered (i.e., it is well-founded and total on each connected
    component).
  \end{itemize}
  To say that a pointed plausibility model $(M,w)$ satisfies
  one of the model-applicable adjectives above means that
  $M$ itself satisfies the adjective in question.
\end{definition}

\begin{theorem}[Relationships between terminology]
  \label{theorem:wf-smooth}
  Let $M=(W,\leq,V)$ be a plausibility model and $S\subseteq W$.
  \begin{enumerate}
  \item \label{i:finite} If $M$ is finite, then $M$ is well-founded.

  \item \label{i:lwo} If $M$ is locally well-ordered, and $S$ is a
    connected component, then
    \[
    \min S=\{x\in S\mid\forall y\in S:x\leq y\}\enspace.
    \]

  \item \label{i:wo} If $M$ is well-ordered, then
    $\min S=\{x\in S\mid\forall y\in S:x\leq y\}$.

  \item \label{i:wf-smooth} $M$ is well-founded iff $M$ is smooth.

  \item \label{i:wo-smooth} $M$ is well-ordered iff $M$ is smooth and
    total.

  \item \label{i:lwo-smooth} $M$ is locally well-ordered iff $M$ is
    smooth and locally total.
  \end{enumerate}
\end{theorem}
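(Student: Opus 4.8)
The plan is to record two elementary facts about the strict relation $<$ up front, then treat the parts in the order $1$, $2$--$3$, $4$, $5$--$6$, since the last two will follow from part 4. Because $\leq$ is a preorder, its strict version $<$ (with $x<y$ meaning $x\leq y$ and $y\not\leq x$) is irreflexive and transitive: for transitivity, $x<y$ and $y<z$ give $x\leq z$ by transitivity of $\leq$, while $z\leq x$ would combine with $x\leq y$ to give $z\leq y$, contradicting $z\not\leq y$. I would also record the inclusion $\{x\in S\mid\forall y\in S:x\leq y\}\subseteq\min S$, valid with no further hypotheses, since $x\leq y$ for all $y\in S$ rules out any $y<x$. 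For part 1, I would then observe that $<$ restricted to a nonempty finite $S$ is a strict partial order, so it has a minimal element: were $\min S$ empty, each $x\in S$ would admit some $y\in S$ with $y<x$, yielding an infinite strictly descending chain whose members are pairwise distinct by irreflexivity and transitivity of $<$, contradicting finiteness.

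For parts 2 and 3, the key point is that each hypothesis supplies totality of $\leq$ on the relevant $S$: local totality does so when $S$ is a connected component (part 2), and global totality does so for arbitrary $S$ (part 3). Granting totality on $S$, I would establish the reverse inclusion $\min S\subseteq\{x\in S\mid\forall y\in S:x\leq y\}$: for $x\in\min S$ and $y\in S$, totality gives $x\leq y$ or $y\leq x$, and in the latter case $y\not<x$ together with $y\leq x$ forces $x\leq y$. With the inclusion from the first paragraph, this yields the stated equality.

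Part 4 is the crux. Smoothness $\Rightarrow$ well-foundedness is immediate: for nonempty $S$, any $x\in S$ is, by smoothness, either in $\min S$ or strictly above some element of $\min S$, so $\min S\neq\emptyset$. For the converse, assuming well-foundedness, I fix $S$ and $x\in S$ with $x\notin\min S$ and seek $y\in\min S$ with $y<x$. The idea is to apply well-foundedness to the strict down-set $D\coloneqq\{z\in S\mid z<x\}$, which is nonempty exactly because $x\notin\min S$; any $y\in\min D$ then satisfies $y<x$, and $y\in\min S$ because a witness $w\in S$ with $w<y$ would give $w<x$ by transitivity, hence $w\in D$, contradicting minimality of $y$ in $D$. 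This transitivity step, where the preorder-based $<$ must be handled carefully, is the main obstacle; the rest is bookkeeping.

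Finally, parts 5 and 6 are immediate: by definition $M$ is well-ordered iff total and well-founded, and locally well-ordered iff locally total and well-founded, so replacing ``well-founded'' by ``smooth'' via part 4 delivers both equivalences at once.
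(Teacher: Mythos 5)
Your proof is correct, and its overall skeleton matches the paper's: parts 5--6 are reduced to part 4 via the definitions, parts 2--3 rest on totality of $\leq$ on the relevant $S$, and the hard direction of part 4 is proved by applying well-foundedness to a set localized at $x$ and transferring minimality back to $S$. Two local differences are worth noting. First, for parts 2--3 you run a single argument from the hypothesis ``$\leq$ is total on $S$,'' whereas the paper proves part 2 for connected components and then derives part 3 by observing that a well-order is locally well-ordered with $W$ a connected component; strictly read, that reduction only covers $S=W$ (in a total order the only connected component is $W$ itself), while part 3 concerns arbitrary $S\subseteq W$, so your formulation is the more careful one --- though the paper's part-2 argument does go through verbatim for any $S$ on which $\leq$ is total, which is evidently what was intended. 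Second, in part 4 you localize at the strict down-set $D=\{z\in S\mid z<x\}$ after splitting on whether $x\in\min S$, whereas the paper localizes at the weak down-set $x^\down\cap S$. Your choice buys a shorter argument: minimality of $y$ in $D$ transfers to $S$ by transitivity of $<$ alone, which you correctly verify for the preorder-induced strict relation (where $x<y$ means $x\leq y$ and $y\not\leq x$), while the paper must additionally case-split on whether its minimal element is equi-plausible with $x$ (in which case $x$ itself is shown minimal) or strictly below it. Both routes are sound; you also supply an explicit descending-chain argument for part 1, which the paper dismisses as obvious.
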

\begin{proof}
  See the appendix.
\end{proof}

From now on, in this paper we will restrict ourselves to locally
well-ordered plausibility models, unless otherwise specified.

\section{Conditional Doxastic Logic}

\subsection{Language and semantics}

\begin{definition}[$\Lang_\CDL$]
  \label{definition:lang-CDL}
  Let $\Prop$ be a fixed set of propositional letters.  The language
  of \emph{Conditional Doxastic Logic} consists of the set of
  $\Lang_\CDL$ formulas $\varphi$ formed by the following grammar:
  \[
  \varphi \Coloneqq \bot \mid p \mid (\varphi\to\varphi) \mid B^\varphi\varphi
  \qquad\text{\small $p\in\Prop$}
  \]
  The logical constant $\top$ (for truth) and the various familiar
  Boolean connectives are defined by the usual abbreviations.  Other
  important abbreviations: $B(\varphi|\psi)$ denotes $B^\psi\varphi$,
  and $B\varphi$ denotes $B^\top\varphi$.
\end{definition}

The formula $B^\psi\varphi$ is read, ``Conditional on $\psi$, the
agent believes $\varphi$.''  Intuitively, this means that each of the
most plausible $\psi$-worlds satisfies $\varphi$.  The forthcoming
semantics will clarify this further.  The basic idea is that a belief
conditional on $\psi$ is a belief the agent would hold were she to
minimally revise her beliefs so that she comes to believe $\psi$.

\begin{definition}[$\Lang_\CDL$-truth]
  \label{definition:CDL-truth} 
  Let $M=(W,\leq,V)$ be a locally well-ordered plausibility model.  We
  define a binary satisfaction relation $\models$ between locally
  well-ordered pointed plausibility models $(M,w)$ (written without
  surrounding parentheses) and $\Lang_\CDL$-formulas and we define a
  function $\sem{-}:\Lang_\CDL\to\wp(W)$ as follows.
  \begin{itemize}
  \item $\sem{\varphi}_M\coloneqq\{v\in W\mid
    M,v\models\varphi\}$. The subscript $M$ may be suppressed.

  \item $M,w\not\models\bot$.
    
  \item $M,w\models p$ iff $p\in V(w)$ for $p\in\Prop$.

  \item $M,w\models\varphi\imp\psi$ iff $M,w\not\models\varphi$ or
    $M,w\models\psi$.

  \item $M,w\models B^\psi\varphi$ iff for all $x\in\cc(w)$, we have
    \[
    x^\down\cap\sem{\psi}=\emptyset
    \quad\text{or}\quad
    \exists y\in x^\down\cap\sem{\psi}:
    y^\down\cap\sem{\psi}\subseteq\sem{\varphi}
    \enspace.
    \]

    $B^\psi\varphi$ holds at $w$ iff for every world connected to $w$
    that has an equally or more plausible $\psi$-world $y$, the
    $\psi$-worlds that are equally or more plausible than $y$ satisfy
    $\varphi$.
  \end{itemize} 
  We extend the above so that we may have sets $S\subseteq\Lang_\CDL$
  of formulas on the right-hand side:
  \begin{center}
    $M,w\models S$ \quad means\quad $M,w\models\varphi$ for each
    $\varphi\in S$\enspace.
  \end{center}
  Also, we will have occasion to use the following notion of
  \emph{local consequence\/}: given a set
  $S\cup\{\varphi\}\subseteq\Lang_\CDL$ of formulas and writing
  $\mathfrak{P}_*$ to denote the class of pointed plausibility models,
  \begin{center}
    $S\models_\ell\varphi$ \quad means\quad
    $\forall(M,w)\in\mathfrak{P}_*$ : $M,w\models S$ implies
    $M,w\models\varphi$ \enspace.
  \end{center}
  Finally, we write $M\models\varphi$ to mean that $M,v\models\varphi$
  for each world $v$ in $M$ (``$\varphi$ is valid within $M$'').
\end{definition}

\begin{remark}[Knowledge]
  \label{remark:knowledge}
  Baltag and Smets \cite{BalSme08:LOFT} read the abbreviation
  \[
  K\varphi \quad\coloneqq\quad B^{\lnot\varphi}\bot
  \]
  as ``the agent knows $\varphi$.''  This notion of ``knowledge'' is
  based on the rejection of a proposed belief revision.  In
  particular, $K\varphi=B^{\lnot\varphi}\bot$ says that the most
  plausible $\lnot\varphi$-worlds are $\bot$-worlds.  The
  propositional constant $\bot$ for falsehood is true nowhere, so this
  amounts to us saying that the agent does not consider any
  $\lnot\varphi$-worlds possible. Hence all the worlds she considers
  possible are $\varphi$-worlds.  It is in this sense that we say she
  ``knows'' that $\varphi$ is true: she will not revise her beliefs by
  $\lnot\varphi$ (on pain of contradiction).  It is easy to see that
  the semantics ensures that $K$ so-defined is an $\mathsf{S5}$ modal
  operator: knowledge is closed under classical implication, what is
  known is true, it is known what is known, it is known what is not
  known, and all validities are known.
\end{remark}

In well-founded plausibility models, belief in $\varphi$ conditional
on $\psi$ is equivalent to having $\varphi$ true at the most plausible
$\psi$-worlds that are within the connected component of the actual
world. And if the models are well-ordered, then we can omit mention of
the connected component.

\begin{theorem}[$\Lang_\CDL$-truth in well-founded models]
  \label{theorem:CDL-wf}\label{theorem:CDL-wo}
  Let $M=(W,\leq,V)$ be a plausibility model.
  \begin{enumerate}[\;\;(a)]
  \item \label{i:CDL-wf} If $M$ is well-founded:
    $M,w\models B^\psi\varphi$ $\;\;\Leftrightarrow\;\;$
    $\min\sem{\psi}\cap\cc(w)\subseteq\sem{\varphi}$.

  \item \label{i:CDL-wo} If $M$ is well-ordered:
    $M,w\models B^\psi\varphi$ $\;\;\Leftrightarrow\;\;$
    $\min\sem{\psi}\subseteq\sem{\varphi}$.
  \end{enumerate}
\end{theorem}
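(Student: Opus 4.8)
The plan is to prove part (a) directly from the semantic clause for $B^\psi\varphi$ in Definition~\ref{definition:CDL-truth}, using only well-foundedness together with the reflexivity and transitivity of $\le$, and then to deduce part (b) from it by observing that totality collapses the connected-component structure.

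First I would isolate a small lemma about minimal $\psi$-worlds: if $v\in\min\sem{\psi}$ and $y\in\sem{\psi}$ with $y\le v$, then in fact $v\le y$, so $y\simeq v$. This is just unwinding the definition of $\min$ through the strict order: were $v\not\le y$, we would have $y<v$, contradicting the minimality of $v$. An immediate consequence is that whenever $v\in\min\sem{\psi}$, the whole set $v^\down\cap\sem{\psi}$ consists of worlds equi-plausible to $v$ and is contained in $\min\sem{\psi}$ (a short transitivity check rules out any $\psi$-world strictly below such a world). I would also record the purely relational fact that $y\le x$ and $x\in\cc(w)$ force $y\in\cc(w)$: since $({\geq}\cup{\leq})^+$ is an equivalence relation and $x\ge y$ links $x$ to $y$, membership in a connected component is inherited downward along $\le$.

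For the forward direction of part (a), assume $M,w\models B^\psi\varphi$ and take $v\in\min\sem{\psi}\cap\cc(w)$. Instantiating the semantic clause at $x:=v\in\cc(w)$ and using $v\in v^\down\cap\sem{\psi}$ (by reflexivity) to rule out the empty-intersection disjunct, I obtain some $y\in v^\down\cap\sem{\psi}$ with $y^\down\cap\sem{\psi}\subseteq\sem{\varphi}$. The lemma gives $v\simeq y$, so $v\in y^\down\cap\sem{\psi}\subseteq\sem{\varphi}$, i.e.\ $v\models\varphi$; note this direction never uses well-foundedness. For the converse, assume $\min\sem{\psi}\cap\cc(w)\subseteq\sem{\varphi}$ and fix $x\in\cc(w)$. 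If $x^\down\cap\sem{\psi}=\emptyset$ the first disjunct holds, so suppose it is nonempty. Here is where well-foundedness enters: it guarantees $\min(x^\down\cap\sem{\psi})$ is nonempty, so pick $y$ in it. A transitivity argument shows $y$ is in fact minimal in all of $\sem{\psi}$ (any $\psi$-world strictly below $y$ would also lie in $x^\down\cap\sem{\psi}$), and by the relational fact $y\in\cc(w)$. The lemma then yields $y^\down\cap\sem{\psi}\subseteq\min\sem{\psi}\cap\cc(w)\subseteq\sem{\varphi}$, which is exactly the witness demanded by the second disjunct. As $x\in\cc(w)$ was arbitrary, $M,w\models B^\psi\varphi$.

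Finally, for part (b): if $M$ is well-ordered then $\le$ is total, so any two worlds are related by ${\geq}\cup{\leq}$ and hence $\cc(w)=W$ for every $w$. Part (a) then reads $M,w\models B^\psi\varphi\iff\min\sem{\psi}\cap W\subseteq\sem{\varphi}\iff\min\sem{\psi}\subseteq\sem{\varphi}$, which is the claim. I expect the main obstacle to be the converse of part (a): one must check that the \emph{locally} chosen minimal element of $x^\down\cap\sem{\psi}$ is genuinely a global minimum of $\sem{\psi}$ and lies in $\cc(w)$, and—because $\le$ is only a preorder—that ``minimal'' is handled throughout via equi-plausibility ($\simeq$) rather than equality. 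This is also the only point at which well-foundedness is actually needed.
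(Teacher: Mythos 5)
Your proposal is correct and follows essentially the same route as the paper's proof: the forward direction instantiates the semantic clause at a minimal $\psi$-world and uses reflexivity plus minimality to flip $y\leq v$ into equi-plausibility, the converse uses well-foundedness to pick a minimal element of $x^\down\cap\sem{\psi}$ and a transitivity argument to promote it (and everything below it) to global minimality within $\cc(w)$, and part (b) follows from (a) since totality gives $\cc(w)=W$. The only difference is organizational—you package the minimality-flip and its consequence as a standalone lemma where the paper inlines those computations—and your observation that well-foundedness is needed only in the converse direction matches the structure of the paper's argument.
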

\begin{proof}
  See the appendix for the proof of \eqref{i:CDL-wf}.  For
  \eqref{i:CDL-wo}, if $M$ is well-ordered, then $\leq$ is total and
  therefore $\cc(w)=W$ for each $w\in W$.  Apply \eqref{i:CDL-wf}.
\end{proof}

The intended models for $\CDL$ are the well-ordered (and hence
well-founded) plausibility models.

\begin{definition}[$\Lang_\CDL$-validity]
  \label{definition:CDL-validity}
  To say that a $\Lang_\CDL$-formula $\varphi$ is \emph{valid},
  written $\models\varphi$, means that we have $M\models\varphi$
  for each well-ordered plausibility model $M$.
\end{definition}

As per Theorem~\ref{theorem:CDL-wo}, restricting validity to the
well-orders allows us to read $B^\psi\varphi$ as follows: ``the most
plausible $\psi$-worlds satisfy $\varphi$.'' While the intended models
for $\CDL$ are well-ordered, and validity is defined accordingly (as
per Definition~\ref{definition:CDL-validity}), the following theorem
shows that locally well-ordered plausibility models would suffice.

\begin{theorem}[$\Lang_\CDL$-validity with respect to local
  well-orders]
  \label{theorem:CDL-lwo}
  Let $\mathfrak{P}_L$ be the class of locally
  well-ordered plausibility models.  For each $\varphi\in\Lang_\CDL$,
  we have:
  \[
  \models\varphi
  \qquad\text{iff}\qquad
  \forall M\in\mathfrak{P}_L,\;
  M\models\varphi\enspace.
  \]
\end{theorem}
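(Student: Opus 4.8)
The plan is to prove both directions, with the forward (``only if'') direction being the substantive one. For the converse, observe that every well-ordered model is in particular locally well-ordered: totality of $\leq$ on all of $W$ trivially gives totality on each connected component, and well-foundedness is inherited unchanged, so by Definition~\ref{definition:smooth} a well-order is a local well-order. Hence if $\varphi$ is valid on every $M\in\mathfrak{P}_L$, it is in particular valid on the subclass of well-orders, which is exactly the statement $\models\varphi$.

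For the forward direction I would argue contrapositively. Suppose $\varphi$ fails somewhere in a locally well-ordered model, say $M,w\not\models\varphi$ for some locally well-ordered $M=(W,\leq,V)$ and some $w\in W$. The key construction is to pass to the connected component of $w$: let $W'\coloneqq\cc(w)$ and let $M'=(W',\leq',V')$ be the restriction of $M$ to $W'$, with $\leq'$ and $V'$ the restrictions of $\leq$ and $V$. Because $({\geq}\cup{\leq})^+$ is reflexive, symmetric, and transitive, it is an equivalence relation and $W'$ is its equivalence class containing $w$; in particular $w\in W'$, so $W'$ is nonempty, and $M'$ is itself connected with $\cc_{M'}(v)=W'$ for every $v\in W'$. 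Local totality of $M$ makes $\leq'$ total on $W'$, and well-foundedness is preserved under restriction (since $<'$ is just $<$ restricted to $W'$), so $M'$ is a well-order.

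The heart of the argument is a locality lemma: for every $v\in W'$ and every $\chi\in\Lang_\CDL$ we have $M,v\models\chi$ iff $M',v\models\chi$, proved by induction on $\chi$. The propositional and Boolean cases are immediate since $V'$ agrees with $V$ on $W'$. The only real work is the clause for $B^\psi\chi$. Here I would first record that for any $x\in W'$ the downset $x^\down$ computed in $M$ is already contained in $W'$: if $z\leq x$ then $z$ lies in the same component as $x$, namely $W'$, so $x^\down$ is computed identically in $M$ and in $M'$. Combining this with the induction hypothesis, which yields $\sem{\psi}_{M'}=\sem{\psi}_M\cap W'$ and $\sem{\chi}_{M'}=\sem{\chi}_M\cap W'$, and with the equality $\cc_M(v)=\cc_{M'}(v)=W'$, each disjunct of the truth condition for $B^\psi\chi$ at $v$ (the emptiness clause and the witness clause) evaluates the same way in the two models. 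Thus the modal clause transfers, completing the induction.

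Applying the lemma with $\chi=\varphi$ at $v=w$ gives $M',w\not\models\varphi$, so $\varphi$ is refuted on the well-order $M'$, witnessing $\not\models\varphi$; this is the desired contrapositive. The main obstacle I anticipate is getting the modal clause of the locality lemma exactly right, namely verifying that every set entering the truth condition — the component $\cc(v)$, the downsets $x^\down$ for $x\in\cc(v)$, and the extensions of the immediate subformulas — is confined to $W'$, so that restricting the model alters none of them. Once these containments are established, the equivalence of the two truth conditions is routine bookkeeping.
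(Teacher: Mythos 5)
Your proposal is correct and follows essentially the same route as the paper's proof: both restrict the model to the connected component $\cc(w)$ of the evaluation point, observe that local well-orderedness makes this restriction a well-order, and transfer truth between $M$ and the restriction $M'$ via an induction on formulas (your ``locality lemma,'' which the paper leaves as a ``straightforward induction,'' including the key observation that downsets and components of points in $\cc(w)$ never leave $\cc(w)$). The only difference is cosmetic: you phrase the substantive direction contrapositively (a falsifying locally well-ordered model yields a falsifying well-order), while the paper argues it directly.
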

\begin{proof}
  Right to left (``if''): obvious. Left to right (``only if''): assume
  $\models\varphi$ and take $M=(W,\leq,V)\in\mathfrak{P}_L$ and a
  world $w\in W$. Let $M'$ be the sub-model of $M=(W',\leq',V')$
  obtained by restricting to $\cc(w)$:
  \[
  W'=\cc(w),\quad
  \leq'=\leq\cap(W'\times W'),\quad
  V'(v)=V(v) \text{ for } v\in W'.
  \]
  Since $M\in\mathfrak{P}_L$, it follows that $M'$ is well-ordered and
  therefore, since $\models\varphi$, we have
  $M',w\models\varphi$.  It follows by a straightforward induction on
  the construction of $\Lang_\CDL$-formulas $\theta$ that
  $M',w\models\theta$ iff $M,w\models\theta$. Hence
  $M,w\models\varphi$.  Since $w\in W$ and $M\in\mathfrak{P}_L$ were
  chosen arbitrarily, we conclude that $M\models\varphi$ for each
  $M\in\mathfrak{P}_L$.
\end{proof}

\subsection{Board's theory \texorpdfstring{$\BRSIC$}{BRSIC} and
  \texorpdfstring{$\CDL_0$}{CDL0}}

The Hilbert theory of Conditional Doxastic Logic was first studied by
Board \cite{Boa04:GEB} under the name $\BRSIC$.  Baltag and Smets
\cite{BalSme08:LOFT} subsequently developed various alternative
axiomatizations and extensions and introduced the name
\emph{Conditional Doxastic Logic}.  The single-agent version of
Board's theory $\BRSIC$ is equivalent to what we call $\CDL_0$.

\begin{definition}[$\CDL_0$ theory]
  $\CDL_0$ is defined in Table~\ref{table:CDL0}.
\end{definition}

\begin{table}
  \begin{center}
    \textsc{Axiom Schemes}
    \\[.3em]
    \renewcommand{\arraystretch}{1.3}
    \begin{tabular}{rl}
      (CL) & 
      Schemes for Classical Propositional Logic
      \\
      (K) &
      $B^\psi(\varphi_1\imp\varphi_2)\to
      (B^\psi\varphi_1\to B^\psi\varphi_2)$
      \\
      (Succ) &
      $B^\psi\psi$
      \\
      (IEa) &
      $B^\psi\varphi\to
      (B^{\psi\land\varphi}\chi\leftrightarrow
      B^\psi\chi)$
      \\
      (IEb) &
      $\lnot B^\psi\lnot\varphi\to
      (B^{\psi\land\varphi}\chi\leftrightarrow
      B^\psi(\varphi\to\chi))$
      \\
      (PI) &
      $B^\psi\chi\to B^\varphi B^\psi\chi$
      \\
      (NI) &
      $\lnot B^\psi\chi\to B^\varphi\lnot B^\psi\chi$
      \\
      (WCon) & 
      $B^\psi\bot\to\lnot\psi$
    \end{tabular}
    \\[1em]
    \textsc{Rules}
    \\[.5em]
    \AXC{$\varphi\to\psi$}
    \AXC{$\varphi$}
    \RLs{(MP)}
    \BIC{$\psi$}
    \DP
    \qquad
    \AXC{$\varphi$}
    \RLs{(MN)}
    \UIC{$B^\psi\varphi$}
    \DP
    \qquad
    \AXC{$\psi\leftrightarrow\psi'$}
    \RLs{(LE)}
    \UIC{$B^\psi\varphi\leftrightarrow B^{\psi'}\varphi$}
    \DP
  \end{center}
  \caption{The theory $\CDL_0$, a single-agent variant of Board's
    theory $\BRSIC$ \cite{Boa04:GEB}}
  \label{table:CDL0}
\end{table}

$\CDL_0$ is actually a simplification of $\BRSIC$.  In particular,
$\BRSIC$ is a multi-agent theory for a nonempty set $A$ of agents
using a language similar to $\Lang_\CDL$ except that it has as
primitives both conditional belief $B^\psi_a\varphi$ for each agent
$a\in A$ and unconditional belief $B_a\varphi$ for each agent
$a\in A$.  Since the $\BRSIC$ axiom
$B_a\varphi\leftrightarrow B^\top_a\varphi$ (``Triv'') requires that
unconditional belief be equivalent to conditional belief based on a
tautological conditional, we have decided upon a streamlined language
that contains conditional belief only.  This allowed us to define away
Board's axiom Triv in the following way (in
Definition~\ref{definition:lang-CDL}): let $B_a\varphi$ abbreviate
$B_a^\top\varphi$.  We have also renamed some of Board's axioms and
rules: his Taut is now called (CL), his Dist is now called (K), his
IE(a) is now called (IEa), his IE(b) is now called (IEb), his TPI is
now called (PI), his NPI is now called (NI), his RE is now called
(MN), and all of his other axiom names have been enclosed in
parenthesis.  Finally, what we call (WCon) is the contrapositive of
what Board called WCon. Restricting to a single-agent setting and
thereby dropping subscripted agent names, we obtain the theory
$\CDL_0$.

\begin{remark}[Multi-agent $\CDL_0$]
  \label{remark:multi-agent-CDL0}
  A multi-agent version of $\CDL_0$ is obtained by making trivial
  modifications to the language, axiomatization, and semantics of
  $\CDL_0$.  In particular, for a nonempty set $A$ of agents, the
  multi-agent language $\Lang_\CDL^A$ is like the single-agent
  language $\Lang_\CDL$ except that each conditional belief operator
  $B^\psi$ is replaced by a number of operators $B^\psi_a$, one for
  each agent $a\in A$. The multi-agent theory $\CDL_0^A$ is obtained
  by adding a metavariable agent subscript $a$ to each of the belief
  operators in Table~\ref{table:CDL0}. The models of $\CDL_0^A$ are
  \emph{multi-agent plausibility models}: these are like single-agent
  plausibility models presented above (in
  Definition~\ref{definition:plausibility-model}) except that the
  preorder $\leq$ is replaced by a preorder $\leq_a$ for each agent
  $a\in A$.  The definition of truth for $\Lang_\CDL^A$ on these
  models is like that in Definition~\ref{definition:CDL-truth} except
  that the meaning of $M,w\models B^\psi_a\varphi$ is changed so as to
  refer to the preorder $\leq_a$. Validity is defined with respect to
  the class of multi-agent plausibility models satisfying the property
  that each $\leq_a$ is locally well-ordered.

  So, in essence, the multi-agent version consists of multiple single
  agent versions such that each agent's conditional beliefs are always
  restricted to the worlds connected (for that agent) to the given
  world $w$ currently under consideration. It is clear that
  restricting to one agent $a$ yields a framework that is equivalent
  to the version of $\CDL_0$ we have presented here.

  We note that the multi-agent version allows us to describe what one
  agent conditionally believes about what another agent conditionally
  believes. This is feature of interest in a wide variety of
  applications.  However, from the technical perspective, the
  difference between the single- and multi-agent frameworks does not
  amount to too much in the way of mathematical shenanigans.  It
  therefore suffices to indicate, as we have here, how the multi-agent
  version is obtained from the single-agent version and then restrict
  our study to the single-agent version. Of course, one may consult
  Board \cite{Boa04:GEB} for the fully specified account of the
  multi-agent theory $\BRSIC$.
\end{remark}

\subsection{The theory \texorpdfstring{$\CDL$}{CDL}}

It will be our task in this paper to develop a version of Conditional
Doxastic Logic with justifications in the tradition of Justification
Logic \cite{ArtFit12:SEP}. We will say more about this later, but for
now it suffices to say that justifications in this tradition are meant
to encode the individual reasoning steps that the agent uses to
support her belief in one statement based on justifications she has
for beliefs in other statements.  In this way, justifications are
supposed to present a stepwise explanation for how the agent derives
complex beliefs from more basic ones.  It is in this sense that
justifications are ``proof-like.'' 

In order to make this precise, we require an axiomatization of
Conditional Doxastic Logic that is more perspicious than is $\CDL_0$
with regard to the ways in which conditional beliefs obtain.  In
particular, (IEa), (IEb), and (LE) are powerful principles that in fact
encode a number of more basic principles and, as such, these powerful
principles compress a number of reasoning steps into a small number of
postulates. This is especially obvious with (LE): a belief conditional
$\psi$ may be replaced by a provably equivalent conditional $\psi'$ in
one step, which does not reflect the complexity of the derivation that
was used to prove the equivalence $\psi\leftrightarrow\psi'$.  From
the point of view of the Justification Logic tradition, wherein
justifications should explain in a stepwise fashion how one
conditional belief follows from another, this is
undesirable. Intuitively, if the agent believes $\varphi$ conditional
on $\psi$, then the reason she believes $\varphi$ conditional on a
provably equivalent $\psi'$ depends crucially on the reasoning as to
why $\psi'$ is in fact equivalent to $\psi$. If $\pi_1$ and a more
complex $\pi_2$ are proofs of this equivalence, then an agent who
bases her belief on the more complex $\pi_2$ should have a
correspondingly more complex justification witnessing her belief.  We
therefore require an alternative but equivalent axiomatization of the
theory $\CDL_0$ that makes such stepwise reasoning operations more
explicit.  The exact criteria we seek for such a theory are not
precisely defined but are based on the authors' experience in working
in the Justification Logic tradition.  We call the theory we have
settled upon $\CDL$, and later we will explain how this theory gives
rise to a theory of Conditional Doxastic Logic with justifications.

\begin{definition}[$\CDL$ theory]
  $\CDL$ is defined in Table~\ref{table:CDL}.
\end{definition}

\begin{table}
  \begin{center}
    \textsc{Axiom Schemes}
    \\[.3em]
    \renewcommand{\arraystretch}{1.3}
    \begin{tabular}{rl}
      (CL) & 
      Schemes for Classical Propositional Logic
      \\
      (K) &
      $B^\psi(\varphi_1\imp\varphi_2)\to
      (B^\psi\varphi_1\to B^\psi\varphi_2)$
      \\
      (Succ) &
      $B^\psi\psi$
      \\
      (KM) &
      $B^\psi\bot\to B^{\psi\land\varphi}\bot$
      \\
      (RM) &
      $\lnot B^\psi\lnot\varphi\to(
      B^\psi\chi\to B^{\psi\land\varphi}\chi)$
      \\
      (Inc) &
      $B^{\psi\land\varphi}\chi\to B^\psi(\varphi\to\chi)$
      \\
      (Comm) &
      $B^{\psi\land\varphi}\chi\to B^{\varphi\land\psi}\chi$
      \\
      (PI) &
      $B^\psi\chi\to B^\varphi B^\psi\chi$
      \\
      (NI) &
      $\lnot B^\psi\chi\to B^\varphi\lnot B^\psi\chi$
      \\
      (WCon) & 
      $B^\psi\bot\to\lnot\psi$
    \end{tabular}
    \\[1em]
    \textsc{Rules}
    \\[.5em]
    \AXC{$\varphi\to\psi$}
    \AXC{$\varphi$}
    \RLs{(MP)}
    \BIC{$\psi$}
    \DP
    \qquad
    \AXC{$\varphi$}
    \RLs{(MN)}
    \UIC{$B^\psi\varphi$}
    \DP
  \end{center}
  \caption{The theory $\CDL$}
  \label{table:CDL}
\end{table}

% include "party" examples

The scheme (CL) of \emph{Classical Logic} and the rule (MP) of
\emph{Modus Ponens} tell us that $\CDL$ is an extension of Classical
Propositional Logic.  The rule (MN) of \emph{Modal Necessitation}
tells us that derivable formulas hold in all conditional belief
states.

Scheme (K) is just Kripke's axiom for our conditional belief operator
$B^\psi$. The scheme (Succ) of \emph{Success} says that every belief
revision is always successful: if the agent revises her belief based
on the information that $\psi$, then she will always arrive in a
belief state in which $\psi$ is one of her beliefs.

Making use of the definition of knowledge
$K\varphi\coloneqq B^{\lnot\varphi}\bot$ from
Remark~\ref{remark:knowledge}, we can look at the following special
case of the scheme (KM) of \emph{Knowledge Monotonicity\/}:
\[
B^{\lnot\psi}\bot\to B^{\lnot\psi\land\lnot\varphi}\bot\enspace.
\]
Since $\lnot\lnot\psi$ is equivalent to $\psi$ and
$\lnot(\lnot\psi\land\lnot\varphi)$ is equivalent to
$\psi\lor\varphi$, we may interpret the above instance of (KM) as
telling us that knowledge is closed under disjunction: if $\psi$ is
known, then so is $\psi\lor\varphi$.  But another interpretation
perhaps better explains the word ``Monotonicity'' in the name of this
scheme.  Returning now to the official formulation
\[
B^\psi\bot\to B^{\psi\land\varphi}\bot
\]
of (KM), this scheme tells us that if we can conclude that a belief
state conditional on $\psi$ is contradictory, then conjunctively
adding any further information $\varphi$ yields a belief state
conditional on $\psi\land\varphi$ that is still
contradictory. Accordingly, the belief state is unchanged by the
conjunctive addition of any further conditional information, and so
the belief state is trivially ``monotonic'' in the conjunctive
addition of conditional information.

The scheme (RM) of \emph{Rational Monotonicity} permits a more subtle
kind of conjunctive addition.  This scheme,
\[
\lnot B^\psi\lnot\varphi\to(B^\psi\chi\to
B^{\psi\land\varphi}\chi)\enspace,
\]
says that if $\varphi$ is \emph{consistent} with the belief state
conditional on $\psi$, then we may conjunctively add $\varphi$ to our
conditional without losing any beliefs from the original belief state.
This is a non-trivial monotonicity: incorporating the information
$\varphi$ by forming the conditional $\psi\land\varphi$ yields a
belief state that includes all the beliefs from the belief state
conditional on $\psi$, but it may also include more.

The scheme (Inc) of \emph{Inclusion} says that a belief state
conditional on $\psi$ includes every $\chi$ implied by $\varphi$
whenever the belief state conditional on the conjunction
$\psi\land\varphi$ includes $\chi$.  The scheme (Comm) of
\emph{Commutativity} says that the belief state conditional on a
conjunction is invariant to the ordering of the conjuncts. The schemes
(PI) of \emph{Positive Introspection} and (NI) or \emph{Negative
  Introspection} tell us that conditional beliefs are identical in all
belief states.  The scheme (WCon) of \emph{Weak Consistency} tells us
that belief revision is consistent with the actual state of affairs:
if a revision by $\psi$ yields a contradictory belief state, then
$\psi$ cannot be true.

\begin{remark}[Classical reasoning (CR), modal reasoning (MR)]
  When discussing derivation in $\CDL$, we will often suppress
  elementary reasoning steps familiar from the study of normal modal
  logics.  Toward this end, ``classical reasoning,'' which may be
  denoted by (CR), refers to a derivation with one or more steps that
  makes use solely of (CL) and (MP).  ``Modal reasoning,'' which may
  be denoted by (MR), refers to a derivation with one or more steps
  that makes use solely of (CL), (K), (MP), and (MN).
\end{remark}

\begin{theorem}[$\CDL$-theorems]
  \label{theorem:CDL-theorems}
  The following schemes of (Cut), \emph{Cautious Monotonicity} (CM),
  (Taut), (And), (Or), \emph{Positive Reduction} (PR), and
  \emph{Negative Reduction} (NR) are all derivable in $\CDL$:
  \begin{align*}
    \text{(Cut)} 
    &\quad
      B^\psi\varphi\to(B^{\psi\land\varphi}\chi\to B^\psi\chi)
    \\
    \text{(CM)} 
    &\quad
      B^\psi\varphi\to(B^\psi\chi\to B^{\psi\land\varphi}\chi)
    \\
    \text{(Taut)}
    &\quad
      B\varphi \leftrightarrow B^\top\varphi
    \\
    \text{(And)}
    &\quad
      B^\psi\varphi_1\to
      (B^\psi\varphi_2\to B^\psi(\varphi_1\land\varphi_2))
    \\
    \text{(Or)}
    &\quad
      B^{\psi_1}\varphi\to
      (B^{\psi_2}\varphi\to B^{\psi_1\lor\psi_2}\varphi)
    \\
    \text{(PR)}
    &\quad
      B^\varphi B^\psi\chi
      \leftrightarrow
      ( B^\varphi\bot\lor B^\psi\chi)
    \\
    \text{(NR)}
    &\quad
      B^\varphi\lnot B^\psi\chi
      \leftrightarrow
      ( B^\varphi\bot\lor\lnot B^\psi\chi)
  \end{align*}
  Also, the following rules of \emph{(Left) Logical Equivalence} (LE),
  \emph{Right Weakening} (RW), and \emph{Supraclassicality} (SC) are
  all derivable in $\CDL$:
  \begin{center}
    \AXC{$\psi\leftrightarrow\psi'$}
    \RLs{(LE)}
    \UIC{$B^{\psi}\chi\leftrightarrow B^{\psi'}\chi$} 
    \DP  
    \qquad
    \AXC{$\chi\to\chi'$}
    \RLs{(RW)}
    \UIC{$B^\psi\chi\to B^\psi\chi'$} 
    \DP  
    \qquad
    \AXC{$\psi\to\chi$}
    \RLs{(SC)}
    \UIC{$B^\psi\chi$}
    \DP
  \end{center}
\end{theorem}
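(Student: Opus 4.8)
\section*{Proof proposal}

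The plan is to derive the easy modal facts first and then bootstrap up to the two genuinely load-bearing principles, (LE) and (Or). I would begin with the rules (RW) and (SC) and the scheme (And), all of which follow by pure modal reasoning (MR): for (RW), necessitate the hypothesis $\chi\to\chi'$ with (MN) and distribute with (K); for (SC), combine (RW) with (Succ) applied to the hypothesis $\psi\to\chi$; and for (And), necessitate the tautology $\varphi_1\to(\varphi_2\to(\varphi_1\land\varphi_2))$ and distribute (K) twice. (Taut) is immediate since $B\varphi$ is by definition $B^\top\varphi$. For (Cut), I would feed $B^{\psi\land\varphi}\chi$ through (Inc) to obtain $B^\psi(\varphi\to\chi)$ and then distribute with (K) against $B^\psi\varphi$ to land on $B^\psi\chi$.

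The first step requiring care is (CM), whose antecedent is $B^\psi\varphi$ rather than the consistency assumption $\lnot B^\psi\lnot\varphi$ of (RM). I would split on whether $B^\psi\bot$ holds. If $B^\psi\bot$, then (KM) gives $B^{\psi\land\varphi}\bot$ and (RW) upgrades $\bot\to\chi$ to $B^{\psi\land\varphi}\chi$, so the consequent holds outright. If $\lnot B^\psi\bot$, then $B^\psi\varphi$ forces $\lnot B^\psi\lnot\varphi$ (otherwise (And) and (RW) would yield $B^\psi\bot$), and now (RM) delivers $B^\psi\chi\to B^{\psi\land\varphi}\chi$. Classical reasoning over the exhaustive case split $B^\psi\bot\lor\lnot B^\psi\bot$ then assembles (CM).

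The crux is (LE), which is exactly the principle $\CDL$ is designed to make derivable. Given a theorem $\psi\leftrightarrow\psi'$, I would note $\psi\to\psi'$ and apply (SC) to get $B^\psi\psi'$; instantiating (CM) and (Cut) at $\varphi:=\psi'$ and discharging this $B^\psi\psi'$ yields both directions of $B^\psi\chi\leftrightarrow B^{\psi\land\psi'}\chi$. Symmetrically, $\psi'\to\psi$ gives $B^{\psi'}\chi\leftrightarrow B^{\psi'\land\psi}\chi$. The remaining gap $B^{\psi\land\psi'}\chi\leftrightarrow B^{\psi'\land\psi}\chi$ is closed by (Comm) in both directions, and chaining the three equivalences produces $B^\psi\chi\leftrightarrow B^{\psi'}\chi$. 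I expect this to be the main obstacle, precisely because it is where (Comm) --- otherwise unused among the easy derivations --- becomes indispensable, and because coaxing the conjuncts into a form matching (CM), (Cut), and (Comm), without already presupposing (LE), is the delicate bookkeeping.

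With (LE) in hand the rest is routine. For (Or), using $\psi_i\leftrightarrow(\psi_1\lor\psi_2)\land\psi_i$ I would rewrite each $B^{\psi_i}\varphi$ via (LE) and push it through (Inc) to get $B^{\psi_1\lor\psi_2}(\psi_i\to\varphi)$ for $i=1,2$; then (And), followed by (RW) against the tautology $(\psi_1\to\varphi)\land(\psi_2\to\varphi)\to((\psi_1\lor\psi_2)\to\varphi)$, and finally (Succ) with (K) give $B^{\psi_1\lor\psi_2}\varphi$. For (PR) and (NR) I would argue both directions: the right-to-left directions follow from (PI) (resp.\ (NI)) together with (RW) from $\bot$; the left-to-right directions are a single case analysis, e.g.\ for (PR), if $\lnot B^\psi\chi$ then (NI) gives $B^\varphi\lnot B^\psi\chi$, which with the hypothesis $B^\varphi B^\psi\chi$ collapses via (And) and (RW) to $B^\varphi\bot$; (NR) is the mirror image using (PI) in place of (NI).
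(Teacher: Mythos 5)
Your proposal is correct: every individual derivation goes through, there is no circularity in your dependency order, and for most items --- (Cut) via (Inc)+(K), the case split for (CM), (LE) via the cross-beliefs $B^\psi\psi'$ and $B^{\psi'}\psi$ fed into (CM)/(Cut)/(Comm), and the (PI)/(NI) case analyses for (PR)/(NR) --- your argument is essentially the paper's own (your split on $B^\psi\bot$ versus $\lnot B^\psi\bot$ for (CM) is a trivial variant of the paper's split on $B^\psi\lnot\varphi$ versus $\lnot B^\psi\lnot\varphi$). The one genuine structural difference is (Or): you derive (LE) first and use it, applied to the tautology $\psi_i\leftrightarrow(\psi_1\lor\psi_2)\land\psi_i$, to move each $B^{\psi_i}\varphi$ to $B^{(\psi_1\lor\psi_2)\land\psi_i}\varphi$ before invoking (Inc), whereas the paper proves (Or) \emph{before} and \emph{independently of} (LE), getting $B^{(\psi_1\lor\psi_2)\land\psi_i}\varphi$ directly from $B^{\psi_i}(\psi_1\lor\psi_2)$ (via (Succ) and modal reasoning) together with (CM) and (Comm). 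Both routes are sound; yours makes (LE) the single load-bearing lemma and shortens everything downstream of it, while the paper's ordering has the mild advantage of exhibiting (Or) as a consequence of the nonmonotonic-logic primitives (CM), (Comm), (Inc) alone, without passing through substitution of provably equivalent conditionals --- a distinction that matters in the paper's broader agenda of making each reasoning step explicit for the later justification-logic translation.
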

\begin{proof}
  See the appendix.
\end{proof}

The following result shows that $\CDL$ and $\CDL_0$ derive the same
theorems and therefore that these theories are identical.

\begin{theorem}[$\CDL$-$\CDL_0$ equivalence]
  \label{theorem:CDL-equivalence}
  For each $\varphi\in\Lang_\CDL$:
  \[
  \vdash_\CDL\varphi \quad\text{iff}\quad 
  \vdash_{\CDL_0}\varphi \enspace.
  \]
\end{theorem}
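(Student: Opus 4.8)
The plan is to prove the two systems equivalent by establishing \emph{mutual derivability} of their axioms and rules, after which the stated biconditional follows by a routine induction on the length of derivations: once every axiom of one system is a theorem of the other and every rule of one is derivable in the other, any derivation can be transcribed line-by-line. Both $\CDL$ and $\CDL_0$ share the axioms (CL), (K), (Succ), (PI), (NI), (WCon) and the rules (MP), (MN), so it suffices to account for the differences. The system $\CDL_0$ has in addition (IEa), (IEb), and the rule (LE); the system $\CDL$ has in addition (KM), (RM), (Inc), and (Comm). Throughout I would freely use the derived rule (RW) --- from $\chi\to\chi'$ infer $B^\psi\chi\to B^\psi\chi'$ --- which is available in either system via (MN), (K), and (MP).

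For the direction $\vdash_{\CDL_0}\varphi \Rightarrow \vdash_\CDL\varphi$, I would derive (IEa), (IEb), and (LE) inside $\CDL$. The rule (LE) is already listed as derivable in Theorem~\ref{theorem:CDL-theorems}. For (IEa), observe that the two directions of its inner biconditional are precisely (Cut) and (CM), both of which Theorem~\ref{theorem:CDL-theorems} supplies. For (IEb), the forward direction of its inner biconditional, namely $B^{\psi\land\varphi}\chi\to B^\psi(\varphi\to\chi)$, is exactly the $\CDL$-axiom (Inc) and holds outright. The backward direction, under the hypothesis $\lnot B^\psi\lnot\varphi$, follows by instantiating (RM) at $\varphi\to\chi$ to obtain $B^\psi(\varphi\to\chi)\to B^{\psi\land\varphi}(\varphi\to\chi)$, then using (Succ) with (RW) to get $B^{\psi\land\varphi}\varphi$, and finally discharging $\varphi$ via (K).

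For the converse direction $\vdash_\CDL\varphi \Rightarrow \vdash_{\CDL_0}\varphi$, I would derive (KM), (RM), (Inc), and (Comm) inside $\CDL_0$. Here (Comm) is immediate: (LE) applied to the classical tautology $(\psi\land\varphi)\leftrightarrow(\varphi\land\psi)$ gives $B^{\psi\land\varphi}\chi\leftrightarrow B^{\varphi\land\psi}\chi$. For (KM), from $B^\psi\bot$ we obtain $B^\psi\varphi$ by (RW) (since $\bot\to\varphi$), whence (IEa) yields $B^{\psi\land\varphi}\bot\leftrightarrow B^\psi\bot$, and the assumed $B^\psi\bot$ then delivers $B^{\psi\land\varphi}\bot$. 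For (RM), assuming $\lnot B^\psi\lnot\varphi$ and $B^\psi\chi$, I would use (RW) on $\chi\to(\varphi\to\chi)$ to get $B^\psi(\varphi\to\chi)$, then feed this into the biconditional of (IEb) to conclude $B^{\psi\land\varphi}\chi$.

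The one derivation requiring genuine care is (Inc) in this second direction, since (IEb) furnishes $B^{\psi\land\varphi}\chi\to B^\psi(\varphi\to\chi)$ only under the proviso $\lnot B^\psi\lnot\varphi$, whereas (Inc) must hold unconditionally. I would remove the proviso by case analysis on $B^\psi\lnot\varphi$: when $\lnot B^\psi\lnot\varphi$ holds, the forward direction of (IEb) applies directly; when $B^\psi\lnot\varphi$ holds, (RW) applied to the tautology $\lnot\varphi\to(\varphi\to\chi)$ already yields $B^\psi(\varphi\to\chi)$, so the implication holds with a true consequent. This elimination of side conditions is the crux of the argument: each ostensibly unconditional $\CDL$-axiom must be recovered from the \emph{conditional} biconditionals (IEa) and (IEb) of $\CDL_0$ by splitting on the relevant consistency assumption. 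Once all four $\CDL$-axioms are derived in $\CDL_0$ and (IEa), (IEb), (LE) are derived in $\CDL$, the induction on derivations closes both directions and establishes the claimed equivalence.
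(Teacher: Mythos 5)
Your proposal is correct and follows essentially the same route as the paper's proof: both directions are established by mutual derivation of the non-shared axioms and rules, with (IEa) obtained from (Cut) and (CM), (IEb) from (Inc) and (RM) plus (Succ) and (K), and conversely (KM), (RM), (Inc), (Comm) recovered in $\CDL_0$ via (IEa), (IEb), (LE), including the same case split on $B^\psi\lnot\varphi$ to derive (Inc) unconditionally. The only differences are cosmetic (e.g., your explicit appeal to (RW) where the paper writes (MR)/(CR)).
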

\begin{proof}
  See the appendix.
\end{proof}

That $\CDL$ is sound and complete with respect to the class of
well-ordered plausibility follows by
Theorem~\ref{theorem:CDL-equivalence} and the results in Board's work
\cite{Boa04:GEB}. However, we provide a full proof of this in the
appendix because the details will be useful when we consider a
justified version of $\CDL$.

\begin{theorem}[$\CDL$ soundness and completeness; \cite{Boa04:GEB}]
  \label{theorem:CDL-completeness}
  For each $\varphi\in\Lang_\CDL$:
  \[
  \vdash_\CDL\varphi \quad\text{iff}\quad \models\varphi
  \enspace.
  \]
\end{theorem}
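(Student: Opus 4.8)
The plan is to prove both directions directly rather than only invoking Theorem~\ref{theorem:CDL-equivalence} together with Board's results. \textbf{Soundness} ($\vdash_\CDL\varphi \Imp \models\varphi$) is the routine direction: I would check that each axiom scheme of Table~\ref{table:CDL} is valid on every well-ordered plausibility model and that (MP) and (MN) preserve validity. Throughout I would use Theorem~\ref{theorem:CDL-wo}\eqref{i:CDL-wo} to replace the truth condition for $B^\psi\varphi$ by the simpler $\min\sem{\psi}\subseteq\sem{\varphi}$. For instance, (Succ) and (K) are immediate; (WCon) follows because $\min\sem{\psi}\subseteq\sem{\bot}=\emptyset$ forces $\sem{\psi}=\emptyset$ by well-foundedness; (KM), (RM), (Inc), and (Comm) are the AGM-style conditions on minima; and (PI), (NI) hold because in a well-order the whole model is a single connected component, so the truth value of a conditional belief does not depend on the evaluation world.

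For \textbf{completeness} I would argue contrapositively: assuming $\not\vdash_\CDL\varphi$, I must exhibit a well-ordered (equivalently, by Theorem~\ref{theorem:CDL-lwo}, a locally well-ordered) model in which $\varphi$ fails. The reduction to locally well-ordered models is what makes the construction tractable, since a canonical model naturally splits into connected components, each of which I can arrange to be a well-order. To obtain well-foundedness for free I would work with a \emph{finite} filtration: fix a finite set $\Sigma$ of formulas containing $\varphi$ and closed under subformulas together with enough conditional-belief formulas to interpret the nested operators, and take the worlds of the model to be the $\Sigma$-atoms (maximal $\CDL$-consistent selections from $\Sigma$). Finiteness then gives well-foundedness for free by Theorem~\ref{theorem:wf-smooth}\eqref{i:finite}.

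The heart of the argument is the definition of the plausibility order and the truth lemma. Writing $\widehat{x}$ for the characteristic conjunction of an atom $x$, I would group atoms into components by the equivalence ``$x$ and $y$ contain the same conditional-belief formulas''—an equivalence forced by the introspection axioms (PI) and (NI)—and within each component define $x\leq y$ to mean that $\lnot B^{\widehat{x}\lor\widehat{y}}\lnot\widehat{x}$ belongs to the atoms (i.e.\ $x$ is consistent with being a most plausible $\{x,y\}$-world). I would then verify that $\leq$ is a preorder, using (Succ), (Inc), (Comm), and the derived schemes (Cut), (CM), (Or) of Theorem~\ref{theorem:CDL-theorems} for transitivity, and that it is total on each component because $B^{\widehat{x}\lor\widehat{y}}\bot$ is blocked by (WCon). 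Totality on each component together with finiteness then yields a local well-order via Theorem~\ref{theorem:wf-smooth}\eqref{i:wf-smooth} and \eqref{i:lwo-smooth}. Finally I would prove the truth lemma $M^c,x\models\theta \Iff \theta\in x$ by induction, the only delicate case being $\theta=B^\psi\chi$: using Theorem~\ref{theorem:CDL-wo}\eqref{i:CDL-wf} this reduces to showing that $B^\psi\chi\in x$ iff $\min\sem{\psi}\cap\cc(x)\subseteq\sem{\chi}$, which requires an \emph{existence lemma} producing, whenever $B^\psi\chi\notin x$, a $\leq$-minimal $\psi$-atom in the component that omits $\chi$.

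\textbf{The main obstacle} I expect is exactly this modal case of the truth lemma: matching the syntactically defined order $\leq$ with the semantic minima of each $\sem{\psi}$, and in particular extracting the required minimal counterexample atom. This is where the AGM-style axioms (KM), (RM), (Inc), (Comm) do the real work, and where the closure conditions on $\Sigma$ must be chosen carefully so that filtration interacts correctly with the conditionals and with the nested belief operators governed by (PI) and (NI). Establishing transitivity and local totality of $\leq$ from these axioms is the other place where care is needed; everything else is bookkeeping.
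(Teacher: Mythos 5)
Your overall strategy coincides with the paper's own direct proof: soundness by induction on derivations, checking each axiom against the $\min\sem{\psi}\subseteq\sem{\varphi}$ truth condition; completeness by building a finite canonical model from maximal consistent sets, equipping it with a syntactic plausibility order, proving an Agreement-type fact from (PI)/(NI), a Minimality-type fact, and a Truth Lemma whose modal case needs exactly the ``existence lemma'' you identify, then transferring from locally well-ordered to well-ordered models via Theorem~\ref{theorem:CDL-lwo}. Your one real deviation is the primitive definition of the order: you take $x\leq y$ to mean that $\lnot B^{\widehat{x}\lor\widehat{y}}\lnot\widehat{x}$ holds at the atoms, whereas the paper defines $x\leq y$ by the existence of some $\psi\in x\cap y\cap C_1$ with $y^\psi\subseteq x$, and only afterwards \emph{proves} your condition as its Minimality Lemma ($x\in\min[\psi]$ iff $\lnot B^\psi\lnot\bar x\in x$). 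That variant is legitimate and arguably more direct, and your pre-partitioning of atoms by agreement on conditional-belief formulas is just the paper's Agreement Lemma taken as a definition rather than derived.

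The genuine gap is the finiteness step. You cannot literally ``fix a finite set $\Sigma$ of formulas containing $\varphi$, closed under subformulas together with enough conditional-belief formulas'': the formulas needed to define your order, $B^{\widehat{x}\lor\widehat{y}}\lnot\widehat{x}$, are not subformulas of $\varphi$; their conditions are Boolean combinations of the characteristic conjunctions $\widehat{x}$, which themselves depend on what the $\Sigma$-atoms are; and the introspection arguments behind your component structure require further nestings such as $\dot B$-free formulas of the form $B^\delta B^{\widehat{x}\lor\widehat{y}}\lnot\widehat{x}$, and so on. Adding any of these to $\Sigma$ changes the atoms and hence the $\widehat{x}$, so no finite, literally closed $\Sigma$ exists, and ``finiteness gives well-foundedness for free'' is not available as stated. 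The paper's resolution is to work with an infinite closure $C=C_1\cup B_\omega C_1$ (where $C_1$ is the Boolean closure of ${\pm}\,\sub(\{\theta,\bot,\top\})$ and $B_\omega C_1$ contains all finitely nested conditionals over $C_1$), and then to prove that $C$ is \emph{logically finite}, i.e.\ has a finite basis up to provable equivalence, so that the set of maximal consistent subsets of $C$ is still finite. That argument is a substantial portion of the completeness proof and leans crucially on the derived reduction schemes (PR) and (NR) of Theorem~\ref{theorem:CDL-theorems} to collapse nested belief operators; without it (or some substitute), the well-foundedness of your canonical model—and hence the appeal to Theorem~\ref{theorem:wf-smooth}—is unjustified. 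The rest of your plan, including the order-theoretic verifications and the existence lemma, does go through essentially as the paper carries them out.
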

\begin{proof}
  See the appendix.
\end{proof}

% XXX
% \section{Nonmonotonic Logic}

% $\varphi\vsim\psi$ is read, ``if $\varphi$, then normally $\psi$.''

% \begin{table}
%   \begin{center}
%     \textsc{Axiom Schemes}\\[.4em]
%     \renewcommand{\arraystretch}{1.4}
%     \begin{tabular}[t]{rl}
%       (CL) &
%              Schemes for Classical Propositional Logic
%       \\
%       (ID) &
%             $\psi\vsim\psi$
%       \\
%       (Cut) &
%               $(\psi\vsim\varphi)\to
%               ((\psi\land\varphi\vsim\chi)\to(\psi\vsim\chi))$
%       \\
%       (CM) &
%              $(\psi\vsim\varphi)\to
%              ((\psi\vsim\chi)\to(\psi\land\varphi\vsim\chi))$
%       \\
%       (Or) &
%              $(\psi_1\vsim\chi)\to
%              ((\psi_2\vsim\chi)\to(\psi_1\lor\psi_2\vsim\chi))$
%       \\
%       (Con) &
%               $(\psi\vsim\bot)\to\lnot\psi$
%       \\
%       (RM) &
%              $\lnot(\psi\vsim\lnot\varphi)\to
%              ((\psi\vsim\chi)\to(\psi\land\varphi\vsim\chi))$
%     \end{tabular}
%     \\[1em]
%     \textsc{Rules}
%     \\[.5em]
%     \AXC{$\varphi\to\psi$}
%     \AXC{$\varphi$}
%     \RLs{(MP)}
%     \BIC{$\psi$}
%     \DP
%     \\
%     \AXC{$\psi\leftrightarrow\psi'$}
%     \RLs{(LLE)}
%     \UIC{$(\psi\vsim\varphi)\to(\psi'\vsim\varphi)$}
%     \DP
%     \qquad
%     \AXC{$\varphi\to\varphi'$}
%     \RLs{(RW)}
%     \UIC{$(\psi\vsim\varphi)\to(\psi\vsim\varphi')$}
%     \DP
%   \end{center}
%   \caption{The theory $\KrausR^*$, modified from \cite{Kraus:1990}}
%   \label{table:R*}
% \end{table}

\section{AGM Belief Revision}

The most influential theory of belief change is due to Alchourr\'on,
G\"ardenfors, and Makinson \cite{AGM}. Their theory, commonly called
to the ``AGM theory,'' takes the view that an agent's belief state (or
``database'') is represented by a deductively closed set of sentences
$T$ called a ``belief set.''  The agent is understood to believe
exactly those sentences in her belief set $T$, and various operators
on $T$ are used to describe various kinds of changes in her belief
state.  Of particular interest is the \emph{revision} operator, now
often denoted using the symbol ``$\,*\,$''. This operator takes new
information in the form of a sentence $\psi$ and produces another
belief set $T*\psi$ that contains $\psi$. Intuitively, the revision
operation assumes that the incoming information $\psi$ is completely
trustworthy and so it should be incorporated into the
database. However, simply adding $\psi$ and taking the deductive
closure, forming the \emph{expansion}
\begin{equation}
  T+\psi \coloneqq \Cn(T\cup\{\psi\})
  \label{eq:expansion}
\end{equation}
using an assumed consequence-closure operator $\Cn(-)$ underlying the
setting, might lead to an inconsistent belief set. By this it is meant
that $T+\psi$ might be logically inconsistent according to the logic
governing $\Cn(-)$. Therefore, we cannot simply equate revision with
expansion but must do something more clever so that the revised belief
set $T*\psi$ not only contains $\psi$ but is also consistent whenever
$\psi$ is consistent.

Instead of providing an exact procedure for computing revision, the
AGM approach is ``postulate based'': a number of axiomatic postulates
are provided, some intuitive justification is given as to why a
revision operator should satisfy each of the postulates, and any
operation on belief sets that satisfies all of the postulates is said
to be an \emph{AGM revision operator}. So in principle, there are many
revision operators, and each is to be studied from an axiomatic point
of view using the AGM revision postulates.

Following the exposition of AGM theory from \cite{Ove11:SEP} (but with
some minor modifications), we begin with a set $\Prop$ of
propositional letters (usually countable).  The set $\Lang_\CPL$ of
formulas (the ``propositional formulas'' or ``formulas of Classical
Propositional Logic'') consists of those expressions that can be built
up from the propositional letters and the Boolean constants $\bot$
(falsehood) and $\top$ (truth) using the usual Boolean connectives. A
deductive theory is assumed, and this theory is specified in terms of
a Tarskian consequence operator: for any set $S$ of formulas, $\Cn(S)$
is the set of logical consequences of $S$. It is assumed that $\Cn(-)$
satisfies the following conditions:
\begin{itemize}
\item Inclusion: $S\subseteq\Cn(S)$,

\item Monotony (also sometimes called ``Monotonicity''):
  $S\subseteq S'$ implies $\Cn(S)\subseteq\Cn(S')$,

\item Iteration: $\Cn(S)=\Cn(\Cn(S))$, and

\item Supraclassicality: $\Cn(S)$ contains each classical tautology in
  $\Lang_\CPL$.
\end{itemize}
It is usually assumed that $\Cn(-)$ also satisfies the following
conditions:
\begin{itemize}
\item Deductive Consistency: $\bot\notin\Cn(\emptyset)$; and

\item Compactness: $\varphi\in\Cn(S)$ iff there exists a finite
  $S'\subseteq S$ such that $\varphi\in\Cn(S')$.
\end{itemize}
The alternative notation $S\vdash\varphi$ is used to express
$\varphi\in\Cn(S)$.  A \emph{belief base} is a set of formulas in the
language, and a \emph{belief set} is a deductively closed belief base
(i.e., $\Cn(S)=S$). Note that the phrase ``belief state'' (a.k.a.,
``database'') is an intuitive notion meant to describe the agent's
situation with regard to her beliefs.  This intuitive notion is
formalized either by a belief base (not necessarily deductively
closed) or a belief set (necessarily deductively closed).  A belief
base $T'$ gives rise to a belief set $T$ by applying the consequence
operator: $T\coloneqq\Cn(T')$.

A revision operator is meant to take an existing belief set $T$ and
some incoming information $\psi$ and produce a new belief set $T*\psi$
that incorporates the incoming information $\psi$ (i.e.,
$\psi\in T*\psi$), is consistent whenever the incoming information
$\psi$ is consistent (i.e., $T*\psi\nvdash\bot$ if $\psi$ is
consistent), and is obtained from $T$ by way of a ``minimal change.''
The latter is an intuitive (and non-formalized) guiding principle that
is used to persuade the reader that certain proposed postulates are
desirable.  From a formal perspective, it can be safely ignored.

The AGM revision postulates are reproduced in Table~\ref{table:AGM}.
Postulates 1--6 are called the \emph{G\"ardenfors postulates} (or,
more elaborately, the ``basic G\"ardenfors postulates for revision'').
Postulates 7--8 are called the \emph{supplementary postulates}.

\begin{table}
  \begin{center}
    \textsc{Postulates of AGM Belief Revision}
    \\[.3em]
    \renewcommand{\arraystretch}{1.3}
    \begin{tabular}{ll}
      1. Closure: & 
                    $T*\psi=\Cn(T*\psi)$
      \\
      2. Success: &
                    $\psi\in T*\psi$
      \\
      3. Inclusion: &
                      $T* \psi\subseteq T+\psi$
      \\
      4. Vacuity: &
                    if $\lnot\psi\notin T$, then $T*\psi=T+\psi$
      \\
      5. Consistency: &
                        if $\lnot\psi\notin\Cn(\emptyset)$, then
                        $\bot\notin\Cn(T*\psi)$
      \\
      6. Extensionality: &
                           if $(\psi\leftrightarrow\psi')\in\Cn(\emptyset)$, 
                           then
                           $T*\psi=T*\psi'$
      \\
      7. Superexpansion: &
                           $T*(\psi\land\varphi)\subseteq 
                           (T*\psi)+\varphi$
      \\
      8. Subexpansion: &
                         if $\lnot \varphi\notin\Cn(T*\psi)$, then
                         $(T*\psi)+\varphi\subseteq 
                         T*(\psi\land\varphi)$
    \end{tabular}

    \bigskip\small Notes:
    \begin{tabular}[t]{l}
      $S+\psi\coloneqq\Cn(S\cup\{\psi\})$; \\
      $\Cn(-)$ satisfies Inclusion, Monotony, Iteration, and Supraclassicality; \\
      Vacuity may be called ``Preservation.''
    \end{tabular}
  \end{center}
  \caption{The AGM revision postulates (as presented in \cite{Ove11:SEP})}
  \label{table:AGM}
\end{table}

Grove \cite{Gro88:JPL} proposed a possible worlds modeling of the AGM
postulates. Modulo certain details we gloss over, his proposal
essentially amounts to this: represent the agent's belief set using
the minimal worlds of a well-ordered plausibility model and define
revision in terms of belief conditionalization.  

\begin{theorem}[AGM revision and Grove spheres; adapted from
  \cite{Gro88:JPL}]
  \label{theorem:AGM-wo}
  Let $\Cn(-)$ be the consequence function $\CPL(-)$ of Classical
  Propositional Logic over the language $\Lang_\CPL$.  For each
  well-ordered plausibility model $M$, each belief set
  $T\subseteq\Lang_\CPL$, and each propositional formula
  $\psi\in\Lang_\CPL$, define
  \begin{align*}
    M^\down
    &\;\;\coloneqq\;\;
      \{\varphi\in\Lang_\CPL\mid
      \min(W)\subseteq\sem{\varphi}_M\}\enspace,
    \\
    T*_M\psi
    &\;\;\coloneqq\;\;
      \{\varphi\in\Lang_\CPL\mid
      \min\sem{\psi}_M\subseteq\sem{\varphi}_M\}\enspace,
    \\
    \mathfrak{B}_\CPL
    &\;\;\coloneqq\;\;
      \{L\subseteq\Lang_\CPL\mid
      \CPL(L)=L\}\enspace.
  \end{align*}
  Note that $\mathfrak{B}_\CPL$ is the collection of all propositional
  belief sets.  For a propositional formula or set of propositional
  formulas $X\in\Lang_\CPL\cup\wp(\Lang_\CPL)$, to say $X$ is
  \emph{consistent} means $\bot\notin\CPL(X)$ (or equivalently, that
  $\CPL(X)\neq\Lang_\CPL$), and to say $X$ is \emph{inconsistent}
  means it is not consistent.  To say that a plausibility model $M$ is
  a \emph{system of spheres} means that $M$ is well-ordered and the
  function
  \[
  M^\down*_M(-):\Lang_\CPL\to\mathfrak{B}_\CPL
  \]
  mapping propositional formulas $\psi\in\Lang_\CPL$ to belief sets
  $M^\down*_M\psi\in\mathfrak{B}_\CPL$ satisfies the AGM revision
  postulates. To say that $M$ is a \emph{Grove system} for a belief
  set $T\in\mathfrak{B}_\CPL$ means that $M$ is a system of spheres
  and $M^\down=T$.
  \begin{enumerate}[\;\;(a)]
  \item \label{i:cbs-base} Each consistent belief set
    $T\in\mathfrak{B}_\CPL$ has a Grove system.

  \item \label{i:*=>AGM} Suppose for a function
    $*:\mathfrak{B}_\CPL\times\Lang_\CPL\to\mathfrak{B}_\CPL$ and each
    $(T,\psi)\in\mathfrak{B}_\CPL\times\Lang_\CPL$ we have:
    \begin{enumerate}[\;\;(i)]
    \item if $T$ is inconsistent, then $T*\psi=M_*^\down*_{M_*}\psi$ for
      some fixed system of spheres $M_*$; and

    \item if $T$ is consistent, then $T*\psi=T*_{M_T}\psi$ for some
      fixed Grove system $M_T$ for $T$.
    \end{enumerate}
    It follows that $*$ is an AGM revision operator. 
  \end{enumerate}
\end{theorem}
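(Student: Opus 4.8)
The plan is to exploit the dictionary between the model-based revision $M^\down*_M(-)$ and conditional belief supplied by Theorem~\ref{theorem:CDL-wo}. For a well-ordered $M$ and propositional $\psi,\varphi,\chi$, part~(\ref{i:CDL-wo}) gives $\varphi\in M^\down*_M\psi$ iff $M\models B^\psi\varphi$, and the case $\psi=\top$ gives $\varphi\in M^\down$ iff $M\models B^\top\varphi$. Using the deduction theorem for $\CPL$ and the deductive closure of belief sets, the memberships $\chi\in T+\psi$ and $\chi\in(M^\down*_M\psi)+\varphi$ translate respectively into $M\models B^\top(\psi\to\chi)$ and $M\models B^\psi(\varphi\to\chi)$. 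Under this dictionary each AGM postulate reduces to the validity on $M$ of an associated $\Lang_\CDL$-scheme: Success is (Succ); Extensionality is (LE); Inclusion is the instance $B^\psi\varphi\to B^\top(\psi\to\varphi)$ of (Inc) combined with (LE) applied to $\top\land\psi\leftrightarrow\psi$; Superexpansion is exactly (Inc); Subexpansion and the substantive direction of Vacuity follow from (RM) together with (Succ), (K) and (LE); and Closure amounts to closure of $\{\varphi\mid M\models B^\psi\varphi\}$ under consequence, which follows from (And) and (RW). All of these are $\CDL$-theorems, hence valid on every well-ordered model by Theorem~\ref{theorem:CDL-completeness}. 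The one postulate the dictionary alone does not settle is Consistency: $\min\sem{\psi}_M\neq\emptyset$ for consistent $\psi$ requires $\psi$ to be satisfiable in $M$, so I will need $M$ rich enough that every $\CPL$-consistent formula holds at some world.

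For part~(\ref{i:cbs-base}), given a consistent belief set $T$ I would take $W$ to be the set of all $\CPL$-valuations with the induced valuation $V$ (so that $M,v\models\varphi$ agrees with propositional truth and depends only on $V$, not on the yet-to-be-fixed order). Let the inner sphere be the set $\sem{T}_M=\{v\in W\mid M,v\models T\}$ of models of $T$ and the outer sphere its complement, declaring every inner world strictly more plausible than every outer world and all worlds within one sphere equi-plausible; concretely $x\leq y$ iff $x\in\sem{T}_M$ or $y\notin\sem{T}_M$. A routine check shows $\leq$ is a total, well-founded preorder, so $M$ is well-ordered, and since $T$ is consistent we have $\sem{T}_M\neq\emptyset$ and hence $\min(W)=\sem{T}_M$. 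Therefore $M^\down=\{\varphi\mid\sem{T}_M\subseteq\sem{\varphi}_M\}=\CPL(T)=T$ by completeness of $\CPL$. Because $W$ contains every valuation, every consistent $\psi$ is satisfiable in $M$, so Consistency holds, and the remaining postulates hold by the first paragraph. Thus $M$ is a system of spheres with $M^\down=T$, i.e.\ a Grove system for $T$.

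For part~(\ref{i:*=>AGM}) the point to notice is that each of postulates~1--8 constrains a single belief set at a time, so it suffices to fix $T$ and verify that the unary map $\psi\mapsto T*\psi$ obeys them. If $T$ is consistent then by hypothesis this map is $M_T^\down*_{M_T}(-)$ with $M_T^\down=T$, which satisfies the postulates by the very definition of a system of spheres. If $T$ is inconsistent then $T=\Lang_\CPL$ is the unique inconsistent belief set and the map is $M_*^\down*_{M_*}(-)$; postulates~1,2,5,6,7,8 are inherited verbatim from the system of spheres $M_*$, while Inclusion and Vacuity hold trivially because $T+\psi=\Lang_\CPL$ and $\lnot\psi\in T$ for every $\psi$. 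Since $T$ was arbitrary, $*$ satisfies all eight postulates and is an AGM revision operator.

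I expect the bookkeeping in part~(\ref{i:cbs-base}) to be the main obstacle: pinning down the two translations relating $\chi\in T+\psi$ to $M\models B^\top(\psi\to\chi)$, and $\chi\in(T*\psi)+\varphi$ to $M\models B^\psi(\varphi\to\chi)$ (which rest on deductive closure and the deduction theorem), and then matching each postulate to the correct $\CDL$-scheme. Subexpansion and Vacuity are the delicate cases, since their guarded equivalences must be reassembled from (RM), (Succ), (K) and (LE) rather than read off a single axiom; and Consistency is the one spot where validity is not enough and the explicit richness of the constructed model must be invoked.
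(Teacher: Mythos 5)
Your proposal is correct, and for part~(\ref{i:cbs-base}) it takes a genuinely different route from the paper's proof. The paper builds an $\omega$-sphere model whose worlds are pairs $(S,i)$ of maximal $\CPL$-consistent sets and indices --- sphere $0$ carrying the maximal consistent extensions of $T$, and one further sphere for each enumerated formula $\psi_i$ that is consistent but excluded by $T$ --- then proves a Truth Lemma and a ``Theory Lemma'' identifying $M^\down*_M\psi$ as $T+\psi$, $\CPL(\psi)$, or $\Lang_\CPL$ according to the case, and finally checks the eight postulates by hand from that closed form. You instead take the minimal two-sphere model over all valuations (models of $T$ strictly more plausible than everything else) and verify the postulates by translating each one, via the deduction theorem and Theorem~\ref{theorem:CDL-wo}, into an $\Lang_\CDL$-scheme --- (Succ) for Success, (Inc) with (LE) for Inclusion and Superexpansion, (RM) with (Succ), (K), (LE) for Vacuity and Subexpansion, (And) and (RW) with compactness for Closure --- each of which is $\CDL$-derivable and hence valid on every well-ordered model by Theorem~\ref{theorem:CDL-completeness}; these reductions do check out, and you correctly isolate Consistency as the one postulate that soundness alone cannot deliver and that requires the constructed model to realize every $\CPL$-consistent formula at some world. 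What each approach buys: the paper's construction is self-contained and its Theory Lemma gives an explicit description of the induced revision operator, reused throughout the postulate checks; your approach is shorter, avoids the enumeration bookkeeping entirely (your two-sphere model satisfies the very same Theory Lemma by a two-line computation), and it upgrades the paper's informal table matching $\CDL$ axioms to AGM postulates into an actual proof. For part~(\ref{i:*=>AGM}) your argument is essentially the paper's, but you are more careful on a point the paper elides: when $T$ is inconsistent, the postulates Inclusion and Vacuity mention $T$ itself rather than only the function $T*(-)=M_*^\down*_{M_*}(-)$, so they are not inherited verbatim from the system of spheres $M_*$ (whose own belief set $M_*^\down$ is necessarily consistent); your observation that $T=\Lang_\CPL$ is the unique inconsistent belief set, making both postulates trivially true, is exactly the missing step.
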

\begin{proof}
  See the appendix.
\end{proof}

This suggests that we may view $\CDL$ as a version of AGM belief
revision in which the revision process itself can be described in the
language \cite{Boa04:GEB,BalSme08:LOFT}.  In particular, for
propositional formulas $\varphi$ and $\psi$, the formula $B\varphi$,
which is our abbreviation for $B^\top\varphi$, says that the agent
believes $\varphi$ before the revision takes place; and the formula
$B^\psi\varphi$ says that the agent believes $\varphi$ after revision
by $\psi$.  So by restricting to propositional $\varphi$ and $\psi$,
we can use conditional belief formulas $B^\psi\varphi$ to describe a
version of the AGM revision process directly in the language of
$\Lang_\CDL$.  This leads us to the following overview of what the
axiomatic theory $\CDL$ (Table~\ref{table:CDL}) has to say about the
$\CDL$-based version of AGM revision.
\begin{itemize}
\item (CL) and (MP) indicate that we use a classical meta-theory.

\item The principles we have grouped together under the name ``modal
  reasoning''---(CL), (MP), (K), and (MN)---together correspond to AGM
  Closure and AGM Extensionality.  In addition, it follows by modal
  reasoning that our underlying ``consequence operator'' satisfies
  Inclusion, Monotony, Iteration, and Supraclassicality.

\item (Succ) corresponds to AGM Success.

\item Under the assumption of (Succ), scheme (KM) corresponds to
  consequence of AGM Consistency: inconsistency of revision by $\psi$
  implies inconsistency of $\psi$ and therefore of $\psi\land\varphi$,
  and so inconsistency of revision by $\psi\land\varphi$ follows via
  (Succ).

\item Under the assumption of (Succ), scheme (RM) corresponds to AGM
  Subexpansion.  And if $\varphi=\top$, then (Succ) and (RM) together
  correspond to AGM Vacuity.

\item (Inc) corresponds to AGM Subexpansion. And, if $\varphi=\top$,
  then (Inc) corresponds to AGM Inclusion.

\item (Comm) corresponds to a special case of AGM Extensionality
  (i.e., commutativity of conjunction).

\item (PI) and (NI) do not corresponds to principles in the AGM
  setting (belief sets are subset of $\Lang_\CPL$).

\item (WCon) corresponds to AGM Consistency.
\end{itemize}

\section{Justified Conditional Doxastic Logic}

Though $\CDL$ may be viewed as a version of AGM revision in which
Boolean combinations of revisions are expressible in the language, one
key aspect that is missing: the \emph{reasons} as to why revisions
result in a state in which certain formulas are believed.  This is a
deficit present also in AGM revision: while a revision by $\psi$ may
lead to a belief set that includes $\varphi$, it is not immediately
clear why it is that $\varphi$ ought to obtain.  What is missing is
some language-describable reason that explains how it is $\varphi$
came about as a result of the revision.  It shall be our task in this
section to study how we might ``fill in'' these reasons in a theory
based on $\CDL$.  Our approach follows the general paradigm of
Justification Logic \cite{ArtFit12:SEP}, where modal operators are
replaced by syntactically structured objects called
\emph{terms}. Terms are meant to suggest ``reasons'' in the sense that
the syntactic structure of a term accords with certain derivational
principles in the underlying logic.  Our goal is to adapt this
methodology to $\CDL$. In particular, for a term $t$, we introduce new
formulas
\[
t\colu\psi\varphi
\]
with the intended meaning that, whenever the agent revises her belief
state by incorporating the formula $\psi$, then $t$ will be a reason
justifying her belief of $\varphi$ in the resulting belief state.
Intuitively, the formula $t\colu\psi\varphi$ tells us two things.
First, it tells us that the agent believes $\varphi$ conditional on
$\psi$, which was the information conveyed in $\CDL$ by the formula
$B^\psi\varphi$.  Second, something new: the formula
$t\colu\psi\varphi$ tells us that the reason encoded by $t$ supports
$\varphi$. Taken together, $t\colu\psi\varphi$ tells us that the agent
has a reason-based belief of $\varphi$ after revising by $\psi$.

Terms will be built up using a simple grammar. At the base of this
grammar are the ``certificates,'' which are terms of the form
$c_\varphi$ for some formula $\varphi$.  Intuitively, whenever we have
one or more reasons in support of $\varphi$, the certificate
$c_\varphi$ picks out the ``best'' one. So whenever $\varphi$ has any
support at all, we are always guaranteed that $c_\varphi$ names a
particular reason in support of $\varphi$. As such, the theory we
eventually define will derive the principle
\[
t\colu\psi\varphi \to c_\varphi\colu\psi\varphi\enspace,
\]
which says that the certificate $c_\varphi$ supports $\varphi$ after a
revision by $\psi$ whenever there is some reason $t$ that support
$\varphi$ after the same revision. In essence, certificates allow us
to ``forget'' the details of a complex argument in support of some
assertion, remembering only that we at some point found such an
argument.

Though a certificate $c_\varphi$ must support the formula $\varphi$ it
certifies, as per the above-mentioned derivable principle, we do not
prevent $c_\varphi$ from supporting other formulas as well.  For
example, it is consistent with the theory we will develop for us to
have $c_\varphi\colu\psi\chi$ for some $\chi\neq\varphi$.  As such,
though we require certificates to provide ``best evidence'' for the
formulas they certify, we do not require that this be the only
evidence that they provide.

Other terms are formed from certificates using one of two operators.
The first is the \emph{Application operator} ``$\,\cdot\,$'' from
Justification Logic. This operator is used to indicate that terms are
to be combined using a single step of the rule of \emph{Modus Ponens}.
In particular, the logic we will develop will derive the following
principle:
\[
t\colu\psi(\varphi_1\to\varphi_2)\to(s\colu\psi\varphi_1\to(t\cdot
s)\colu\psi\varphi_2)\enspace.
\]
This says that if $t$ supports an implication after the revision by
$\psi$ and $s$ supports the antecedent after the same revision, then
the combination $t\cdot s$ supports the consequence after that
revision. This is the reason-explicit version of the principle (K) of
$\CDL$.  The difference is that the present version tells us something
about how we obtained the consequent: the form of $t\cdot s$, with $t$
to the left and $s$ to the right, indicates that $t$ supports an
implication and $s$ supports the antecedent, and hence we were able to
derive the consequent via one step of Modus Ponens in virtue of the
fact that we use a single instance of the Application operator
``$\,\cdot\,$'' to combine $t$ with $s$ to form $t\cdot s$.

The second term-combining operator we introduce is the \emph{Sum
  operator} ``$+$'' from Justification Logic. This operator allows us
to combine to reasons in a way that preserves support.  In particular,
the logic we develop will derive the following principle:
\[
(t\colu\psi\varphi\lor s\colu\psi\varphi)\to
(t+s)\colu\psi\varphi\enspace.
\]
This says that $t+s$ supports $\varphi$ whenever at least one of $t$
or $s$ does so.  As such, the sum $t+s$ combines the supported
statements of $t$ and of $s$ without performing logical inference.

Formulas of the language will be built up from the language of
Classical Propositional Logic (based on propositional letters, the
constant $\bot$ for falsehood, and material implication) by adding
formulas of the form $t\colu\psi\varphi$, where $t$ is a term and
$\varphi$ and $\psi$ are other formulas.  Intuitively,
$t\colu\psi\varphi$ says that $t$ supports the agent's belief of
$\varphi$ after she revises her beliefs by incorporating $\psi$.

The theory we shall define is called \emph{Justified Conditional
  Doxastic Logic} or $\JCDL$. In addition to the derivable principles
mentioned above, $\JCDL$ has a number of additional principles that
make explicit the revision-based principles of $\CDL$.  In particular,
we will see that every $\CDL$-principle gives rise to a corresponding
$\JCDL$-principle, and the other way around as well.  We explain this
in more detail after the main definitions are in place.

\subsection{Language and axiomatics}

\begin{definition}[$\Lang_{\JCDL}$]
  \label{definition:lang-JCDL}
  Let $\Prop$ be a fixed set of propositional letters. The language of
  \emph{Justified Conditional Doxastic Logic} consists of the set
  $\Lang_{\JCDL}$ of formulas $\varphi$ and set $\Term_\JCDL$ of terms
  $t$ formed by the following grammar:
  \[
  \begin{array}{lll@{\qquad}l}
    \varphi & \Coloneqq & 
    p \mid \bot \mid (\varphi\to\varphi)
    \mid t\colu\varphi\varphi
    &
    \text{\small
      $p\in\Prop$
    }
    \\[.8em]
    t & \Coloneqq &
    c_\varphi \mid
    (t\cdot t) \mid
    (t+t)
  \end{array}
  \]
  Standard abbreviations for Boolean constants and connectives are
  used, and parentheses are dropped when doing so will cause no
  confusion. We adopt the following key abbreviation:
  \begin{center}
    $\dot B^\psi\varphi$ \quad{}denotes\quad
    $c_\varphi\colu\psi\varphi$\enspace.
  \end{center}
  We may write $c(\varphi)$ as an abbreviation for $c_\varphi$ when
  convenient.  Also, we let $t\col\varphi$ abbreviate
  $t\colu\emptyset\varphi$, and we let $\dot B\varphi$ abbreviate
  $\dot B^\emptyset\varphi$.
\end{definition}

Roughly speaking, it will be useful to think of $\dot B^\psi\varphi$
as the $\JCDL$-analog of the $\CDL$-expression $B^\psi\varphi$. We
will see that every $\CDL$-principle gives rise to a $\JCDL$-principle
obtained by replacing modal operators ``$\,B^\psi\,$'' by reason-based
operators ``$\,t\colu\psi\,$''. Accordingly, if $B^\psi\varphi$ is
$\CDL$-derivable, then a corresponding $t\colu{\psi'}\varphi'$ will be
derivable, where $\psi'$ corresponds to $\psi$ and $\varphi'$
corresponds to $\varphi$). So using the certificate $c_{\varphi'}$ for
$\varphi'$, it will follow that $\dot B^{\psi'}\varphi'$ is derivable
as well. It is in this sense that $\dot B^{\psi'}\varphi'$ may be
thought of as the $\JCDL$-analog of the $\CDL$-principle
$B^\psi\varphi$.

\begin{definition}[$\JCDL$ theory]
  The theory $\JCDL$ is defined in Table~\ref{table:JCDL}.
\end{definition}

\begin{table}
  \begin{center}
    \textsc{Axiom Schemes}\\[.4em]
    \renewcommand{\arraystretch}{1.4}
    \begin{tabular}[t]{rl}
      (CL) &
             Schemes for Classical Propositional Logic
      \\
      (eCert) &
                $t\colu\psi\varphi\to\dot B^\psi\varphi$
      \\
      (eK) &
             $t\colu\psi(\varphi_1\imp\varphi_2)\imp
             (s\colu\psi\varphi_1\imp(t\cdot s)\pcolu\psi\varphi_2)$
      \\
      (eSum) &
               $(t\colu\psi\varphi\lor s\colu\psi\varphi)\imp
               (t+s)\colu\psi\varphi$
      \\
      (eSucc) &
                $\dot B^\psi\psi$
      \\
      (eKM) &
              $t\colu\psi\bot\to t\colu{\psi\land\varphi}\bot$
      \\
      (eRM) &
              $\lnot\dot B^\psi\lnot\varphi\to
              (t\colu\psi\chi\to t\colu{\psi\land\varphi}\chi)$
      \\
      (eInc) &
               $t\colu{\psi\land\varphi}\chi\to
               \dot B^\psi(\varphi\to\chi)$
      \\
      (eComm) &
                $t\colu{\psi\land\varphi}\chi\to
                t\colu{\varphi\land\psi}\chi$
      \\
      (ePI) &
              $t\colu\psi\chi\to\dot B^\varphi(t\colu\psi\chi)$
      \\
      (eNI) &
              $\lnot t\colu\psi\chi\to\dot B^\varphi(\lnot t\colu\psi\chi)$
      \\
      (eWCon) &
                $t\colu\psi\bot\to\lnot\psi$
      \\
      (eA) &
             $t\colu\psi\varphi\to(\dot B^\chi\varphi\to t\colu\chi\varphi)$
    \end{tabular}
    \\[2em]
    \textsc{Rules}
    \\[.5em]
    \AXC{$\varphi\to\psi$}
    \AXC{$\varphi$}
    \RLs{(MP)}
    \BIC{$\psi$}
    \DP
    \qquad
    \AXC{$\varphi$}
    \RLs{(eMN)}
    \UIC{$\dot B^\psi\varphi$}
    \DP
    \\[1em]
    Note:
    $\dot B^\psi\varphi$ abbreviates
    $c_\varphi\colu\psi\varphi$.
  \end{center}
  \caption{The theory $\JCDL$}
  \label{table:JCDL}
\end{table}

(CL) and (MP) tell us that $\JCDL$ is an extension of Classical
Propositional Logic.

(eK) is our reason-explicit analog of the $\CDL$-scheme (K); it tells
us that reason support is closed under the rule of Modus Ponens using
the Application operator ``$\,\cdot\,$''.  (eSucc) is our
reason-explicit analog of $\CDL$ (Succ); it tells us that certificates
are used to certify the success of belief revisions.  (eKM) is our
reason-explicit version of (KM); it tells us that a reason supporting
a contradiction conditional on some $\psi$ continues to do so no
matter what additional information we conjunctively add to the
conditional. 

(eRM) is our reason-explicit version of (RM).  The antecedent
$\lnot\dot B^\psi\lnot\varphi$ is just
$lnot c_{\lnot\varphi}\colu\psi\lnot\varphi$.  Since the certificate
$c_{\lnot\varphi}$ will always stand in for some argument relevant to
$\lnot\varphi$, the antecedent $\lnot\dot B^\psi\lnot\varphi$ tells us
that $\varphi$ is consistent with the belief state obtained after
revision by $\psi$.  And (eRM) tells us that if this is so and $t$ is
a reason to believe $\chi$ after revision by $\psi$, then it follows
that $t$ is still a reason to believe $\chi$ even after we revise by
the conjunction $\psi\land\varphi$. Notice that the reason $t$ for
$\chi$ does not change; it is only the revision formula itself that
changes.

(eInc) is our reason-explicit version of (Inc); it tells us that that
a belief of $\varphi\to\psi$ is certified after revising by $\psi$ so
long as there is a reason $t$ for believing $\psi$ after revising by
$\psi\land\varphi$.  (eComm) tells us that reasons are invariant to
the order of conjuncts in revisions; this is the reason-explicit
analog of (Comm).  (ePI) and (eNI) are the reason-explicit versions of
(PI) and (NI), respectively; these tell us that all support and
negated support statements are certified. (eWCon) is the
reason-explicit analog of (WCon); it tells us that if $t$ is a reason
supporting a belief in $\bot$ after revision by $\psi$, then
$\lnot\psi$ must have been true.

(eA) is particular to $\JCDL$. This scheme tells us that if $t$
supports a belief in $\varphi$ conditional on $\psi$ and $\varphi$ is
also certified conditional on some other $\chi$, then $t$ must itself
support $\varphi$ conditional on $\chi$ as well.  This tells us that
if a reason supports a belief of $\varphi$ after some revision, it
does so no matter the particulars of the revision. Said another way,
reason support depends only on the statement supported (and not on the
revision).

(eMN) says that every derivable formula is ``certified'' (i.e.,
supported by its certificate). This corresponds to the
$\CDL$-principle (MN). But while all derivable principles are
certified, intuitively such certification omits a great deal of
information; in particular, it is not clear from which axioms a given
principle follows and how it is that it follows by way of the rules of
the theory.  Toward this end, it will be useful to consider a
restriction of (eMN) in which we apply this rule only to axioms,
possibly multiple times in a row. An axiom to which we apply (eMN)
zero or more times in a row will be called a ``possibly necessitated
axiom.'' We will see that if we remove (eMN) from the theory, then all
$\JCDL$-theorems can be derived from the possibly necessitated axioms
using (MP) as the only rule. The trick to this will be to eliminate
from $\JCDL$-derivations uses of (eMN) that are ``troublesome
necessitations'': these are the derivable consequences of (eMN) that
are not themselves possibly necessitated axioms. If we can show that
all such ``troublesome necessitations'' can be eliminated, then the
result follows.  This will be our task now.

The terms that can be formed from certificates of possibly
necessitated axioms using the Application operator ``$\,\cdot\,$''
will be called the ``logical terms.'' These terms play a special role:
every $\JCDL$-theorem $\varphi$ gives rise to a logical term $t$ that
supports $\varphi$ (i.e., $t\colu\psi\varphi$ is derivable for each
$\psi$).

\begin{definition}[Necessitations, logical terms]
  A \emph{necessitation} is a $\Lang_\JCDL$-formula of the form
  \begin{equation}
    \underbrace{\dot B^{\psi_n}\dot B^{\psi_{n-1}}
      \dot B^{\psi_{n-2}}\cdots\dot B^{\psi_1}}_{\text{zero or
        more of these}}\varphi
    \label{eq:N-axiom}
  \end{equation}
  for some integer $n\geq 0$. A \emph{possibly necessitated axiom} is
  a $\Lang_\JCDL$-formula of the form \eqref{eq:N-axiom} for which
  $n\geq 0$ and $\varphi$ is a $\JCDL$-axiom.  The set $\Term_\JCDL^L$
  of \emph{logical terms} is the smallest set that contains
  certificates $c_\varphi$ for each possibly necessitated axiom
  $\varphi$ and is closed under the term-forming operation
  $t,s\mapsto t\cdot s$.
\end{definition}

\begin{definition}[Troublesome necessitations, notation
  $\pi\vdash_\JCDL^n\varphi$]
  A $\JCDL$-\emph{derivation} is finite nonempty sequence of
  $\Lang_\JCDL$-formulas, each of which is either a $\JCDL$-axiom or
  follows by a $\JCDL$-rule from formulas occurring earlier in the
  sequence.  A \emph{line} of a $\JCDL$-derivation $\pi$ is an element
  of the sequence $\pi$.  A \emph{troublesome necessitation} is a line
  $\varphi$ of a $\JCDL$-derivation that neither is a possibly
  necessitated axiom nor follows from earlier lines by (MP).  Clearly,
  a troublesome necessitation must follow by applying (eMN) to an
  earlier line that is not itself a possibly necessitated axiom.  For
  $n\in\Nat$, we write $\pi\vdash_\JCDL^n\varphi$ to mean that $\pi$
  is a $\JCDL$-derivation that contains at most $n$ troublesome
  necessitations and whose last line is $\varphi$. For $n\in\Nat$, we
  write $\vdash_\JCDL^n\varphi$ to mean that there exists a
  $\JCDL$-derivation $\pi$ such that
  $\pi\vdash_\JCDL^n\varphi$. Obviously, $\vdash_\JCDL\varphi$ implies
  $\vdash_\JCDL^n\varphi$ for some $n\in\Nat$.
\end{definition}

The following lemma shows that every $\JCDL$-theorem can be derived
from the possibly necessitated axioms using (MP) as the only rule of
inference.

\begin{lemma}[Elimination of troublesome necessitations]
  \label{lemma:elim-t-necess}
  For each $\varphi\in\Lang_\JCDL$, we have:
  \begin{equation}
    \pi\vdash_\JCDL^n\varphi
    \quad\Rightarrow\quad
    \exists \pi_*\supseteq\pi,\;\;
    \pi_*\vdash_\JCDL^0\varphi\enspace.
  \end{equation}
\end{lemma}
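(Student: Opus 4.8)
The plan is to reduce the statement to an \emph{internalization} fact about derivations that already contain no troublesome necessitations, and then to run an induction on the number of troublesome necessitations in $\pi$ that removes them one at a time, always attacking the earliest one.

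First I would isolate the following internalization claim: whenever $\vdash_\JCDL^0\theta$, we also have $\vdash_\JCDL^0\dot B^\psi\theta$ for every $\psi$. I would prove this by induction on the structure of a $0$-troublesome derivation of $\theta$, which, having no troublesome necessitations, consists only of possibly necessitated axioms and lines obtained by (MP). In the base case $\theta$ is a possibly necessitated axiom; then $\dot B^\psi\theta$ is again a possibly necessitated axiom, since it simply carries one more $\dot B$-prefix, so it is derivable with no troublesome necessitations. In the step case $\theta$ follows by (MP) from earlier lines $\alpha\to\theta$ and $\alpha$; the induction hypothesis gives $0$-troublesome derivations of $\dot B^\psi(\alpha\to\theta)=c_{\alpha\to\theta}\colu\psi(\alpha\to\theta)$ and of $\dot B^\psi\alpha=c_\alpha\colu\psi\alpha$. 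I then simulate the (MP) step inside the belief operator using the axiom (eK) with $t=c_{\alpha\to\theta}$ and $s=c_\alpha$, which after two applications of (MP) yields $(c_{\alpha\to\theta}\cdot c_\alpha)\colu\psi\theta$; a final (MP) against the relevant instance of (eCert) then produces $\dot B^\psi\theta$. Every new line here is either an axiom or obtained by (MP), so concatenating the two sub-derivations with these lines is again a $0$-troublesome derivation. This is the standard justification-logic internalization idiom, and it is precisely the device that converts the bare necessitation of $\theta$ into an explicit application term $c_{\alpha\to\theta}\cdot c_\alpha$ supporting $\theta$, which (eCert) then collapses back to $\dot B^\psi\theta=c_\theta\colu\psi\theta$.

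With this in hand I would prove the lemma by induction on $n$. For $n=0$ take $\pi_*=\pi$. For $n>0$, let $\chi$ be the \emph{first} troublesome necessitation in $\pi$; say $\chi=\dot B^\psi\theta$, obtained by (eMN) from an earlier line $\theta$. Because $\chi$ is troublesome it is not a possibly necessitated axiom, so $\theta$ is not one either; and because $\chi$ is the earliest troublesome line, the entire prefix of $\pi$ up to $\theta$ contains no troublesome necessitations, i.e.\ it witnesses $\vdash_\JCDL^0\theta$. Applying the internalization claim yields a $0$-troublesome derivation $\sigma$ whose last line is $\chi=\dot B^\psi\theta$. I then insert $\sigma$ into $\pi$ immediately before the occurrence of $\chi$, obtaining a sequence $\pi'\supseteq\pi$ that still ends in $\varphi$. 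The crucial point is that $\sigma$ derives $\chi$ by a final (MP) step from lines that now sit earlier than the original $\chi$, so in $\pi'$ the original occurrence of $\chi$ \emph{follows from earlier lines by (MP)} and is therefore no longer troublesome. Since $\sigma$ introduces no troublesome necessitations, and since inserting lines earlier can never destroy an (MP) justification, every line that was non-troublesome in $\pi$ remains so; hence $\pi'$ has at most $n-1$ troublesome necessitations. By the induction hypothesis there is $\pi_*\supseteq\pi'\supseteq\pi$ with $\pi_*\vdash_\JCDL^0\varphi$, as required.

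The main obstacle to watch for is the temptation to eliminate $\chi$ by the one-step move of necessitating its (MP)-premises $\alpha\to\theta$ and $\alpha$ directly: those necessitations are themselves troublesome whenever the premises are not possibly necessitated axioms, so this naive replacement can \emph{increase} the count and the induction fails to terminate. Routing everything through the internalization claim avoids this, because the inserted $\sigma$ is guaranteed $0$-troublesome in its entirety. The only remaining delicate points are verifying that a possibly necessitated axiom stays one under an extra $\dot B$-prefix (the base case of internalization) and confirming that re-deriving $\chi$ just before its original occurrence genuinely turns that occurrence into a legitimate (MP) line rather than an (eMN) line.
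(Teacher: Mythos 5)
Your proof is correct, and its organization genuinely differs from the paper's. The paper proves the lemma by a lexicographic double induction: an outer induction on $n$ with an inner sub-induction on the length $|\pi|$, always attacking the shortest prefix of $\pi$ ending in a troublesome necessitation. When that troublesome necessitation $\dot B^\delta\theta_m$ is the last line of $\pi$, the paper performs precisely the move you warn against: it applies (eMN) to the two (MP)-premises $\theta_n\to\theta_m$ and $\theta_n$ of $\theta_m$, producing derivations that may each contain a fresh troublesome necessitation --- but these live in strictly shorter derivations, so the inner length induction absorbs them and the process terminates. Your caveat that this naive move ``fails to terminate'' is therefore overstated: it terminates provided one also measures derivation length, which is exactly what the paper's sub-induction does. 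What your decomposition buys is the elimination of that double induction: by first proving the standalone internalization fact that $\vdash_\JCDL^0\theta$ implies $\vdash_\JCDL^0\dot B^\psi\theta$ (by induction on the length of the $0$-troublesome derivation, using (eK) and (eCert), and using the fact that a possibly necessitated axiom stays one under an extra $\dot B$-prefix), your main argument needs only a single induction on $n$, splicing a ready-made $0$-troublesome derivation of $\chi$ in front of its original occurrence so that this occurrence becomes (MP)-justified. This also reverses the paper's dependency order: the paper deduces its Theorem Internalization (Theorem~\ref{theorem:internalization}) from the present lemma, whereas you prove a $0$-troublesome form of internalization first and derive the lemma from it. One further point in your favor: the final (eCert)-plus-(MP) step you include, converting $(c_{\alpha\to\theta}\cdot c_\alpha)\colu\psi\theta$ into $\dot B^\psi\theta=c_\theta\colu\psi\theta$, is genuinely needed, and the paper's displayed terminal derivation in fact stops at the application-term formula, leaving that last conversion tacit.
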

\begin{proof}
  See the appendix.
\end{proof}

``Theorem Internalization'' is a property of Justification Logics
whereby every theorem $\varphi$ of the logic is witnessed by
supporting term. Since certificates trivially support $\JCDL$-theorems
by (eMN), the usual formulation of Theorem Internalization is
trivialized in our setting. However, we can prove a stronger variant:
every $\JCDL$-theorem $\varphi$ is witnessed by a \emph{logical}
supporting term. This stronger version tells us that all
$\JCDL$-theorems are witnessed by terms that refer only to possibly
necessitated axioms and combinations of these using rule (MP).

\begin{theorem}[$\JCDL$ Theorem Internalization]
  \label{theorem:internalization}
  For each $\psi\in\Lang_{\JCDL}$, we have:
  \begin{center}
    $\vdash_{\JCDL}\varphi$ \quad$\Rightarrow$\quad
    $\exists t\in\Term_\JCDL^L$,
    ${}\vdash_{\JCDL}t\colu\psi\varphi$\enspace.
  \end{center}
\end{theorem}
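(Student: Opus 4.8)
The plan is to first pass to a derivation of $\varphi$ that contains no troublesome necessitations, and then to transform such a derivation line-by-line into a logical term that supports its last line. Suppose $\vdash_\JCDL\varphi$. Then $\vdash_\JCDL^n\varphi$ for some $n\in\Nat$, so by the Elimination of troublesome necessitations (Lemma~\ref{lemma:elim-t-necess}) there is a derivation $\pi$ with $\pi\vdash_\JCDL^0\varphi$. By definition of $\vdash_\JCDL^0$, every line of $\pi$ is either a possibly necessitated axiom or follows from two earlier lines by (MP). Fix the given $\psi\in\Lang_\JCDL$. I will show, by induction on the position of a line in $\pi$, that for every line $\theta$ there is a logical term $t_\theta\in\Term_\JCDL^L$ with $\vdash_\JCDL t_\theta\colu\psi\theta$; applying this to the last line $\varphi$ then yields the claim.

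For the axiom case, let $\theta$ be a possibly necessitated axiom. Then $\theta$ is $\JCDL$-derivable: its underlying axiom is a one-line derivation, and the prefixed $\dot B$-operators are produced by repeated applications of (eMN). Hence one further application of (eMN) with the fixed conditional $\psi$ gives $\vdash_\JCDL\dot B^\psi\theta$, which, unfolding the abbreviation $\dot B^\psi\theta$ for $c_\theta\colu\psi\theta$, is exactly $\vdash_\JCDL c_\theta\colu\psi\theta$. Since $c_\theta$ is the certificate of a possibly necessitated axiom, it belongs to $\Term_\JCDL^L$, so $t_\theta\coloneqq c_\theta$ works.

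For the (MP) case, suppose $\theta$ follows from earlier lines $\rho$ and $\rho\to\theta$. By the induction hypothesis there are logical terms $s$ and $t$ with $\vdash_\JCDL s\colu\psi\rho$ and $\vdash_\JCDL t\colu\psi(\rho\to\theta)$. Instantiating (eK) with $\varphi_1\coloneqq\rho$ and $\varphi_2\coloneqq\theta$ gives the theorem $t\colu\psi(\rho\to\theta)\to(s\colu\psi\rho\to(t\cdot s)\colu\psi\theta)$, so two applications of (MP) yield $\vdash_\JCDL(t\cdot s)\colu\psi\theta$. As $\Term_\JCDL^L$ is closed under the application operation $t,s\mapsto t\cdot s$, the term $t_\theta\coloneqq t\cdot s$ is again a logical term, which completes the induction.

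I expect essentially all of the real difficulty to have been absorbed into Lemma~\ref{lemma:elim-t-necess}. Without that reduction a derivation could contain a necessitation $\dot B^\chi\sigma$ of a non-axiom $\sigma$, and the certificate $c_{\dot B^\chi\sigma}$ of such a line need not be a logical term, so there would be no canonical way to track it through the application structure of the witnessing term. Once troublesome necessitations are gone, the induction has only the two cases above, and the two generators of $\Term_\JCDL^L$---certificates of possibly necessitated axioms, and the application operator---match these two cases exactly. The only points needing care are that every possibly necessitated axiom is indeed derivable (so that (eMN) is applicable in the base case) and that the constructed term remains inside $\Term_\JCDL^L$; both follow immediately from the relevant definitions.
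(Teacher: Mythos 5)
Your proof is correct and follows essentially the same route as the paper's: both reduce to derivations free of troublesome necessitations via Lemma~\ref{lemma:elim-t-necess}, then induct on the derivation, handling possibly necessitated axioms with their certificates and (MP) steps with (eK) and the application operator.
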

\begin{proof}
  See the appendix.
\end{proof}

\subsection{Relationship to \texorpdfstring{$\CDL$}{CDL}}

In our motivation of $\JCDL$, we have described the formula
$t\colu\psi\varphi$ as an analog of a corresponding
$\Lang_\CDL$-formula $B^\psi\varphi$. Up to this point, the idea was
mere intuition. We now make this intuition precise by defining two
mappings.  The first, called ``forgetful projection,'' maps
$\Lang_\JCDL$-formulas to $\Lang_\CDL$-formulas by replacing each
``$\,t\colu\psi\,$'' prefix by the prefix ``$\,B^\psi\,$''.  The
second, called ``trivial realization,'' maps $\Lang_\CDL$-formulas to
$\Lang_\JCDL$-formulas by replacing each prefix ``$\,B^\psi\,$'' by
the prefix ``$\,\dot B^\psi\,$''. We will see that these operations
preserve derivability of schemes.

\begin{definition}[The forgetful projection]
  \label{definition:projection}
  The \emph{forgetful projection} is the function
  \[
  (-)^\circ:\Lang_{\JCDL}\to\Lang_{\CDL}
  \]
  defined by:
  \begin{align*}
    q^\circ 
    &\coloneqq 
      q \quad\text{for } q\in\Prop\cup\{\bot\} 
    \\
    (\varphi\to\psi)^\circ 
    &\coloneqq 
      \varphi^\circ\to\psi^\circ 
    \\
    (t\colu\psi\varphi)^\circ 
    &\coloneqq
      B^{\psi^\circ}\varphi^\circ
  \end{align*}
  The \emph{forgetful projection} of $\psi\in\Lang_\JCDL$ is the
  $\psi^\circ\in\Lang_\CDL$. Extend the function $(-)^\circ$ to sets
  $\Gamma\subseteq\Lang_\JCDL$ by
  $\Gamma^\circ\coloneqq\{\varphi^\circ\mid\varphi\in\Gamma\}$.  We
  further extend the function $(-)^\circ$ to schemes.  In particular,
  let $\calS$ be a fixed set of schematic variables (i.e.,
  ``metavariables'' or placeholders for formulas) that includes all
  schematic variables used in this paper and that has cardinality
  $\min\{|\Prop|,\omega\}$.  For each of our languages
  $\Lang\in\{\Lang_\CDL,\Lang_\JCDL\}$, let $\Lang(\calS)$ be the set
  of formula schemes that can be formed using the formula formation
  grammar of $\Lang$ but with schematic variables in $\calS$ used in
  place of propositional letters in $\Prop$.  Define
  $\Term_\JCDL(\calS)$ similarly. Using $\Phi$ and $\Psi$ as
  metavariables ranging over members of $\Lang_\JCDL(\calS)$ and $T$
  as a metavariable ranging over members of $\Term_\JCDL(\calS)$, let
  \begin{align*}
    X^\circ 
    &\coloneqq 
      X \quad\text{for } X\in\calS\cup\{\bot\} 
    \\
    (\Phi\to\Psi)^\circ 
    &\coloneqq 
      \Phi^\circ\to\Psi^\circ 
    \\
    (T\colu\Psi\Phi)^\circ 
    &\coloneqq
      B^{\Psi^\circ}\Phi^\circ
  \end{align*}
  Extend the function $(-)^\circ$ to sets
  $\Gamma\subseteq\Lang_\JCDL(\calS)$ by
  $\Gamma^\circ\coloneqq\{\Phi^\circ\mid\Phi\in\Gamma\}$.
\end{definition}

\begin{definition}[Realizations and the trivial realization]
  \label{definition:realization}
  A \emph{realization} of a formula $\varphi\in\Lang_{\CDL}$ is a
  formula $\psi\in\Lang_{\JCDL}$ for which $\psi^\circ=\varphi$ and
  $\vdash_\JCDL\psi$.  The \emph{trivial realization} is the function
  $(-)^t:\Lang_{\CDL}\to\Lang_{\JCDL}$ defined by:
  \begin{align*}
    q^t 
    &\coloneqq 
      q \quad\text{ for } q\in\Prop\cup\{\bot\} 
    \\
    (\varphi\to\psi)^t 
    &\coloneqq 
      \varphi^t\to\psi^t 
    \\
    (B^\psi\varphi)^t 
    &\coloneqq
      \dot B^{\psi^t}\varphi^t
  \end{align*}
  We extend the function $(-)^t$ to sets $\Gamma\subseteq\Lang_\CDL$
  by $\Gamma^t\coloneqq\{\varphi^t\mid\varphi\in\Gamma\}$.  As in
  Definition~\ref{definition:projection} and using the notation from
  that definition, we extend the function $(-)^t$ to schemes:
    \begin{align*}
    X^t 
    &\coloneqq 
      X \quad\text{ for } X\in\calS\cup\{\bot\} 
    \\
    (\Phi\to\Psi)^t 
    &\coloneqq 
      \Phi^t\to\Psi^t 
    \\
    (B^\Psi\Phi)^t 
    &\coloneqq
      \dot B^{\Psi^t}\Phi^t
  \end{align*}
  Finally, we apply $(-)^t$ to sets $\Gamma\subseteq\Lang_\CDL(\calS)$
  by defining $\Gamma^t\coloneqq\{\Phi^t\mid \Phi\in\Gamma\}$.  For
  some object $Z$ in the domain of $(-)^t$ we say that $Z^t$ is the
  \emph{trivial realization} of $Z$.
\end{definition}

The trivial realization of a formula or scheme of $\CDL$ is obtained
by replacing each ``$B\,$'' with ``$\dot B\,$''.  Note that while the word
``realization'' in the phrase ``trivial realization'' suggests that
the trivial realization of a formula or scheme is indeed a realization
(i.e., we obtain something derivable in $\JCDL$), this does not come
automatically (i.e., by definition) because the trivial realization is
a mere syntactic translation and so it must be proved that this
translation satisfies the requisite property before the conclusion can
be drawn.  However, the following theorem guarantees that the trivial
realization is indeed a realization. The theorem also tells us that
the forgetful projection maps $\JCDL$-theorems to $\CDL$-theorems.

\begin{theorem}[Projection and realization for schemes]
  \label{theorem:realization}
  We use the notation from Definitions~\ref{definition:projection} and
  \ref{definition:realization}.  For each
  $\Gamma\cup\{\varphi\}\subseteq\Lang_\JCDL(\calS)$ and each
  $\Delta\cup\{\psi\}\subseteq\Lang_\CDL(\calS)$, we have:
  \begin{enumerate}
  \item \label{i:projection} $\Gamma\vdash_\JCDL\varphi$ implies
    $\Gamma^\circ\vdash_{\CDL}\varphi^\circ$, and
    
  \item \label{i:t-realization} $\Delta\vdash_{\CDL}\psi$ implies
    $\Delta^t\vdash_\JCDL\psi^t$.
  \end{enumerate}
\end{theorem}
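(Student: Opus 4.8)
The plan is to prove both implications by induction on the length of the witnessing derivation, reducing each to two routine checks: that the relevant translation sends axioms to derivable formulas of the target theory, and that it commutes with the rules. Since both $\vdash_\JCDL$ and $\vdash_\CDL$ are Hilbert-style consequence relations in which a derivation is a finite sequence whose every line is an axiom, a hypothesis, or a consequence of earlier lines by a rule, the inductive hypothesis for part~(\ref{i:projection}) will be that every line $\theta$ of the given $\JCDL$-derivation satisfies $\Gamma^\circ\vdash_\CDL\theta^\circ$, and dually for part~(\ref{i:t-realization}). Lines that are hypotheses project (resp.\ realize) into $\Gamma^\circ$ (resp.\ $\Delta^t$) and are handled trivially; the propositional scheme (CL) is preserved in both directions because both translations commute with $\to$ and fix $\bot$ and propositional letters, so a tautology maps to a tautology of the target theory. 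Thus the work lies entirely in the non-classical axioms and the belief-forming rules.

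For part~(\ref{i:projection}) I first note that both $t\colu\psi\varphi$ and its certificate form $\dot B^\psi\varphi = c_\varphi\colu\psi\varphi$ share the forgetful projection $B^{\psi^\circ}\varphi^\circ$. Running through Table~\ref{table:JCDL}, the projection of (eK) is exactly (K), and the projections of (eSucc), (eKM), (eRM), (eInc), (eComm), (ePI), (eNI), (eWCon) are exactly (Succ), (KM), (RM), (Inc), (Comm), (PI), (NI), (WCon). The remaining axioms (eCert), (eSum), and (eA) collapse under $(-)^\circ$ to propositional tautologies such as $B^\psi\varphi\to B^\psi\varphi$ and $(B^\psi\varphi\lor B^\psi\varphi)\to B^\psi\varphi$, hence are (CL)-derivable in $\CDL$. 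For the rules, (MP) is preserved because $(-)^\circ$ commutes with $\to$, and (eMN) projects to the rule (MN), since $(\dot B^\psi\varphi)^\circ = B^{\psi^\circ}\varphi^\circ$. This closes the induction.

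For part~(\ref{i:t-realization}) the trivial realization replaces each prefix $B^\psi$ by $\dot B^\psi$. Most $\CDL$-axioms of Table~\ref{table:CDL} realize to a single instance of the matching $\JCDL$-axiom obtained by instantiating the term to the appropriate certificate: (Succ) to an instance of (eSucc); (KM) and (WCon) to instances of (eKM) and (eWCon) with $t=c_\bot$; (RM), (Inc), (Comm) to instances of (eRM), (eInc), (eComm) with $t=c_{\chi^t}$; and (PI), (NI) to instances of (ePI), (eNI) with $t=c_{\chi^t}$. The one axiom that is not a direct instance is (K): its realization $\dot B^{\psi^t}(\varphi_1^t\to\varphi_2^t)\to(\dot B^{\psi^t}\varphi_1^t\to\dot B^{\psi^t}\varphi_2^t)$ is derived by taking the instance of (eK) with $t=c_{(\varphi_1^t\to\varphi_2^t)}$ and $s=c_{\varphi_1^t}$ to obtain $(t\cdot s)\colu{\psi^t}\varphi_2^t$, then applying (eCert) to collapse this to $\dot B^{\psi^t}\varphi_2^t$, and finishing by classical reasoning. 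Again (MP) is preserved because $(-)^t$ commutes with $\to$, and (MN) realizes to an application of (eMN) to $\varphi^t$.

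The main obstacle—indeed the only case needing more than pattern matching—is this (K)/(eK) step, where the realization is a short derivation rather than a single axiom instance and must be routed through (eCert) to discharge the application term $t\cdot s$. A secondary point requiring care is the syntactic bookkeeping in the (PI) and (NI) cases: because $\dot B^\psi\theta$ is \emph{by definition} the certificate formula $c_\theta\colu\psi\theta$, one must verify that the certificate attached to the inner modality in the realized formula is literally the certificate produced by instantiating (ePI)/(eNI), so that the realized formula and the axiom instance are the same string rather than merely equivalent. Finally, I would observe that both directions presuppose the same convention for (eMN)/(MN) in derivations-from-hypotheses; under the uniform reading in which a rule may be applied to any earlier line, the necessitation cases go through verbatim.
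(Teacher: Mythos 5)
Your proof is correct and follows essentially the same route as the paper's: induction on the length of the derivation, with each matching $\JCDL$/$\CDL$ axiom pair projecting or realizing to one another, the unmatched schemes (eCert), (eSum), (eA) collapsing under $(-)^\circ$ to classical tautologies, and the realization of (K) being the one case that requires a short derivation routed through (eK) and (eCert) rather than a single axiom instance. The only divergence is cosmetic: you realize (KM) and (WCon) as direct schematic instances of (eKM) and (eWCon) with $t=c_\bot$, whereas the paper handles them ``similar to the argument for $\text{(K)}^t$'' via (eCert) as well --- both are valid, and your observation that they are literal instances (like your careful check that the (ePI)/(eNI) instances agree string-for-string with the realized formulas) is in fact the tighter reading.
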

\begin{proof}
  See the appendix.
\end{proof}

Theorem~\ref{theorem:realization} tells us that $\JCDL$ really is an
explicit analog of $\CDL$: every $\JCDL$-derivable statement gives
rise to a $\CDL$-derivable statement (by forgetful projection), and
every $\CDL$-derivable statement gives rise to a $\JCDL$-derivable
statement (by trivial realization).  This link makes precise our
intuition that the $\Lang_\CDL$-formula $B^\psi\varphi$ should
correspond to the $\Lang_\JCDL$-formula $\dot B^\psi\varphi$.

Though Theorem~\ref{theorem:realization} is stated with respect to
schemes, we have the analogous result for formulas as well.

\begin{theorem}[Projection and realization for formulas]
  \label{theorem:realization-formulas}
  We use the notation from Definitions~\ref{definition:projection} and
  \ref{definition:realization}.  For each
  $\Gamma\cup\{\varphi\}\subseteq\Lang_\JCDL$ and each
  $\Delta\cup\{\psi\}\subseteq\Lang_\CDL$, we have:
  \begin{enumerate}
  \item \label{i:projection-form} $\Gamma\vdash_\JCDL\varphi$ implies
    $\Gamma^\circ\vdash_{\CDL}\varphi^\circ$, and
    
  \item \label{i:t-realization-form} $\Delta\vdash_{\CDL}\psi$ implies
    $\Delta^t\vdash_\JCDL\psi^t$.
  \end{enumerate}
\end{theorem}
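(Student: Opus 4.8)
The plan is to deduce the formula-level statement from its scheme-level counterpart, Theorem~\ref{theorem:realization}, by renaming propositional letters to schematic variables. The only difference between the two results is whether atoms come from $\Prop$ or from $\calS$; since $(-)^\circ$ and $(-)^t$ are given by the same clauses in both cases, the substance is merely to pass between the two atom sets. The key enabling facts are that every $\JCDL$- and $\CDL$-derivation is finite---so only finitely many propositional letters occur in it---and that the hypothesis $|\calS|=\min\{|\Prop|,\omega\}$ supplies enough schematic variables to name them: if a derivation mentions letters $p_1,\dots,p_k$, then since $k<\omega$ and $k\le|\Prop|$ we have $k\le|\calS|$, so we may fix distinct $X_1,\dots,X_k\in\calS$ and let $\rho$ be the injective renaming $p_i\mapsto X_i$.

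For part~\eqref{i:projection-form}, suppose $\Gamma\vdash_\JCDL\varphi$ with witnessing derivation $\pi$, using premises $\gamma_1,\dots,\gamma_m\in\Gamma$ and propositional letters $p_1,\dots,p_k$. First I would apply the renaming $\rho$ throughout $\pi$. This turns $\pi$ into a \emph{scheme}-derivation: each axiom instance is sent to an instance of the same scheme over $\calS$, (MP) is visibly preserved, and (eMN) is preserved because $\rho(\dot B^\psi\chi)=\rho(c_\chi\colu\psi\chi)=c_{\rho\chi}\colu{\rho\psi}\rho\chi=\dot B^{\rho\psi}\rho\chi$. Hence $\{\rho\gamma_1,\dots,\rho\gamma_m\}\vdash_\JCDL\rho\varphi$ at the scheme level, and Theorem~\ref{theorem:realization}.\eqref{i:projection} gives $\{\rho\gamma_1,\dots,\rho\gamma_m\}^\circ\vdash_{\CDL}(\rho\varphi)^\circ$. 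Since $(-)^\circ$ touches only the $t\colu\psi$-structure of a formula while $\rho$ touches only its atoms (including certificate subscripts, which $(-)^\circ$ discards), the two commute: writing $\rho'$ for the induced renaming on $\Lang_\CDL$, this reads $\rho'(\{\gamma_1,\dots,\gamma_m\}^\circ)\vdash_{\CDL}\rho'(\varphi^\circ)$. Finally I would apply the inverse substitution $X_i\mapsto p_i$ and invoke closure of $\CDL$-derivability under substitution to obtain $\{\gamma_1,\dots,\gamma_m\}^\circ\vdash_{\CDL}\varphi^\circ$; as $\{\gamma_1,\dots,\gamma_m\}^\circ\subseteq\Gamma^\circ$, we are done.

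Part~\eqref{i:t-realization-form} is symmetric: rename the propositional letters of a $\CDL$-derivation of $\psi$ from $\Delta$ to schematic variables, apply Theorem~\ref{theorem:realization}.\eqref{i:t-realization} to obtain the scheme-level conclusion (using that $(-)^t$ likewise commutes with renaming, now descending into the certificate subscripts created by $(-)^t$), and then substitute the letters back using closure of $\JCDL$-derivability under substitution.

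The two auxiliary facts that the reduction rests on are (i) closure of $\JCDL$- and $\CDL$-derivability under uniform substitution of formulas for atoms, and (ii) commutation of $(-)^\circ$ and $(-)^t$ with substitution; both are proved by routine induction on formulas and on derivation length. I expect the only point needing genuine care to be (i) for $\JCDL$: substitution must be defined so that it descends into the \emph{certificate subscripts} of terms (so that $\sigma(c_\chi)=c_{\sigma\chi}$, and likewise through ``$\cdot$'' and ``$+$''), and one must then verify that every scheme of Table~\ref{table:JCDL} that mentions certificates---notably (eCert), (eSucc), (ePI), (eNI)---is sent to an instance of itself, and that (eMN) is preserved as computed above. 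Granting (i) and (ii), the argument goes through uniformly, and the cardinality hypothesis on $\calS$ is used only to furnish the finitely many fresh variables $X_1,\dots,X_k$. (Alternatively, one could simply rerun the induction proving Theorem~\ref{theorem:realization} verbatim with $\Prop$ in place of $\calS$; the reduction above has the advantage of reusing that theorem as a black box.)
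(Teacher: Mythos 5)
Your proposal is correct and follows essentially the same route as the paper: the paper's own proof is precisely the one-sentence reduction you carry out (rename the finitely many propositional letters to schematic variables, apply Theorem~\ref{theorem:realization}, then instantiate the resulting derivable schemes by substituting the letters back). Your additional care about the cardinality of $\calS$, the commutation of $(-)^\circ$ and $(-)^t$ with renaming, and substitution descending into certificate subscripts just makes explicit what the paper leaves tacit.
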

\begin{proof}
  Replace each distinct propositional letter with a distinct schematic
  variable, apply Theorem~\ref{theorem:realization}, and take the
  instances of the resulting derivable schemes obtained by
  substituting the original propositional variables back into their
  corresponding positions.
\end{proof}

Using Theorem~\ref{theorem:realization}, the $\CDL$-principles from
Theorem~\ref{theorem:CDL-theorems} translate into $\JCDL$-principles.

\begin{theorem}[$\JCDL$-theorems]
  \label{theorem:JCDL-theorems}
  The following schemes of (eCut), \emph{Explicit Cautious
    Monotonicity} (eCM), (eTaut), (eAnd), (eOr), \emph{Explicit
    Positive Reduction} (ePR), and \emph{Explicit Negative Reduction}
  (eNR) are all derivable in $\JCDL$:
  \begin{align*}
    \text{(eCut)} 
    &\quad
      \dot B^\psi\varphi\to(\dot B^{\psi\land\varphi}\chi\to \dot B^\psi\chi)
    \\
    \text{(eCM)} 
    &\quad
      \dot B^\psi\varphi\to(\dot B^\psi\chi\to \dot B^{\psi\land\varphi}\chi)
    \\
    \text{(eTaut)}
    &\quad
      \dot B\varphi \leftrightarrow \dot B^\top\varphi
    \\
    \text{(eAnd)}
    &\quad
      \dot B^\psi\varphi_1\to
      (\dot B^\psi\varphi_2\to \dot B^\psi(\varphi_1\land\varphi_2))
    \\
    \text{(eOr)}
    &\quad
      \dot B^{\psi_1}\varphi\to
      (\dot B^{\psi_2}\varphi\to \dot B^{\psi_1\lor\psi_2}\varphi)
    \\
    \text{(ePR)}
    &\quad
      \dot B^\varphi \dot B^\psi\chi
      \leftrightarrow
      ( \dot B^\psi\bot\lor \dot B^\psi\chi)
    \\
    \text{(eNR)}
    &\quad
      \dot B^\varphi\lnot \dot B^\psi\chi
      \leftrightarrow
      ( \dot B^\varphi\bot\lor\lnot \dot B^\psi\chi)
  \end{align*}
  Also, the following rules of \emph{Explicit (Left) Logical
    Equivalence} (eLE), \emph{Explicit Right Weakening} (eRW), and
  \emph{Explicit Supraclassicality} (eSC) are all derivable in $\JCDL$:
  \begin{center}
    \AXC{$\psi\leftrightarrow\psi'$}
    \RLs{(eLE)}
    \UIC{$\dot B^{\psi}\chi\leftrightarrow \dot B^{\psi'}\chi$} 
    \DP  
    \qquad
    \AXC{$\chi\to\chi'$}
    \RLs{(eRW)}
    \UIC{$\dot B^\psi\chi\to \dot B^\psi\chi'$} 
    \DP  
    \qquad
    \AXC{$\psi\to\chi$}
    \RLs{(eSC)}
    \UIC{$\dot B^\psi\chi$}
    \DP
  \end{center}
\end{theorem}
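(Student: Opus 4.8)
The plan is to obtain the entire theorem as a corollary of the trivial-realization half of Theorem~\ref{theorem:realization} applied to the already-established $\CDL$-theorems of Theorem~\ref{theorem:CDL-theorems}. The key observation is that each scheme and rule listed here is, after unfolding the abbreviations $B=B^\top$ and $\dot B=\dot B^\emptyset$, exactly the trivial realization of the identically-named principle of Theorem~\ref{theorem:CDL-theorems}: (eCut) is $(\text{Cut})^t$, (eCM) is $(\text{CM})^t$, and so on through (eNR), while (eLE), (eRW), (eSC) are the trivial realizations of (LE), (RW), (SC). So the whole proof reduces to (i) checking this syntactic matching and (ii) invoking Theorem~\ref{theorem:realization}\eqref{i:t-realization}.

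For the seven schemes I would argue as follows. By Theorem~\ref{theorem:CDL-theorems} each of (Cut), (CM), (Taut), (And), (Or), (PR), (NR) is a $\CDL$-theorem, i.e.\ $\emptyset\vdash_\CDL$ it. Since $(-)^t$ fixes schematic variables and $\bot$ and commutes with $\to$ by definition, it commutes with every Boolean abbreviation ($\lnot,\land,\lor,\top,\leftrightarrow$), and it sends each prefix $B^\Psi$ to $\dot B^{\Psi^t}$. Hence the trivial realization of each $\CDL$ scheme above is precisely the correspondingly-named explicit scheme (for instance $(\text{Cut})^t=\dot B^\psi\varphi\to(\dot B^{\psi\land\varphi}\chi\to\dot B^\psi\chi)$, which is (eCut)). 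Applying Theorem~\ref{theorem:realization}\eqref{i:t-realization} with $\Delta=\emptyset$ then gives $\vdash_\JCDL$ of each explicit scheme.

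For the three rules I would recast each derivable $\CDL$ rule as a hypothetical derivation from its (purely Boolean) premise, e.g.\ $\{\psi\leftrightarrow\psi'\}\vdash_\CDL B^\psi\chi\leftrightarrow B^{\psi'}\chi$ for (LE); since a Boolean premise is its own trivial realization, Theorem~\ref{theorem:realization}\eqref{i:t-realization} with $\Delta$ the singleton premise yields the matching explicit rule. The main obstacle is exactly here and at (eTaut). First, this use of the $\Delta$-form of realization relies on the $\CDL$ rules being genuine derivations-from-assumptions, whereas their textbook derivations necessitate the premise; one must therefore confirm that the $\vdash$-from-hypotheses convention licenses (MN)/(eMN) on assumptions, or else derive each explicit rule directly by mimicking the $\CDL$ proof with the explicit axioms --- for example (eRW) from (eMN), (eK) (instantiated with $s=c_\chi$) and (eCert), then (eSC) from (eSucc) and (eRW), and (eLE) from (eComm), (eInc), (eRM). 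Second, unlike the other six schemes, (eTaut) involves the empty conditional $\dot B=\dot B^\emptyset$ rather than $\dot B^\top$, so the matching step requires verifying that $\emptyset$ behaves as $\top$ under $(-)^t$ (the empty conjunction being $\top$); this bookkeeping point about the empty revision is the place I expect to need an explicit check, or a short direct derivation of (eTaut).
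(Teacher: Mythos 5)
Your proposal is correct and is essentially the paper's own proof, which consists of exactly the two steps you identify: invoke Theorem~\ref{theorem:CDL-theorems} and apply Theorem~\ref{theorem:realization}\eqref{i:t-realization}, noting that each explicit scheme and rule is the trivial realization of its $\CDL$ counterpart. Your two flagged obstacles dissolve in the paper's setting: the notion of derivation from hypotheses used in Theorem~\ref{theorem:realization} is the global one (its proof has (MN)/(eMN) induction steps applied to formulas derived from hypotheses, so the rules (LE), (RW), (SC) transfer as hypothesis-derivations without needing direct re-derivations), and the superscript $\emptyset$ in the abbreviation $\dot B\varphi=\dot B^\emptyset\varphi$ is just the paper's notation for the trivial (empty, i.e.\ tautological) conditional playing the role of $\top$, so $(\text{Taut})^t$ does match (eTaut) up to this abbreviation.
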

\begin{proof}
  Apply Theorems~\ref{theorem:CDL-theorems} and
  \ref{theorem:realization}\eqref{i:t-realization}.
\end{proof}

\subsection{Semantics}

One of the main possible worlds semantics for Justification Logic is
the semantics due to Fitting (see \cite{ArtFit12:SEP} for details).
Here we adapt the traditional Fitting semantics for use in our
language $\Lang_\JCDL$.

\begin{definition}[Fitting models]
  \label{definition:F-model}
  A \emph{Fitting model} is a structure $M={(W,\leq,V,A)}$ for which
  $(W,\leq,V)$ is a locally well-ordered plausibility model and $A$ is
  an \emph{admissibility function\/}: a function
  \[
  A:(\Term_\JCDL\times\Lang_\JCDL)\to\wp(W)
  \]
  that maps each term-formula pair $(t,\varphi)$ to a set
  $A(t,\varphi)\subseteq W$ of worlds subject to the following
  restrictions:
  \begin{itemize}
  \item Certification: $A(c_\varphi,\varphi)=W$,

    which says formulas are certified by their certificates;

  \item Application:
    $A(t,\varphi_1\to\varphi_2)\cap A(s,\varphi_1) \subseteq
    A(t\cdot s,\varphi_2)$,

    which says the Application operator encodes instances of (MP);

  \item Sum:
    $A(t,\varphi)\cup A(s,\varphi)\subseteq
    A(t+s,\varphi)$,

    which says the Sum operator encodes support aggregation without
    logical inference; and

  \item Admissibility Indefeasibility: if $x\in A(t,\varphi)$ and
    $y\in\cc(x)$, then $y\in A(t,\varphi)$,

    which says admissibility is constant within each connected
    component. Using the notion of ``knowledge'' from
    Remark~\ref{remark:knowledge}, this tells us that the agent knows
    her admissibility function.
  \end{itemize}
  The notion of \emph{pointed Fitting model} is similar to the
  corresponding definition found in
  Definition~\ref{definition:smooth}; we also apply the terminology
  from that definition to Fitting models in the obvious way.
\end{definition}

If $A$ is an admissibility function, then $w\in A(t,\varphi)$ says
that, from the perspective of world $w$, term $t$ has the proper
``syntactic shape'' to be a reason in support of $\varphi$. This does
not, however, guarantee that $t$ does indeed support $\varphi$.  For
this we shall require something more.

\begin{definition}[$\Lang_\JCDL$-truth]
  \label{definition:JCDL-truth}
  Let $M=(W,\leq,V,A)$ be a Fitting model.  We extend the binary
  satisfaction relation $\models$ from
  Definition~\ref{definition:CDL-truth} to one between pointed Fitting
  models $(M,w)$ (written without surrounding parentheses) and
  $\Lang_\JCDL$-formulas and we extend the function $\sem{-}$ from
  Definition~\ref{definition:CDL-truth} to include a function
  $\sem{-}:\Lang_\JCDL\to\wp(W)$ as follows.
  \begin{itemize}
  \item $\sem{\varphi}_M\coloneqq\{v\in W\mid
    M,v\models\varphi\}$. The subscript $M$ may be suppressed.

  \item $M,w\not\models\bot$.
    
  \item $M,w\models p$ iff $p\in V(w)$ for $p\in\Prop$.

  \item $M,w\models\varphi\imp\psi$ iff $M,w\not\models\varphi$ or
    $M,w\models\psi$.

  \item $M,w\models t\colu\psi\varphi$ iff $w\in A(t,\varphi)$
    and
    \begin{equation}
      \forall x\in\cc(w):\quad
      x^\down\cap\sem{\psi}=\emptyset
      \quad\text{or}\quad
      \exists y\in x^\down\cap\sem{\psi}:
      y^\down\cap\sem{\psi}\subseteq\sem{\varphi}
      \enspace.
      \label{eq:JCDL-truth}
    \end{equation}
  \end{itemize}
\end{definition}

So to have $t\colu\psi\varphi$ true at a world $w$, we must have two
things.  First, from the perspective of $w$, term $t$ must have the
correct ``syntactic shape'' for an argument in support of $\varphi$;
that is, we must have $w\in A(t,\varphi)$.  Second, we must satisfy
the condition \eqref{eq:JCDL-truth}, which is the same condition we
had for truth of a $\Lang_\CDL$-formula $B^\psi\varphi$.  So, taken
together, to have $t\colu\psi\varphi$ true at world $w$ means that $t$
has the ``shape'' of an argument for $\varphi$ and the agent believes
$\varphi$ after revising her beliefs by $\psi$. The following theorem
states this precisely.

\begin{theorem}[$\JCDL$-truth in terms of belief formulas]
  \label{theorem:JCDL-truth-B}
  Let $M=(W,\leq,V,A)$ be a Fitting model.
  \begin{align*}
    M,w\models t\colu\psi\varphi 
    &\qquad\text{iff}\qquad 
      w\in A(t,\varphi) 
      \text{ and } 
      M,w\models\dot B^\psi\varphi
      \enspace.
  \end{align*}
\end{theorem}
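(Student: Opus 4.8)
The plan is to prove the biconditional by unfolding the definition of $\Lang_\JCDL$-truth (Definition~\ref{definition:JCDL-truth}) on both sides and observing that the admissibility clause attached to $\dot B^\psi\varphi$ is automatically satisfied because of the Certification restriction on the admissibility function.

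First I would unfold the left-hand side. By Definition~\ref{definition:JCDL-truth}, $M,w\models t\colu\psi\varphi$ holds iff $w\in A(t,\varphi)$ together with the plausibility condition \eqref{eq:JCDL-truth}. Write $\Phi$ for this plausibility condition, namely the statement that for all $x\in\cc(w)$ we have $x^\down\cap\sem{\psi}=\emptyset$ or there is $y\in x^\down\cap\sem{\psi}$ with $y^\down\cap\sem{\psi}\subseteq\sem{\varphi}$. The crucial observation is that $\Phi$ depends only on $w$, $\psi$, and $\varphi$, and not at all on the term $t$. Thus the left-hand side is equivalent to the conjunction of $w\in A(t,\varphi)$ with $\Phi$.

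Next I would handle the occurrence of $\dot B^\psi\varphi$ on the right-hand side. Recalling that $\dot B^\psi\varphi$ abbreviates $c_\varphi\colu\psi\varphi$, I apply Definition~\ref{definition:JCDL-truth} again to obtain that $M,w\models\dot B^\psi\varphi$ iff $w\in A(c_\varphi,\varphi)$ and $\Phi$. Here the key step is the Certification restriction from Definition~\ref{definition:F-model}, which gives $A(c_\varphi,\varphi)=W$; since $w\in W$, the clause $w\in A(c_\varphi,\varphi)$ is vacuously true. Hence $M,w\models\dot B^\psi\varphi$ is equivalent to $\Phi$ alone. Substituting this back, the right-hand side becomes the conjunction of $w\in A(t,\varphi)$ with $\Phi$, which is exactly the reformulation of the left-hand side obtained in the previous step, so the biconditional follows.

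There is no genuine obstacle here: the statement is a direct consequence of unfolding the semantics together with the Certification axiom, which guarantees that a certificate $c_\varphi$ always passes the admissibility test for the formula it certifies. The only point needing care is to notice that the plausibility condition $\Phi$ is literally identical in the truth clauses for $t\colu\psi\varphi$ and for $c_\varphi\colu\psi\varphi$, so that it factors out cleanly and the two sides differ solely in their admissibility requirements---$w\in A(t,\varphi)$ on the one hand versus the trivially satisfied $w\in A(c_\varphi,\varphi)$ on the other.
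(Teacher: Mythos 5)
Your proposal is correct and follows essentially the same route as the paper's proof: unfold Definition~\ref{definition:JCDL-truth} for both $t\colu\psi\varphi$ and $c_\varphi\colu\psi\varphi$, note that the plausibility condition \eqref{eq:JCDL-truth} is the same in both (it does not depend on the term), and use the Certification property $A(c_\varphi,\varphi)=W$ to discharge the admissibility clause for the certificate. The only difference is presentational: you organize it as a single chain of equivalences, while the paper argues the two implications separately.
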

\begin{proof}
  Left to right (``only if''): assume $M,w\models t\colu\psi\varphi$.
  By Definition~\ref{definition:JCDL-truth}, we have
  $w\in A(t,\varphi)$ and \eqref{eq:JCDL-truth}.  Since we have
  $w\in A(c_\varphi,\varphi)$ by the Certification property of
  admissibility functions and we have
  $\dot B^\psi\varphi=c_\varphi\colu\psi\varphi$ by definition, it
  follows from by \eqref{eq:JCDL-truth} and
  $w\in A(c_\varphi,\varphi)$ by
  Definition~\ref{definition:JCDL-truth} that
  $M,w\models\dot B^\psi\varphi$.

  Right to left (``if''): assume $w\in A(t,\varphi)$ and
  $M,w\models\dot B^\psi\varphi$.  Applying
  Definition~\ref{definition:JCDL-truth} and the definition
  $\dot B^\psi\varphi=c_\varphi\colu\psi\varphi$, it follows that
  \eqref{eq:JCDL-truth}. So since we have $w\in A(t,\varphi)$ and
  \eqref{eq:JCDL-truth}, it follows by
  Definition~\ref{definition:JCDL-truth} that
  $M,w\models t\colu\psi\varphi$.
\end{proof}

In well-founded Fitting models, belief in $\varphi$ conditional on
$\psi$ is equivalent to having $\varphi$ true at the most plausible
$\psi$-worlds that are within the connected component of the actual
world.

\begin{theorem}[$\JCDL$-truth in well-founded models]
  \label{theorem:JCDL-wf}\label{theorem:JCDL-wo}
  Let $M=(W,\leq,V,A)$ be a Fitting model.
  \begin{enumerate}[\;\;(a)]
  \item \label{i:JCDL-wf} If $M$ is well-founded:
    $M,w\models t\colu\psi\varphi$ $\;\;\Leftrightarrow\;\;$
    $w\in A(t,\varphi)$ and
    $\min\sem{\psi}\cap\cc(w)\subseteq\sem{\varphi}$.

  \item \label{i:JCDL-wo} If $M$ is well-ordered:
    $M,w\models t\colu\psi\varphi$ $\;\;\Leftrightarrow\;\;$
    $w\in A(t,\varphi)$ and $\min\sem{\psi}\subseteq\sem{\varphi}$.
  \end{enumerate}
\end{theorem}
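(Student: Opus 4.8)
The plan is to reduce everything to the already-established $\CDL$ characterization by peeling off the admissibility conjunct. First I would invoke Theorem~\ref{theorem:JCDL-truth-B}, which gives $M,w\models t\colu\psi\varphi$ iff $w\in A(t,\varphi)$ and $M,w\models\dot B^\psi\varphi$. This isolates the admissibility requirement $w\in A(t,\varphi)$---which appears unchanged on the right-hand side of both \eqref{i:JCDL-wf} and \eqref{i:JCDL-wo}---from the purely plausibility-theoretic content carried by $\dot B^\psi\varphi$. So the entire task becomes showing that, in a well-founded (resp.\ well-ordered) Fitting model, $M,w\models\dot B^\psi\varphi$ is equivalent to $\min\sem{\psi}\cap\cc(w)\subseteq\sem{\varphi}$ (resp.\ $\min\sem{\psi}\subseteq\sem{\varphi}$).

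Next I would unfold the abbreviation $\dot B^\psi\varphi=c_\varphi\colu\psi\varphi$ and apply Definition~\ref{definition:JCDL-truth}: $M,w\models c_\varphi\colu\psi\varphi$ holds iff $w\in A(c_\varphi,\varphi)$ together with condition~\eqref{eq:JCDL-truth}. By the Certification clause of the admissibility function, $A(c_\varphi,\varphi)=W$, so the first conjunct is automatic and $M,w\models\dot B^\psi\varphi$ reduces to exactly condition~\eqref{eq:JCDL-truth}. The crucial observation is that this condition, with the sets $\sem{\psi}_M,\sem{\varphi}_M\subseteq W$ substituted in, is \emph{verbatim} the condition defining $M,w\models B^\psi\varphi$ in the $\CDL$ semantics of Definition~\ref{definition:CDL-truth}. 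Hence $M,w\models\dot B^\psi\varphi$ holds precisely when the underlying plausibility model $(W,\leq,V)$ of $M$ satisfies $B^\psi\varphi$ at $w$ in the $\CDL$ sense.

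I would then apply Theorem~\ref{theorem:CDL-wf}: for a well-founded model this gives $M,w\models\dot B^\psi\varphi$ iff $\min\sem{\psi}\cap\cc(w)\subseteq\sem{\varphi}$, which combined with the $w\in A(t,\varphi)$ conjunct from the first step yields \eqref{i:JCDL-wf}. For \eqref{i:JCDL-wo}, I would mirror the short argument used to derive Theorem~\ref{theorem:CDL-wo} from Theorem~\ref{theorem:CDL-wf}: when $M$ is well-ordered, $\leq$ is total, so every world is reachable from every other under $({\geq}\cup{\leq})^+$ and thus $\cc(w)=W$; consequently $\min\sem{\psi}\cap\cc(w)=\min\sem{\psi}$, and \eqref{i:JCDL-wf} specializes to \eqref{i:JCDL-wo}.

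The one point requiring care---and the only place the argument could slip---is the appeal to Theorem~\ref{theorem:CDL-wf} on formulas $\psi,\varphi\in\Lang_\JCDL$ rather than $\Lang_\CDL$. The theorem is literally stated for $\Lang_\CDL$-formulas, so I cannot quote it blindly. What saves the argument is that its content (and proof) is purely set-theoretic in the parameters $P=\sem{\psi}_M$, $Q=\sem{\varphi}_M\subseteq W$ and the well-founded relation $\leq$: the equivalence between condition~\eqref{eq:JCDL-truth} and $\min P\cap\cc(w)\subseteq Q$ never inspects how $P$ and $Q$ arose. Thus I would either note explicitly that the appendix proof of Theorem~\ref{theorem:CDL-wf} uses only these set-theoretic data and hence transfers to the sets induced by arbitrary $\Lang_\JCDL$-formulas, or simply re-run that short equivalence here with $P,Q$ in place of $\sem{\psi},\sem{\varphi}$. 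Either way no induction on formula structure is needed, since both sides of the desired equivalence refer only to the already-fixed semantic values $\sem{\psi}_M$ and $\sem{\varphi}_M$.
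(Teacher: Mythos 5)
Your proposal is correct and follows essentially the same route as the paper: unfold Definition~\ref{definition:JCDL-truth} (the paper does this directly, you via the intermediate Theorem~\ref{theorem:JCDL-truth-B}, which amounts to the same splitting) to isolate the conjunct $w\in A(t,\varphi)$ from condition~\eqref{eq:JCDL-truth}, then reuse the \emph{argument} of Theorem~\ref{theorem:CDL-wf} to convert that condition into the $\min$-based inclusion, and obtain~\eqref{i:JCDL-wo} from~\eqref{i:JCDL-wf} via $\cc(w)=W$. Your caveat about $\Lang_\JCDL$- versus $\Lang_\CDL$-formulas is exactly why the paper invokes the proof of Theorem~\ref{theorem:CDL-wf} rather than its statement; the underlying argument is purely set-theoretic in $\sem{\psi}$ and $\sem{\varphi}$, as you note.
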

\begin{proof}
  \eqref{i:JCDL-wf}, left to right (``only if''): assume
  $M,w\models t\colu\psi\varphi$.  By
  Definition~\ref{definition:JCDL-truth}, we have $w\in A(t,\varphi)$
  and \eqref{eq:JCDL-truth}.  Use the argument in the proof of
  Theorem~\ref{theorem:CDL-wf} to conclude that
  $\min\sem{\psi}\cap\cc(w)\subseteq\sem{\varphi}$.
  \eqref{i:JCDL-wf}, right to left (``if''): assume
  $w\in A(t,\varphi)$ and
  $\min\sem{\psi}\cap\cc(w)\subseteq\sem{\varphi}$.  Use the argument
  in the proof of Theorem~\ref{theorem:CDL-wf} to conclude that
  \eqref{eq:JCDL-truth}. Applying
  Definition~\ref{definition:JCDL-truth}, it follows that
  $M,w\models t\colu\psi\varphi$.

  \eqref{i:JCDL-wo}: $M$ is well-ordered, then $\cc(w)=W$ and $M$ is
  well-founded. Apply \eqref{i:JCDL-wf}.
\end{proof}

And so in well-founded Fitting models, we can see that formulas
$\dot B^\psi\varphi$ really do play the semantic analog of
$\Lang_\CDL$-belief formulas.

\begin{theorem}[$\JCDL$-truth in well-founded models in terms of
  belief formulas]
  \label{theorem:JCDL-wf-B}\label{theorem:JCDL-wo-B}
  Let $M$ be a Fitting model.
  \begin{enumerate}[\;\;(a)]
  \item \label{i:JCDL-wf-B} If $M$ is well-founded:
    $M,w\models \dot B^\psi\varphi$ $\;\;\Leftrightarrow\;\;$
    $\min\sem{\psi}\cap\cc(w)\subseteq\sem{\varphi}$.
    
  \item \label{i:JCDL-wo-B} If $M$ is well-ordered:
    $M,w\models \dot B^\psi\varphi$ $\;\;\Leftrightarrow\;\;$
    $\min\sem{\psi}\subseteq\sem{\varphi}$.
  \end{enumerate}
\end{theorem}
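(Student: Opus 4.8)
The plan is to obtain this result as an immediate specialization of Theorem~\ref{theorem:JCDL-wf}, taking the term $t$ to be the certificate $c_\varphi$ and then discharging the resulting admissibility side condition by invoking the Certification property of admissibility functions. The work has already been done in the general term case; here the only new observation is that certificates make the admissibility condition trivial.

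For part~\eqref{i:JCDL-wf-B}, I would first recall that by the abbreviation introduced in Definition~\ref{definition:lang-JCDL}, the formula $\dot B^\psi\varphi$ is literally $c_\varphi\colu\psi\varphi$. So I would instantiate Theorem~\ref{theorem:JCDL-wf}\eqref{i:JCDL-wf} with $t\coloneqq c_\varphi$ to obtain, for well-founded $M$, that $M,w\models\dot B^\psi\varphi$ holds iff both $w\in A(c_\varphi,\varphi)$ and $\min\sem{\psi}\cap\cc(w)\subseteq\sem{\varphi}$. The first conjunct is exactly where \emph{Certification} enters: since $A(c_\varphi,\varphi)=W$, we have $w\in A(c_\varphi,\varphi)$ for every world $w$, so this conjunct is vacuously satisfied. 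Eliminating it leaves precisely the desired equivalence $M,w\models\dot B^\psi\varphi$ iff $\min\sem{\psi}\cap\cc(w)\subseteq\sem{\varphi}$.

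For part~\eqref{i:JCDL-wo-B} the argument is identical, except that I would instead invoke the well-ordered clause Theorem~\ref{theorem:JCDL-wf}\eqref{i:JCDL-wo}, which drops the intersection with $\cc(w)$; again the admissibility condition $w\in A(c_\varphi,\varphi)$ is automatic by Certification, yielding $M,w\models\dot B^\psi\varphi$ iff $\min\sem{\psi}\subseteq\sem{\varphi}$. Alternatively, since a well-ordered model has $\cc(w)=W$ for every $w$, part~\eqref{i:JCDL-wo-B} also follows directly from part~\eqref{i:JCDL-wf-B}.

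I do not expect any genuine obstacle: the entire content of the theorem is that the ``syntactic shape'' side condition carried by a general term $t$ collapses when $t$ is a certificate, and this collapse is nothing other than the Certification axiom on admissibility functions. The only point requiring care is bookkeeping, namely remembering that $\dot B^\psi\varphi$ unfolds to $c_\varphi\colu\psi\varphi$ so that the relevant admissibility set is $A(c_\varphi,\varphi)$ (with the formula $\varphi$, not the revision formula $\psi$). Once the correct certificate/formula pairing is fixed, Certification applies verbatim and the equivalences read off immediately.
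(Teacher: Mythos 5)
Your proposal is correct and matches the paper's own proof, which likewise derives the result from Theorem~\ref{theorem:JCDL-wf} together with the Certification property of admissibility functions (the paper additionally cites Theorem~\ref{theorem:JCDL-truth-B}, but the substance is the same collapse of the condition $w\in A(c_\varphi,\varphi)$ that you carry out). Your bookkeeping remark about pairing the certificate with $\varphi$ rather than $\psi$ is exactly the right point of care, and no gap remains.
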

\begin{proof}
  By the Certification (Definition~\ref{definition:F-model}) and
  Theorems~\ref{theorem:JCDL-truth-B} and \ref{theorem:JCDL-wf}.
\end{proof}

Similar to $\Lang_\CDL$, the intended semantic objects for
$\Lang_\JCDL$ are the well-ordered models of the appropriate type (in
this case Fitting models, as opposed to simple plausibility models).

\begin{definition}[$\Lang_\JCDL$-validity]
  \label{definition:JCDL-validity}
  To say that a $\Lang_\JCDL$-formula $\varphi$ is \emph{valid},
  written $\models\varphi$, means that $M\models\varphi$ for each
  well-ordered Fitting model $M$. Though we use the same symbol
  ``$\models$'' for $\Lang_\JCDL$-validity as we did for
  $\Lang_\CDL$-validity, it will be clear from context which notion is
  meant.
\end{definition}

And as for $\CDL$, locally well-ordered plausibility models would
suffice.

\begin{theorem}[$\Lang_\JCDL$-validity with respect to local well-orders]
  \label{theorem:JCDL-lwo}
  Let $\mathfrak{F}_L$ be the class of locally well-ordered Fitting
  models.  For each $\varphi\in\Lang_\JCDL$, we have:
  \[
  \models\varphi
  \qquad\text{iff}\qquad
  \forall M\in\mathfrak{F}_L,\;
  M\models\varphi\enspace.
  \]
\end{theorem}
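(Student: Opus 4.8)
The plan is to mirror the proof of Theorem~\ref{theorem:CDL-lwo}, adding only the bookkeeping needed to carry the admissibility function through the construction. The ``if'' direction is immediate: every well-ordered Fitting model is in particular locally well-ordered (totality of $\leq$ forces $\cc(w)=W$ and totality on each component), so validity over $\mathfrak{F}_L$ entails validity over the well-ordered Fitting models, which is exactly $\models\varphi$.

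For the ``only if'' direction I would assume $\models\varphi$, fix an arbitrary $M=(W,\leq,V,A)\in\mathfrak{F}_L$ and a world $w\in W$, and pass to the generated submodel $M'=(W',\leq',V',A')$ obtained by restricting to $W'=\cc(w)$: put $\leq'={\leq}\cap(W'\times W')$, $V'(v)=V(v)$ for $v\in W'$, and---this is the new ingredient relative to the $\CDL$ case---$A'(t,\chi)\coloneqq A(t,\chi)\cap W'$. First I would verify that $M'$ is again a Fitting model. Since $M$ is locally well-ordered, $\leq'$ is well-founded and total on $W'=\cc(w)$, so the underlying plausibility model is well-ordered. For the admissibility conditions, Certification, Application, and Sum each follow by intersecting the corresponding inclusion for $A$ with $W'$ (e.g.\ $A'(t,\varphi_1\to\varphi_2)\cap A'(s,\varphi_1)=\bigl(A(t,\varphi_1\to\varphi_2)\cap A(s,\varphi_1)\bigr)\cap W'\subseteq A(t\cdot s,\varphi_2)\cap W'=A'(t\cdot s,\varphi_2)$). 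Admissibility Indefeasibility is the only condition that uses the component structure: if $x\in A'(t,\chi)$ then $x\in A(t,\chi)$, so by indefeasibility in $M$ every $y\in\cc(x)$ lies in $A(t,\chi)$, and since $x\in W'=\cc(w)$ forces $\cc(x)=\cc(w)=W'$, all such $y$ already lie in $W'$, whence $y\in A'(t,\chi)$.

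The heart of the argument is a truth-preservation lemma proved by induction on $\theta\in\Lang_\JCDL$: for every $v\in W'$, $M',v\models\theta$ iff $M,v\models\theta$. The propositional and implication cases are routine, as $V'$ is the restriction of $V$. For $\theta=t\colu\psi\chi$ I would show that both conjuncts of the truth clause in Definition~\ref{definition:JCDL-truth} are preserved. The admissibility conjunct matches because $v\in W'$ gives $v\in A'(t,\chi)$ iff $v\in A(t,\chi)$. The belief conjunct \eqref{eq:JCDL-truth} matches for the same reason as in $\CDL$: for $v\in W'$ we have $\cc(v)=\cc(w)=W'$, and for any $x\in W'$ the down-set $x^\down$ is already contained in $\cc(w)=W'$ (if $z\leq x$ then $z\in\cc(x)=\cc(w)$), so $x^\down$ is computed identically in $M$ and $M'$. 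Combining this with the induction hypothesis applied to the proper subformulas $\psi$ and $\chi$, which yields $\sem{\psi}_M\cap W'=\sem{\psi}_{M'}$ and likewise for $\chi$, shows the quantified condition holds in $M$ exactly when it holds in $M'$. With the lemma established, $M'$ is well-ordered and $\models\varphi$, so $M',w\models\varphi$, hence $M,w\models\varphi$; as $w$ and $M$ were arbitrary, $M\models\varphi$ for all $M\in\mathfrak{F}_L$.

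I expect the main obstacle---really the only point requiring care beyond the $\CDL$ argument---to be verifying that the restricted admissibility function $A'$ remains admissible, and in particular checking Admissibility Indefeasibility, since this is where the interaction between the connected-component restriction and the admissibility sets must be handled. Everything else is either identical to Theorem~\ref{theorem:CDL-lwo} or a routine intersection argument.
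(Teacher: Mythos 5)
Your proof is correct and follows essentially the same route as the paper, which simply invokes the generated-submodel argument of Theorem~\ref{theorem:CDL-lwo}. The one detail the paper leaves implicit---that restricting the admissibility function to $A'(t,\chi)=A(t,\chi)\cap\cc(w)$ preserves Certification, Application, Sum, and Admissibility Indefeasibility---is exactly what you verify, and your verification is sound.
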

\begin{proof}
  As in the proof of Theorem~\ref{theorem:CDL-lwo}.
\end{proof}

Soundness and completeness of $\JCDL$ with respect to its intended
semantics (i.e., well-ordered Fitting models) makes use of many
components of the proof of Theorem~\ref{theorem:CDL-completeness},
itself essentially due to \cite{Boa04:GEB}.

\begin{theorem}[$\JCDL$ soundness and completeness]
  \label{theorem:JCDL-determinacy}
  For each $\varphi\in\Lang_\JCDL$:
  \[
  \vdash_\JCDL\varphi \quad\text{iff}\quad \models\varphi
  \enspace.
  \]
\end{theorem}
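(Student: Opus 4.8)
The plan is to prove the two directions separately, in each case piggy-backing on the already-established $\CDL$ machinery for the ``dotted'' belief fragment and treating only the genuinely term-sensitive axioms by hand. For soundness I would fix a well-ordered Fitting model $M=(W,\leq,V,A)$ and check that every axiom of Table~\ref{table:JCDL} is valid and that (MP) and (eMN) preserve validity. The eight axioms (eSucc), (eKM), (eRM), (eInc), (eComm), (ePI), (eNI), (eWCon) only involve support via certificates, i.e.\ lie in the $\dot B$-fragment, and by Theorem~\ref{theorem:JCDL-wf-B} a formula $\dot B^\psi\varphi$ has exactly the truth condition $\min\sem{\psi}\cap\cc(w)\subseteq\sem{\varphi}$ that $B^\psi\varphi$ has in $\CDL$ (Theorem~\ref{theorem:CDL-wf}); so their validity is immediate from the soundness half of Theorem~\ref{theorem:CDL-completeness}. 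The four remaining axioms are handled directly through Theorem~\ref{theorem:JCDL-truth-B}, which splits the truth of $t\colu\psi\varphi$ into an admissibility clause $w\in A(t,\varphi)$ and a belief clause $M,w\models\dot B^\psi\varphi$: (eCert) is the left-to-right half of that theorem; (eK) uses the Application closure of $A$ to pass from $w\in A(t,\varphi_1\to\varphi_2)$ and $w\in A(s,\varphi_1)$ to $w\in A(t\cdot s,\varphi_2)$, together with the derivable K-principle for $\dot B$ on the belief clause; (eSum) uses the Sum closure of $A$; and (eA) uses that the admissibility clause $w\in A(t,\varphi)$ does not mention the revision, so it transfers unchanged from one belief clause to another. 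For (eMN), a valid $\varphi$ has $\sem{\varphi}=W$, whence the belief clause holds trivially and the admissibility clause holds by Certification.

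For completeness I would argue contrapositively: given $\not\vdash_\JCDL\varphi$, I extend $\{\lnot\varphi\}$ to a maximal $\JCDL$-consistent set and build a canonical Fitting model whose worlds are maximal $\JCDL$-consistent sets. The underlying plausibility structure $(W,\leq,V)$ is defined exactly as in the $\CDL$ completeness proof of Theorem~\ref{theorem:CDL-completeness}, but reading the knowledge and belief relations off the $\dot B$-formulas rather than the $B$-formulas; this is legitimate because the trivial-realization half of Theorem~\ref{theorem:realization} guarantees that the $\dot B$-operators satisfy, as derived $\JCDL$-principles, every $\CDL$ axiom and rule that the $\CDL$ construction relies on. The admissibility function I would define purely syntactically by $A(t,\varphi):=\{w\mid \exists\psi:\ t\colu\psi\varphi\in w\}$.

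With these definitions in place the four Fitting-model constraints of Definition~\ref{definition:F-model} all fall out of the axioms. Certification holds because $\dot B^\varphi\varphi$ is an instance of (eSucc), so $c_\varphi\colu\varphi\varphi\in w$ for every $w$; Sum is immediate from (eSum); Application follows by first using the derivable rule (eSC) to bring both hypotheses to the common revision $\chi=(\varphi_1\to\varphi_2)\land\varphi_1$ via (eA) and then applying (eK) at $\chi$; and Admissibility Indefeasibility holds because (ePI) and (eNI) make each support statement and its negation ``known,'' hence constant along the canonical connected component that defines $\cc$. The Truth Lemma is then proved by induction, the only new case being $t\colu\psi\varphi$: by Theorem~\ref{theorem:JCDL-truth-B} and the $\dot B$-part of the lemma this reduces to showing that $w\in A(t,\varphi)$ together with $\dot B^\psi\varphi\in w$ is equivalent to $t\colu\psi\varphi\in w$, and this equivalence is exactly (eCert) one way and (eA) the other. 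Finally, to land inside the intended model class I would pass to the connected component generated by the point and reuse the well-foundedness and local-totality argument from the $\CDL$ completeness proof to make the result a locally well-ordered Fitting model; by Theorem~\ref{theorem:JCDL-lwo} refutation over such models suffices, so the component refutes $\varphi$.

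The hard part will be the admissibility function: arranging a single $A$ that simultaneously satisfies Certification, Application, Sum and Admissibility Indefeasibility while matching the Truth Lemma. The delicate constraint is Admissibility Indefeasibility, whose proof forces the introspection axioms (ePI) and (eNI) to interact correctly with the canonically defined connected components; the Application constraint is also subtle, since it requires aligning two support claims made at possibly different revisions onto a common conditional before (eK) can fire, which is precisely what (eSC) and (eA) are for. A secondary technical point is securing well-foundedness of the canonical order; as in the $\CDL$ case this is obtained by a finite filtration so that Theorem~\ref{theorem:wf-smooth}\eqref{i:finite} applies, and one must check that the finite set of relevant term--formula pairs can be endowed with an admissibility function still closed under Application and Sum.
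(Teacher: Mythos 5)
Your overall plan---soundness by induction with the term-sensitive clauses split through Theorem~\ref{theorem:JCDL-truth-B}, completeness by re-running the $\CDL$ canonical construction on the $\dot B$-fragment with a syntactically defined admissibility function verified via (eSucc)/(eSum)/(eA)+(eK)/(ePI)+(eNI)---is the same as the paper's, but two steps fail as stated. First, in soundness you classify (eKM), (eRM), (eInc), (eComm), (ePI), (eNI), (eWCon) as lying ``in the $\dot B$-fragment''; in fact every one of these schemes contains an arbitrary term $t$, so none is literally an instance of the $\CDL$ soundness result. For (eKM), (eRM), (eComm) the repair is routine (the admissibility conjunct $w\in A(t,\cdot)$ does not mention the revision formula, so it passes unchanged from antecedent to consequent), and for (eInc), (eWCon) the consequent needs only Certification or nothing. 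But for (ePI) and (eNI) the support formula $t\colu\psi\chi$ itself sits under $\dot B^\varphi$, so validity requires showing $\sem{t\colu\psi\chi}=W$ (resp.\ $\sem{\lnot t\colu\psi\chi}=W$), and that requires propagating $w\in A(t,\chi)$ (resp.\ $w\notin A(t,\chi)$) to every world---i.e., Admissibility Indefeasibility of Definition~\ref{definition:F-model}, which your soundness argument never invokes. Reducing these two schemes to $\CDL$ soundness establishes only $\dot B^\psi\chi\to\dot B^\varphi\dot B^\psi\chi$, a strictly weaker statement; the paper's proofs of (ePI) and (eNI) hinge precisely on Admissibility Indefeasibility.

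Second, in completeness your membership-based admissibility function $A(t,\varphi)\coloneqq\{w\mid\exists\psi,\; t\colu\psi\varphi\in w\}$ is incompatible with the finiteness needed to land in the locally well-ordered models. The construction must be finitary (the paper's worlds are maximal consistent subsets of a finite closure set $C$; your alternative of filtrating an infinite canonical model re-raises the same difficulty, as you yourself concede at the end). Over such finite worlds, membership-based admissibility violates Application: from $x\in A(t,\varphi_1\to\varphi_2)\cap A(s,\varphi_1)$ you would need some formula $(t\cdot s)\colu\psi\varphi_2\in x$, but no formula containing the compound term $t\cdot s$ need occur in $C$ at all, since $C$ cannot contain support formulas for all iterated products without being infinite. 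The paper's resolution---which is exactly the point you flag as ``to be checked'' but leave unresolved---is to define admissibility by \emph{provability} rather than membership: $x\in A(t,\varphi)$ iff $\vdash_\JCDL\hat x\to t\colu\psi\varphi$ for some $\psi\in\Lang_\JCDL$, where $\hat x$ is the conjunction of the support literals of $x$ (the set $T_0$). With that definition, Application and Sum hold for arbitrary terms (via (Succ), (eA), (eK), (eSum)) while $W$ stays finite; Admissibility Indefeasibility follows from your (ePI)/(eNI) propagation argument because connected worlds share the same support literals, hence the same $\hat x$; and the right-to-left Truth Lemma case goes through by applying (eA) to $\vdash\hat x\to t\colu\chi\varphi$ and $\dot B^\psi\varphi\in x$. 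Your Truth Lemma reduction itself ((eCert) one way, (eA) the other) is correct and matches the paper, but it needs this provability-based $A$ underneath it to be coherent with a finite $W$.
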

\begin{proof}
  See the appendix.
\end{proof}

% XXX
% \section{Nonmonotonic reasons}

% XXX
% \section{Revisable justified belief}

\section{Conclusion}

We saw earlier that $\CDL$ is a version of AGM revision in which
Boolean combinations of revisions are expressible in the language.
Since $\JCDL$ is a reason-explicit analog of $\CDL$ (as per
Theorem~\ref{theorem:realization}), we are led to the following
suggestion: $\JCDL$ is a version of AGM revision in which Boolean
combinations of \emph{reason-explicit revisions} are expressible in
the language. In essence, a formula $\varphi$ that is part of the
belief state after revision by $\psi$ may be witnessed by a specific
reason $t$ whose syntactic structure tracks the genesis of $\varphi$
stepwise from basic principles.  This suggests we think of $\JCDL$ as
a theory of \emph{revisable justified belief}.  It would be
interesting to see if there is some explicit version of the AGM
revision principles that matches up with $\JCDL$ in the way that
standard AGM matches up with $\CDL$.  However, we leave this issue for
future work.

% XXX

%%
%% appendices
%% 

\appendix

\section{Technical results}

\subsection{Results for plausibility models}

\begin{proof}[Proof of Theorem~\ref{theorem:wf-smooth}]
  Item~\ref{i:finite} is obvious.  Item~\ref{i:lwo} is a well-known
  result from order theory, but we reprove it anyway for completeness
  purposes.  So assume $M$ is locally well-ordered and
  $\emptyset\neq S=\cc(w)\subseteq W$ for some $w\in W$. Define
  \[
  S'\coloneqq\{x\in S\mid\forall y\in S:x\leq y\}\enspace.
  \]
  We wish to prove that $\min S=S'$.  Proceeding, take $x\in\min S$.
  If $y\in S$ as well, then it follows from $x\in\min S$ by the
  definition of $\min S$ that $y\not<x$, from which we obtain
  $x\leq y$ because $S$ is a connected component and $\leq$ is total
  on each connected component.  Since $y\in S$ was chosen arbitrarily,
  it follows that $x\in S'$.  Hence $\min S\subseteq S'$.  To show the
  inclusion holds in the other direction, take $x\in S'$. If $y\in S$,
  then it follows from $x\in S'$ by the definition of $S'$ that
  $x\leq y$, from which we obtain $y\not<x$ by the definition of
  $\not<$. Since $y\in S$ was chosen arbitrarily, it follows that
  $x\in\min S$.  Hence $S'\subseteq\min S$.

  For Item~\ref{i:wo}, it follows from the fact that $\leq$ is
  well-ordered that $W$ is a connected component.  Further, since
  $\leq$ is well-ordered, it is also locally well-ordered.  The result
  therefore follows by Item~\ref{i:lwo}.

  For Item~\ref{i:wf-smooth}, let us first assume that $M$ is
  well-founded. We wish to prove that each $S\in\wp(W)$ is smooth in
  $M$.  So take $S\in\wp(W)$.  Since $\emptyset$ is smooth in $M$, let
  us assume further that $S\neq\emptyset$.  Now take $x\in S$.  Since
  $x\in x^\down\cap S$ by the reflexivity of $\leq$, it follows that
  $x^\down\cap S\neq\emptyset$.  Therefore, since $M$ is well-founded,
  it follows that $\min(x^\down\cap S)\neq\emptyset$.  That is, there
  exists $y\in\min(x^\down\cap S)$. But then $y\leq x$ and $x\not<y$,
  from which it follows that $y\simeq x$ or $y<x$.  And if
  $y\simeq x$, then it follows from $y\in\min(x^\down\cap S)$ by the
  transitivity of $\leq$ that $x\in\min(x^\down\cap S)$.  So either we
  have $x\in\min(x^\down\cap S)$ or we have $y\in\min(x^\down\cap S)$
  and $y<x$.  Further, for each $m\in\min(x^\down\cap S)$ and
  $z\in S-(x^\down\cap S)$, we have by the transitivity of $\leq$ that
  $z\not<m$. And for each $m\in\min(x^\down\cap S)$ and
  $z\in(x^\down\cap S)$, we have by the definition of
  $\min(x^\down\cap S)$ that $z\not<m$.  But then
  $m\in\min(x^\down\cap S)$ implies $m\in\min S$.  Taken together, we
  have shown that for each $x\in S$, either $x\in\min S$ or there
  exists $y\in\min S$ such that $y<x$.  It follows that $S$ is smooth
  in $M$.  Since we have shown that every $S\in\wp(W)$ is smooth in
  $M$, it follows that $M$ is smooth.
  
  For the converse of Item~\ref{i:wf-smooth}, we assume that $M$ is
  smooth.  We wish to prove that $M$ is well-founded.  So take a
  nonempty $S\subseteq W$.  Since $S$ is nonempty, we have $x\in S$.
  But $M$ is smooth and so $S$ is smooth in $M$, and so it follows
  that $x\in\min S$ or there exists $y\in\min S$ such that $y<x$.  In
  either case, we have $\min S\neq\emptyset$.  So $M$ is well-founded.

  Items~\ref{i:wo-smooth} and \ref{i:lwo-smooth} follow from
  Item~\ref{i:wf-smooth} by Definition~\ref{definition:smooth}.
\end{proof}

\subsection{Results for \texorpdfstring{$\CDL$}{CDL}}

\begin{proof}[Proof of Theorem~\ref{theorem:CDL-wf}\eqref{i:CDL-wf}]
  Assume $M$ is well-founded and $M,w\models B^\psi\varphi$.  The
  latter means
  \begin{equation}
    \forall x\in\cc(w):\quad
    x^\down\cap\sem{\psi}=\emptyset
    \quad\text{or}\quad
    \exists y\in x^\down\cap\sem{\psi}:
    y^\down\cap\sem{\psi}\subseteq\sem{\varphi}
    \enspace.
    \label{eq:theorem:CDL-wf}
  \end{equation}
  If $\min\sem{\psi}\cap\cc(w)=\emptyset$, then
  $\min\sem{\psi}\cap\cc(w)\subseteq\sem{\varphi}$.  So let us assume
  further that $\min\sem{\psi}\cap\cc(w)\neq\emptyset$.  Take
  $z\in\min\sem{\psi}\cap\cc(w)$.  Since we then have
  $z\in z^\downarrow\cap\sem{\psi}$ by the reflexivity of $\leq$ and
  the definition of $\min\sem{\psi}$, it follows by
  \eqref{eq:theorem:CDL-wf} that
  \[
  \exists y\in z^\down\cap\sem{\psi}:
  y^\down\cap\sem{\psi}\subseteq\sem{\varphi} \enspace.
  \]
  Now $z\in\cc(w)$ and $y\in z^\down\cap\sem{\psi}$, so it follows
  that $y\in\cc(w)$ and therefore that
  $y\in\sem{\psi}\cap\cc(w)$. From this we obtain by
  $z\in\min\sem{\psi}\cap\cc(w)$ that $y\not<z$.  But $y\in z^\down$,
  and therefore we have $y\leq z$ and $z\leq y$.  As a result,
  $z\in y^\downarrow\cap\sem{\psi}\subseteq\sem{\varphi}$.  Since
  $z\in\min\sem{\psi}\cap\cc(w)$ was chosen arbitrarily, we have
  proved that $\min\sem{\psi}\cap\cc(w)\subseteq\sem{\varphi}$.

  Conversely, assume $M$ is well-founded and
  $\min\sem{\psi}\cap\cc(w)\subseteq\sem{\varphi}$.  To show that we
  have $M,w\models B^\psi\varphi$, we must show that
  \eqref{eq:theorem:CDL-wf} obtains.  For this it suffices for us to
  take $x\in\cc(w)$ satisfying $x^\down\cap\sem{\psi}\neq\emptyset$
  and prove that
  \begin{equation}
    \exists y\in x^\down\cap\sem{\psi}:
    y^\down\cap\sem{\psi}\subseteq\sem{\varphi}\enspace.
    \label{eq:theorem:CDL-wf2}
  \end{equation}
  Proceeding, since $x^\down\cap\sem{\psi}\neq\emptyset$ and $M$ is
  well-founded, it follows that there exists
  $y\in\min(x^\down\cap\sem{\psi})$.  Since
  $\min\sem{\psi}\cap\cc(w)\subseteq\sem{\varphi}$, if we can show
  that for every $z\in y^\down\cap\sem{\psi}$ we have
  $z\in\min\sem{\psi}\cap\cc(w)$, then it would follow that
  $y^\down\cap\sem{\psi}\subseteq\sem{\varphi}$ and therefore that
  \eqref{eq:theorem:CDL-wf2}, completing the argument.  So take
  $z\in y^\down\cap\sem{\psi}$.  It follows by $z\in y^\down$,
  $y\in x^\down$, and $x\in\cc(w)$ that $z\in\cc(w)$.  So to show that
  $z\in\min\sem{\psi}\cap\cc(w)$, all that remains is to prove that
  $z\in\min\sem{\psi}$, and for this it suffices to prove that
  $u\in z^\down\cap\sem{\psi}$ implies $u\not<z$. So take
  $u\in z^\down\cap\sem{\psi}$.  Since $u\in z^\down$, $z\in y^\down$,
  and $y\in x^\down$, we have
  \[
  u\leq z\leq y\leq x\enspace.
  \]
  By the transitivity of $\leq$, it follows that $u\leq y$ and
  $u\leq x$.  Hence $u\in x^\down\cap\sem{\psi}$.  Since
  $y\in\min( x^\down\cap\sem{\psi})$, it follows that $u\not<y$.  But
  from $u\not<y$ and $u\leq y$, it follows by the definition of
  $\not<$ that $y\leq u$. And from $z\leq y\leq u$, it follows by the
  transitivity of $\leq$ that $z\leq u$.  Applying the definition of
  $\not<$, we obtain $u\not<z$.
\end{proof}

\begin{proof}[Proof of Theorem~\ref{theorem:CDL-theorems}]
  We reason in $\CDL$.  For (Cut), assume $B^\psi\varphi$ and
  $B^{\psi\land\varphi}\chi$.  It follows from
  $B^{\psi\land\varphi}\chi$ by (Inc) that $B^\psi(\varphi\to\chi)$.
  But from $B^\psi(\varphi\to\chi)$ and $B^\psi\varphi$ we obtain by
  (MR) that $B^\psi\chi$.
  
  For (CM), we reason by cases under the assumption $B^\psi\varphi$.
  First: if $\lnot B^\psi\lnot\varphi$, then we obtain
  $B^\psi\chi\to B^{\psi\land\varphi}\chi$ by (RM) and (CR).  Second:
  if $B^\psi\lnot\varphi$, then it follows by our assumption
  $B^\psi\varphi$ and (MR) that $B^\psi\bot$; applying (KM) and (CR)
  yields $B^{\psi\land\varphi}\bot$, from which we obtain
  $B^{\psi\land\varphi}\chi$ by (MR).

  (Taut) obtains by (CR) since $B\varphi=B^\top\varphi$.  (And)
  obtains by (MR).

  For (Or), we assume $B^{\psi_1}\varphi$ and $B^{\psi_2}\varphi$.
  Take $i\in\{1,2\}$.  By (Succ) and (MR) we obtain
  $B^{\psi_i}(\psi_1\lor\psi_2)$.  From $B^{\psi_i}(\psi_1\lor\psi_2)$
  and our assumption $B^{\psi_i}\varphi$, we obtain by (CM) and (Comm)
  that $B^{(\psi_1\lor\psi_2)\land\psi_i}\varphi$. From this it
  follows by (Inc) that $B^{\psi_1\lor\psi_2}(\psi_i\to\varphi)$.
  Since we have this for each $i\in\{1,2\}$, it follows by (MR) that
  $B^{\psi_1\lor\psi_2}((\psi_1\lor\psi_2)\to\varphi)$.  Since
  $B^{\psi_1\lor\psi_2}(\psi_1\lor\psi_2)$ by (Succ), we obtain the
  result $B^{\psi_1\lor\psi_2}\varphi$ by (MR).

  For (PR), we have the following:
  \begin{align*}
    1.
    &\;\;
       B^\varphi\lnot B^\psi\chi\to
      ( B^\varphi B^\psi\chi\to B^\varphi\bot)
    & \text{(MR)}
    \\
    2.
    &\;\;
      ( B^\varphi B^\psi\chi\land
      \lnot  B^\varphi\bot)\to
      \lnot B^\varphi\lnot B^\psi\chi
    & \text{(CR), 1}
    \\
    3.
    &\;\;
      \lnot B^\psi\chi \to
       B^\varphi \lnot B^\psi\chi
    & \text{(NI)}
    \\
    4.
    &\;\;
      \lnot B^\varphi \lnot B^\psi\chi\to
       B^\psi\chi
    & \text{(CR), 3}
    \\
    5. 
    &\;\;
      ( B^\varphi B^\psi\chi\land
      \lnot  B^\varphi\bot)\to
       B^\psi\chi
    & \text{(CR), 2, 4}
    \\
    6.
    &\;\;
       B^\varphi B^\psi\chi
      \to
      ( B^\varphi\bot\lor  B^\psi\chi)
    & \text{(CR), 5}
    \\
    7.
    &\;\;
       B^\psi\chi\to B^\varphi B^\psi\chi
    & \text{(PI)}
    \\
    8. 
    &\;\;
       B^\varphi\bot\to B^\varphi B^\psi\chi
    & \text{(MR)}
    \\
    9.
    &\;\;
      ( B^\varphi\bot\lor  B^\psi\chi) \to
       B^\varphi B^\psi\chi
    & \text{(CR), 7, 8}
    \\
    10. 
    &\;\;
       B^\varphi B^\psi\chi \leftrightarrow
      ( B^\varphi\bot\lor  B^\psi\chi)
    & \text{(CR), 6, 9}
  \end{align*}
  To obtain the proof for (NR), replace each occurrence of
  $ B^\psi\chi$ in the above proof with $\lnot B^\psi\chi$ and change
  the reason for line 7 from (PI) to (NI).  It is straightforward to
  verify that this operation yields a derivation of (NR).

  For (LE), assume $\psi\leftrightarrow \psi'$.  We have $B^\psi\psi$
  by (Succ).  Applying (MR) to our assumption, we obtain
  $B^{\psi}(\psi\to\psi')$. Hence $B^{\psi}\psi'$ by (MR).  By similar
  reasoning, we obtain $B^{\psi'}\psi$.  Now by (CM), $B^{\psi}\psi'$,
  (Comm), and (CR), we obtain
  $B^{\psi}\chi\to B^{\psi'\land\psi}\chi$. By $B^{\psi'}\psi$, (Cut),
  and (CR), we obtain $B^{\psi'\land\psi}\chi\to B^{\psi'}\chi$.  But
  then it follows by (CR) that $B^{\psi}\chi\to B^{\psi'}\chi$.  A
  similar argument shows that $B^{\psi'}\chi\to B^{\psi}\chi$.  By
  (CR), we conclude that $B^{\psi}\chi\leftrightarrow B^{\psi'}\chi$.

  (RW) follows by (MR).  For (SC), from $\psi\to\varphi$ we obtain
  $B^\psi(\psi\to\varphi)$ by (MN); however, we have $B^\psi\psi$ by
  (Succ), and so it follows by (MR) that $B^\psi\varphi$.
\end{proof}

\begin{proof}[Proof of Theorem~\ref{theorem:CDL-equivalence}]
  Left to right: it suffices to show that $\CDL$ derives (IEa), (IEb),
  and (LE).
  \begin{itemize}
  \item (IEa): $\vdash_\CDL B^\psi\varphi\to
    (B^{\psi\land\varphi}\chi\leftrightarrow B^\psi\chi)$.

    By (CM), (Cut), and (CR).

  \item (IEb): $\vdash_\CDL \lnot B^\psi\lnot\varphi\to
    (B^{\psi\land\varphi}\chi\leftrightarrow B^\psi(\varphi\to\chi))$.

    Reasoning in $\CDL$, by (Inc) and (CR) we have
    \[
    \lnot B^\psi\lnot\varphi\to (B^{\psi\land\varphi}\chi\to
    B^\psi(\varphi\to\chi)) \enspace,
    \]
    and so it suffices by (CR) to prove that
    \begin{equation}
      \lnot B^\psi\lnot\varphi\to
      (B^\psi(\varphi\to\chi)\to B^{\psi\land\varphi}\chi)
      \enspace.
      \label{eq:IEb-LtoR}
    \end{equation}
    Proceeding, we have
    \begin{equation}
      \lnot B^\psi\lnot\varphi\to(B^\psi(\varphi\to\chi)\to
      B^{\psi\land\varphi}(\varphi\to\chi))
      \label{eq:IEb-RM}
    \end{equation}
    by (RM).  We also have $B^{\psi\land\varphi}(\psi\land\varphi)$ by
    (Succ) and therefore $B^{\psi\land\varphi}\varphi$ by (MR).  But
    then we obtain \eqref{eq:IEb-LtoR} from \eqref{eq:IEb-RM} and
    $B^{\psi\land\varphi}\varphi$ by (MR). The result follows.

  \item (LE): if $\vdash_\CDL\psi\leftrightarrow\psi'$, then $\vdash_\CDL
    B^\psi\chi\leftrightarrow B^{\psi'}\chi$.

    By Theorem~\ref{theorem:CDL-theorems}.
  \end{itemize}
  This completes the left-to-right direction.  Right to left: it
  suffices to show that $\CDL_0$ derives (KM), (RM), (Inc), and
  (Comm).
  \begin{itemize}
  \item (KM): $\vdash_{\CDL_0} B^\psi\bot\to B^{\psi\land\varphi}\bot$.

    We reason in $\CDL_0$.  We have
    $B^\psi\bot\to(B^\psi\varphi\land B^\psi\bot)$ by (MR). Applying
    (IEa) and (CR), we obtain
    $B^\varphi\bot\to B^{\psi\land\varphi}\bot$.
  
  \item (RM): $\vdash_{\CDL_0} \lnot B^\psi\lnot\varphi\to( B^\psi\chi\to
    B^{\psi\land\varphi}\chi)$.
   
    Reasoning in $\CDL_0$, we assume $\lnot B^\psi\lnot\varphi$ and
    $B^\psi\chi$. It follows from $B^\psi\chi$ by (MR) that
    $B^\psi(\varphi\to\chi)$. From this and our assumption
    $\lnot B^\psi\lnot\varphi$, we obtain $B^{\psi\land\varphi}\chi$
    by (IEb) and (CR).

  \item (Inc): $\vdash_{\CDL_0} B^{\psi\land\varphi}\chi\to
    B^\psi(\varphi\to\chi)$.
   
    We reason in $\CDL_0$ by cases. First: if
    $\lnot B^\psi\lnot\varphi$, then we obtain
    $B^{\psi\land\varphi}\chi\to B^\psi(\varphi\to\chi)$ by (IEb).
    Second: if $B^\psi\lnot\varphi$, then we obtain
    $B^\psi(\varphi\to\chi)$ by (MR) and therefore
    $B^{\psi\land\varphi}\chi\to B^\psi(\varphi\to\chi)$ by (CR).
  
  \item (Comm): $\vdash_{\CDL_0} B^{\psi\land\varphi}\chi\to
    B^{\varphi\land\psi}\chi$.

    By (LE) and (CR). \qedhere
  \end{itemize}
\end{proof}
 
\begin{proof}[Proof of Theorem~\ref{theorem:CDL-completeness}]
  We use the notation and concepts from
  Remark~\ref{remark:multi-agent-CDL0}. Let
  \[
  \Lang_{\CDL_a}\coloneqq\Lang^{\{a\}}_\CDL \quad\text{and}\quad
  \CDL_a \coloneqq \CDL_0^{\{a\}}\enspace.
  \]
  We write $\chi^a$ for the $\Lang_{\CDL_a}$-formula obtained from the
  $\Lang_{\CDL_0}$-formula $\chi$ by recursively replacing each
  occurrence of a modal operator $B^\theta$ in $\chi$ by $B^\theta_a$.
  Obviously, $(\chi^a)'=\chi$ and $(\theta')^a=\theta$.

  It was shown by Board \cite{Boa04:GEB} that we have
  $\vdash_{\CDL_a}\chi$ iff $\models^{\{a\}}_{\CDL_a}\chi$.  By
  Remark~\ref{remark:multi-agent-CDL0}, this is equivalent to the
  statement that
  \begin{equation}
    \forall\chi\in\Lang_{\CDL_a}:\quad
    \vdash_{\CDL_a}\chi \quad\text{iff}\quad
    \models\chi' \enspace.
    \label{eq:CDL-completeness}
  \end{equation}
  By induction on derivation length, it is easy to see that the
  operation $\chi\mapsto\chi^a$ maps $\CDL_0$-theorems to
  $\CDL_a$-theorems and the operation $\chi\mapsto\chi'$ maps
  $\CDL_a$-theorems to $\CDL_0$-theorems.  So $\vdash_{\CDL_0}\varphi$
  iff $\vdash_{\CDL_0}\varphi^a$.  Applying
  \eqref{eq:CDL-completeness} and the fact that
  $(\varphi^a)'=\varphi$, we obtain $\vdash_{\CDL_0}\varphi$ iff
  $\models\varphi$.  Applying
  Theorem~\ref{theorem:CDL-equivalence}, we obtain
  $\vdash_\CDL\varphi$ iff $\models\varphi$.
\end{proof}

\begin{proof}[Proof of Theorem~\ref{theorem:CDL-completeness}]
  The argument can be obtained by combining the ideas from the various
  proofs in \cite{Boa04:GEB}. However, this requires restricting to
  the single-agent case and combining multiple arguments, so it is not
  so transparent how the argument should go. In the interest of making
  the argument clear and so that we have some constructions available
  for us later when we turn to the $\JCDL$ case, we provide a full
  proof here. However, the argument is truly due to \cite{Boa04:GEB}.

  For soundness, we proceed by induction on the length of
  derivation. In the induction base, we must show that each axiom
  scheme is wf-valid. (CL) is straightforward, so we proceed with the
  remaining schemes.  Let $M$ be an arbitrary well-founded
  plausibility model. We make tacit use of
  Theorem~\ref{theorem:CDL-wf}.
  \begin{itemize}
  \item (K) is valid:
    $\models B^\psi(\varphi_1\imp\varphi_2)\to
    (B^\psi\varphi_1\to B^\psi\varphi_2)$.

    Assume $(M,w)$ satisfies $B^\psi(\varphi_1\imp\varphi_2)$ and
    $B^\psi\varphi_1$.  Then
    $\min\sem{\psi}\subseteq\sem{\varphi_1\imp\varphi_2}$ and
    $\min\sem{\psi}\subseteq\sem{\varphi_1}$.  Hence
    $\min\sem{\psi}\subseteq\sem{\varphi_2}$.  So $(M,w)$ satisfies
    $ B^\psi\varphi_2$.
    
  \item (Succ) is valid: $\models B^\psi\psi$.

    We have $\min\sem{\psi}\subseteq\sem{\psi}$ by the definition of
    $\min\sem{\psi}$.  So $(M,w)$ satisfies $B^\psi\psi$.
    
  \item (KM) is valid:
    $\models B^\psi\bot\to B^{\psi\land\varphi}\bot$.

    Suppose $(M,w)$ satisfies $B^\psi\bot$.  Then
    $\min\sem{\psi}\subseteq\sem{\bot}=\emptyset$.  Since $M$ is
    well-founded, it follows that $\sem{\psi}=\emptyset$.  But then
    $\sem{\psi\land\varphi}=\emptyset$, from which it follows that
    $\min\sem{\psi\land\varphi}=\emptyset\subseteq\sem{\bot}$.  So
    $(M,w)$ satisfies $B^{\psi\land\varphi}\bot$.

  \item (RM) is valid:
    $\models \lnot B^\psi\lnot\varphi\to( B^\psi\chi\to
    B^{\psi\land\varphi}\chi)$.

    Suppose $(M,w)$ satisfies $\lnot B^\psi\lnot\varphi$ and
    $B^\psi\chi$.  It follows that
    $\min\sem{\psi}\nsubseteq\sem{\lnot\varphi}$ and
    $\min\sem{\psi}\subseteq\sem{\chi}$.  Hence
    $\min\sem{\psi}\cap\sem{\varphi}\neq\emptyset$.  We prove that
    $\min\sem{\psi\land\varphi}\subseteq
    \min\sem{\psi}\cap\sem{\varphi}$.
    Proceeding, take $x\in\min\sem{\psi\land\varphi}$.  Since
    $\min\sem{\psi}\cap\sem{\varphi}\neq\emptyset$, there exists
    $y\in\min\sem{\psi}\cap\sem{\varphi}$.  Hence
    $y\in\sem{\psi\land\varphi}$.  Since
    $x\in\min\sem{\psi\land\varphi}$ and $\leq$ is total, we have
    $x\leq y$. But $x\in\sem{\psi}$ and $y\in\min\sem{\psi}$, and
    therefore it follows from $x\leq y$ that $x\in\min\sem{\psi}$ as
    well.  Since $x\in\sem{\varphi}$, we have
    $x\in\min\sem{\psi}\cap\sem{\varphi}$.  Conclusion:
    $\min\sem{\psi\land\varphi}\subseteq\min\sem{\psi}\cap\sem{\varphi}$.
    So since $\min\sem{\psi}\subseteq\sem{\chi}$, it follows that
    \[
    \min\sem{\psi\land\varphi} \subseteq
    \min\sem{\psi}\cap\sem{\varphi}
    \subseteq\sem{\chi}\cap\sem{\varphi} \subseteq\sem{\chi}\enspace.
    \]
    That is, $\min\sem{\psi\land\varphi}\subseteq\sem{\chi}$.  So
    $(M,w)$ satisfies $B^{\psi\land\varphi}\chi$.

  \item (Inc) is valid:
    $\models B^{\psi\land\varphi}\chi\to
    B^\psi(\varphi\to\chi)$.

    Suppose $(M,w)$ satisfies $B^{\psi\land\varphi}\chi$.  Then
    $\min\sem{\psi\land\varphi}\subseteq\sem{\chi}$. We prove that
    $\min\sem{\psi}\subseteq\sem{\varphi\to\chi}$. Proceeding, take
    $x\in\min\sem{\psi}$.  If $x\notin\sem{\varphi}$, then
    $x\in\sem{\varphi\to\chi}$. So let us assume further that
    $x\in\sem{\varphi}$ and therefore that
    $x\in\sem{\psi\land\varphi}$. Now take any
    $y\in\sem{\psi\land\varphi}$. Since $y\in\sem{\psi}$, if we had
    $y<x$, then it would follow that $x\notin\min\sem{\psi}$,
    contradicting our choice of $x$. Hence
    $y\in\sem{\psi\land\varphi}$ implies $y\not<x$, from which it
    follows by $x\in\sem{\psi\land\varphi}$ that
    $x\in\min\sem{\psi\land\varphi}$.  But we have
    $\min\sem{\psi\land\varphi}\subseteq\sem{\chi}$ and hence
    $x\in\sem{\chi}$, from which we obtain $x\in\sem{\varphi\to\chi}$.
    Conclusion: $(M,w)$ satisfies $B^\psi(\varphi\to\chi)$.
    
  \item (Comm) is valid:
    $\models B^{\psi\land\varphi}\chi\to
    B^{\varphi\land\psi}\chi$.

    Suppose $(M,w)$ satisfies $B^{\psi\land\varphi}\chi$.  Then
    $\min\sem{\psi\land\varphi}\subseteq\sem{\chi}$.  Since
    $\sem{\psi\land\varphi}=\sem{\varphi\land\psi}$, it follows that
    $\min\sem{\varphi\land\psi}\subseteq\sem{\chi}$.  But then $(M,w)$
    satisfies $B^{\varphi\land\psi}\chi$.
    
  \item (PI) is valid:
    $\models B^\psi\chi\to B^\varphi B^\psi\chi$.

    Suppose $(M,w)$ satisfies $B^\psi\chi$.  Hence
    $\min\sem{\psi}\subseteq\sem{\chi}$, which implies that $(M,v)$
    satisfies $B^\psi\chi$ for any given $v\in\cc(w)=W$.  That is,
    $\sem{B^\psi\chi}=W$.  Therefore,
    $\min\sem{\varphi}\subseteq\sem{B^\psi\chi}$.  Conclusion: $(M,w)$
    satisfies $ B^\varphi B^\psi\chi$.

  \item (NI) is valid:
    $\models \lnot B^\psi\chi\to B^\varphi\lnot B^\psi\chi$.

    Suppose $(M,w)$ satisfies $\lnot B^\psi\chi$. Then
    $\min\sem{\psi}\nsubseteq\sem{\chi}$.  It follows that
    $M,v\models\lnot B^\psi\chi$ for each $v\in\cc(w)=W$.  Therefore,
    $\sem{\lnot B^\psi\chi}=W$, from which it follows that
    $\min\sem{\varphi}\subseteq\sem{\lnot B^\psi\chi}$.  
    Conclusion:
    $(M,w)$ satisfies $ B^\varphi\lnot B^\psi\chi$.
    
  \item (WCon) is valid: $\models B^\psi\bot\to\lnot\psi$.
    
    Suppose $(M,w)$ satisfies $B^\psi\bot$.  It follows that
    $\min\sem{\psi}\subseteq\sem{\bot}=\emptyset$.  Since $\leq$ is
    well-founded, it follows that $\sem{\psi}=\emptyset$. But this
    implies $\sem{\lnot\psi}=W$.  So $(M,w)$ satisfies $\lnot\psi$.
  \end{itemize}
  This completes the induction base. For the induction step, we must
  show that validity is preserved under the rules of (MP) and (MN).
  The argument for (MP) is standard, so let us focus on (MN).  We
  assume $\models\varphi$ for the $\CDL$-derivable $\varphi$
  (this is the ``induction hypothesis''), and we prove that
  $\models B^\psi\varphi$. Proceeding, since we have
  $\models\varphi$ by the induction hypothesis, it follows that
  $\sem{\varphi}=W$ and hence that
  $\min\sem{\psi}\subseteq\sem{\varphi}$.  But then $(M,w)$ satisfies
  $B^\psi\varphi$.  Soundness has been proved.

  Since $\CDL$ is sound with respect to the class of well-ordered
  plausibility models we note that $\CDL$ is consistent (i.e.,
  $\nvdash_\CDL\bot$).  In particular, take any pointed plausibility
  model $(M,w)$ containing only the single world $w$.  Since there is
  only one world, $M$ is well-ordered.  Further, by soundness, we have
  that $\vdash_\CDL\varphi$ to $M,w\models\varphi$. Therefore, since
  $M,w\not\models\bot$ by Definition~\ref{definition:CDL-truth}, it
  follows that $\nvdash_\CDL\bot$. That is, $\CDL$ is consistent.  We
  make use of this fact tacitly in what follows.

  For completeness, we write $\vdash$ without any subscript in the
  remainder of this proof as an abbreviation for $\vdash_\CDL$. Take
  $\theta$ such that $\nvdash \lnot\theta$.  We shall prove that
  $\theta$ is satisfiable at a locally well-ordered pointed
  plausibility model and then apply Theorem~\ref{theorem:CDL-lwo}
  to draw the desired conclusion.  Our construction is based on the
  completeness constructions given in \cite{Boa04:GEB}. Proceeding,
  provability will always be taken with respect to $\CDL$, the
  language is assumed to be $\Lang_\CDL$, and we make tacit use of
  Theorem~\ref{theorem:CDL-theorems}. To say that a set of formulas is
  \emph{consistent} means that for no finite subset does the
  conjunction provably imply $\bot$. For sets $S$ and $S'$ of
  formulas, to say that $S$ is \emph{maximal consistent in} (or
  ``maxcons in'') $S'$ means that $S\subseteq S'$, $S$ is consistent,
  and extending $S$ by adding any formula in $S'$ not already present
  would yield a set that is \emph{inconsistent} (i.e., not
  consistent).  Given a formula $\varphi$, we write $\sub(\varphi)$ to
  denote the set of sub-formulas of $\varphi$, including $\varphi$
  itself.  We extend this definition to sets of formulas: for a set
  $S$ of formulas,
  $\sub(S)\coloneqq\bigcup_{\varphi\in S}\sub(\varphi)$.  Given a set
  $S$ of formulas, we write ${\oplus}S$ to denote the \emph{Boolean
    closure} of $S$ (with respect to the language): this is the
  smallest extension of $S$ that contains all Boolean constants that
  are primitive to the language (i.e., $\bot$) and is closed under all
  Boolean operations that are primitive to the language (i.e.,
  implication).  Since our language is Boolean complete (i.e., every
  Boolean constant and every Boolean connective is definable in terms
  of the Boolean constants and Boolean connectives primitive to the
  language), it follows that the Boolean closure ${\oplus}S$ of $S$ is
  the smallest extension of $S$ that contains all definable Boolean
  constants (i.e., $\bot$ and $\top$) and is closed under all
  definable Boolean connectives (e.g., implication, conjunction,
  disjunction, and negation).  For a set $S$ of formulas, we define:
  \begin{align*}
    {\pm}S 
    &\coloneqq S\cup\{\lnot\varphi\mid\varphi\in S\}
      \enspace,
    \\
    B_0S
    &\coloneqq S\enspace,
    \\
    B_{i+1}S
    &\coloneqq
      {\pm}\{
      B^\psi\varphi \mid 
      \psi\in S \text{ and } 
      \varphi\in B_iS
      \}
      \enspace,
    \\
    B_\omega S
    &\coloneqq\textstyle\bigcup_{0<i<\omega}B_iS
      \enspace,
    \\
    C_0
    &\coloneqq{\pm}\,\sub(\{\theta,\bot,\top\})
      \enspace,
    \\
    C_1
    &\coloneqq{\oplus}C_0
      \enspace,
    \\
    B
    &\coloneqq B_\omega C_1\enspace,
    \\
    C 
    &\coloneqq C_1\cup B
      \enspace.
  \end{align*}
  Notice that $0$ is excluded in the definition of $B_\omega S$.
  Further, $C_0$ is finite.  Since $\nvdash \lnot\theta$, we may
  extend $\{\lnot\theta\}$ to a set $w_\theta$ that is maxcons in $C$.
  We then define:
  \begin{align*}
    W
    &\coloneqq\{x\subseteq C\mid x \text{ is maxcons in } C\}
      \enspace,
    \\
    \bar x
    &\textstyle\coloneqq
      \bigwedge(x\cap C_0) \text{ for } x\in W
      \enspace,
    \\
    x^\psi
    &\textstyle\coloneqq
      \{\varphi\mid B^\psi\varphi\in x\}
      \text{ for } x\in W \text{ and } \psi\in C_1
      \enspace,
    \\
    {\leq }
    &\coloneqq
      \{(x,y)\in W\times W\mid
      \exists \psi\in(x\cap y\cap C_1),\,
      y^\psi\subseteq x
      \}
      \enspace,
    \\
    V(x)
    &\coloneqq\Prop\cap x \text{ for } x\in W
      \enspace,
    \\
    M
    &\coloneqq(W,\leq ,V)
      \enspace.
  \end{align*}

  Notice that for $x\in W$, we have $\bar x\in C_1$, from which it
  follows by the maximal consistency of $x$ in $C\supseteq C_1$ that
  $\bar x\in x$.  Further, it follows by the definition
  $\bar x=\bigwedge(x\cap C_0)$ and the fact that $x$ is maximal
  consistent in $C=C_1\cup B$ that for each $\psi\in C_1={\oplus}C_0$,
  we have $\vdash\bar x\to\psi$ or $\vdash\bar x\to\lnot\psi$.
  Finally, if $x\leq y$ and $x\neq y$, then it follows by the maximal
  consistency of $x$ and of $y$ in $C$ that $\bar x\notin y$ and
  $\bar y\notin x$ and therefore that $\vdash\bar x\to\lnot\bar y$ and
  $\vdash\bar y\to\lnot\bar x$.  We make tacit use of the facts
  mentioned in this paragraph in what follows.

  We prove that $W$ is finite. First a definition due to
  \cite{Segerberg71}: to say a set $S$ of formulas is \emph{logically
    finite} means that $S$ has a \emph{finite basis}, which is a
  finite $S'\subseteq S$ satisfying the property that for every
  $\chi\in S$, there exists $\chi'\in S'$ such that
  $\vdash \chi\leftrightarrow\chi'$.  It can be shown by a normal form
  argument that if $S$ is logically finite, then there can be only
  finitely many sets that are maximal consistent in $S$.  So to prove
  that $W$ is finite, it suffices to prove that $C$ is logically
  finite. Proceeding, since $C_1$ was obtained as the Boolean closure
  of the finite set $C_0$, it follows by a normal form argument that
  $C_1$ has a finite basis $C_1'$.  So for $B^\psi\varphi\in B_1C_1$,
  since we have that $\psi\in C_1$ and $\varphi\in B_0C_1=C_1$, there
  exists $\psi'\in C_1'$ and $\varphi'\in C_1'$ such that
  \[
  \textstyle \vdash \psi\leftrightarrow\psi'
  \quad\text{and}\quad
  \vdash \varphi\leftrightarrow\varphi'\enspace.
  \]
  Applying (LE) and modal reasoning, it follows that
  \[
  \vdash B^\psi\varphi\leftrightarrow B^{\psi'}\varphi'\enspace.
  \]
  Since $ B^{\psi'}\varphi'\in B_1C_1'$ and
  $\lnot B^{\psi'}\varphi'\in B_1C_1'$, it follows by classical
  reasoning that $B_1C_1'$ is a finite basis for $B_1C_1$.  Therefore,
  there exists a finite basis $B_1'$ for ${\oplus}B_1C_1'$.  We prove
  by induction on positive $i<\omega$ that the set $B_1'$ is also a
  finite basis for $B_iC_1$.  The induction base case $i=1$ (for
  $B_1C_1$) has already been handled. So let us proceed with the
  induction step: we assume $B_1'$ is a finite basis for $B_jC_1$ for
  each non-negative integer $j$ that does not exceed some fixed
  $i\geq 1$ (this is the ``induction hypothesis''), and we prove that
  $B_1'$ is a finite basis for $B_{i+1}C_1$.  Proceeding, take
  $\chi\in B_{i+1}C_1$.  Since $i\geq 1$, we have $i+1\geq 2$ and
  therefore $\chi$ has one of the forms $B^\psi B^\delta\varphi$,
  $B^\psi\lnot B^\delta\varphi$, $\lnot B^\psi B^\delta\varphi$, or
  $ B^\psi\lnot B^\delta\varphi$ for some $\varphi\in B_{i-1}C_1$. We
  have by (PR), (NR), and classical reasoning that the following
  ``reductive equivalences'' obtain:
  \begin{align*}
    &
      \vdash   B^\psi B^\delta\varphi\leftrightarrow
      ( B^\psi\bot\lor B^\delta\varphi)
      \enspace,
    \\
    &
      \vdash  \lnot B^\psi B^\delta\varphi\leftrightarrow
      (\lnot B^\psi\bot\land\lnot B^\delta\varphi)
      \enspace,
    \\
    &
      \vdash   B^\psi\lnot B^\delta\varphi\leftrightarrow
      ( B^\psi\bot\lor\lnot B^\delta\varphi)
      \enspace,
    \\
    &
      \vdash  
      \lnot B^\psi\lnot B^\delta\varphi\leftrightarrow
      (\lnot B^\psi\bot\land B^\delta\varphi)
      \enspace.
  \end{align*}
  $ B^\psi\bot$ and $\lnot B^\psi\bot$ are members of
  $ B_1C_1$, and $ B^\delta\varphi$ and
  $\lnot B^\delta\varphi$ are members of $ B_iC_1$.  Since
  $i+1>1$ and $i+1>i$, we may apply the induction hypothesis: there
  exist members $( B^\psi\bot)'$, $(\lnot B^\psi\bot)'$,
  $( B^\delta\varphi)'$, and $(\lnot B^\delta\varphi)'$ of
  $B_1'$ such that the following ``inductive equivalences'' obtain:
  \begin{align*}
    &
      \vdash  
       B^\psi\bot
      \leftrightarrow
      ( B^\psi\bot)'
      \enspace,
    &
      \vdash  
      \lnot B^\psi\bot
      \leftrightarrow
      (\lnot B^\psi\bot)'
      \enspace,
    \\
    &
      \vdash  
       B^\delta\varphi
      \leftrightarrow
      ( B^\delta\varphi)'
      \enspace,
    &
      \vdash  
      \lnot B^\delta\varphi
      \leftrightarrow
      (\lnot B^\delta\varphi)'
      \enspace.
  \end{align*}
  Let us call the four formulas appearing on the right sides of the
  inductive equivalences the ``reduced formulas.''  Each reduced
  formula is a member of the finite basis $B_1'$ for
  ${\oplus} B_1C_1$, and therefore each reduced formula is also a
  member of ${\oplus} B_1C_1$.  Since ${\oplus} B_1C_1$ is closed
  under all definable Boolean operations and $B_1'$ is a finite basis
  for ${\oplus} B_1C_1$, it follows that any Boolean combination of
  the reduced formulas is also a member of ${\oplus} B_1C_1$ and
  therefore that any such Boolean combination is provably equivalent
  to a formula in $B_1'$.  But then it follows by the inductive
  equivalences and classical reasoning that the right side of each
  reductive equivalence is provably equivalent to a Boolean
  combination of reduced formulas, and the latter combination is
  itself provably equivalent to a formula in $B_1'$.  It follows that
  our original formula $\chi\in B_{i+1}C$ must be provably equivalent
  to a formula in $B_1'$ as well.  Therefore, $B_1'$ is indeed a
  finite basis for $ B_{i+1}C_1$.  This completes the induction step.
  We conclude that $B_1'$ is a finite basis for $ B_iC_1$ for each
  $i\geq 1$.  As a result, it follows that $B_1'$ is a finite basis
  for $B= B_\omega C_1=\bigcup_{0<i<\omega} B_iC_1$.  But then
  $C_1'\cup B_1'$ is a finite basis for $C=C_1\cup B$.  Conclusion:
  $W$ is finite.

  Suppose we are given $x\in W$ and $\varphi\in x$.  If
  $\vdash\varphi\to\psi$ and $\psi\in C$, then it follows by the
  maximal consistency of $x$ in $C$ that $\psi\in C$.  It is tedious
  to repeatedly verify membership assertions in $C$ and state that the
  reason the result follows is by the fact that $x$ is maximal
  consistent in $C$.  Therefore, we adopt the convention that we shall
  generally only write that $\vdash\varphi\to\psi$ and $\varphi\in x$
  together imply $\psi\in x$.  In so doing, we tacitly indicate (and
  the reader should verify) that $\psi\in C$, $x$ is maximal
  consistent in $C$, and so the result follows by the maximal
  consistency of $x$ in $C$.  The reader will always know when such
  tacit use takes place (and requires verification), since this use
  occurs every time it is stated that a membership assertion
  $\psi\in x$ obtains as a logical consequence of some collection of
  assumptions that does not include the assumption $\psi\in x$ itself.
  Finally, for convenience in the remainder of the proof, we shall say
  that a set is ``maximal consistent'' to mean that it is maximal
  consistent in $C$.

  \emph{Agreement Lemma\/}: for each $\{x,y\}\subseteq W$, if
  $\psi\in C_1$ and $x^\psi\subseteq y$, then $x\cap B=y\cap B$.  We
  prove this now. Proceeding, assume $\psi\in C_1$ and
  $x^\psi\subseteq y$. If $B^\chi\varphi\in x$, then
  $B^\psi B^\chi\varphi\in x$ by (PI) and therefore
  $B^\chi\varphi\in y$ by $x^\psi\subseteq y$.  So
  $B^\chi\varphi\in x$ implies $B^\chi\varphi\in y$.  Now suppose
  $B^\chi\varphi\in y$. If we had $B^\chi\varphi\notin x$, it would
  follow by maximal consistency that $\lnot B^\chi\varphi\in x$, hence
  $B^\psi\lnot B^\chi\varphi\in x$ by (NI), and hence
  $\lnot B^\chi\varphi\in y$, contradicting the consistency of $y$
  because $B^\chi\varphi\in y$. So $B^\chi\varphi\in y$ implies
  $B^\chi\varphi\in x$. Conclusion: the Agreement Lemma obtains.  Note
  that we obtain from this lemma by the definition of $\leq$ that that
  $x\leq y$ implies $x\cap B=y\cap B$.

  We prove that $\leq$ is reflexive. By the definition of $\leq$, it
  suffices to prove that $x^{\bar x}\subseteq x$.  Proceeding, since
  $B^{\bar x}\varphi\in x$ implies
  $B^{\bar x}\varphi\in B=\bigcup_{0<i<\omega}B_iC_1$, all we need do
  is prove by induction on $i\geq 1$ that
  $B^{\bar x}\varphi\in x\cap B_iC_1$ implies $\varphi\in x$.
  \begin{itemize}
  \item Induction base: $B^{\bar x}\varphi\in x\cap B_1C_1$.

    We have $\varphi\in C_1$ and hence $\vdash\bar x\to\varphi$ or
    $\vdash\bar x\to\lnot\varphi$.  If $\vdash\bar x\to\lnot\varphi$,
    then it follows by (SC) that $B^{\bar x}\lnot\varphi\in x$.  Since
    we also have $B^{\bar x}\varphi\in x$, we obtain by modal
    reasoning that $B^{\bar x}\bot\in x$, from which it follows by
    (WCon) that $\lnot\bar x\in x$, a contradiction. Therefore it
    follows that $\vdash\bar x\to\varphi$ and hence $\varphi\in x$.

  \item Induction step: we assume the result for $i=1,\dots,k$ and we
    prove the result for $i=k+1$.
    
    Assume $B^{\bar x}\varphi\in x\cap B_{k+1}C_1$.  Since $k\geq 2$,
    it follows that $\varphi=B^\psi\chi$ or
    $\varphi=\lnot B^\psi\chi$. Therefore, by (PR) or (NR) we have
    $B^{\bar x}\bot\lor\varphi\in x$.  As in the induction base, it
    follows by (WCon) that $B^{\bar x}\bot\notin x$. Therefore,
    $\varphi\in x$.
  \end{itemize}
  Conclusion: $\leq$ is reflexive.

  We prove that $\leq $ is transitive.  Proceeding, assume
  $x\leq y\leq z$. This means there exists $a\in(x\cap y\cap C_1)$ and
  $b\in(y\cap z\cap C_1)$ such that $y^a\subseteq x$ and
  $z^b\subseteq x$. It follows that $a\lor b\in(x\cap z\cap C_1)$, and
  so to conclude $x\leq z$, it suffices for us to prove that
  $z^{a\lor b}\subseteq x$. Proceeding, we take an arbitrary
  $B^{a\lor b}\varphi\in x$ and we seek to prove that $\varphi\in
  z$. We consider two cases.
  \begin{itemize}
  \item Case: $B^{a\lor b}\lnot a\in z$.

    Since $B^{a\lor b}(a\lor b)\in z$ by (Succ), it follows by the
    assumption of this case and modal reasoning that
    $B^{a\lor b}b\in z$. Applying the latter and the assumption of the
    case again, we obtain by (CM) that
    $B^{(a\lor b)\land b}\lnot a\in z$, from which it follows by (LE)
    that $B^b\lnot a\in z$.  But $z^b\subseteq y$ and therefore
    $\lnot a\in y$. Since $a\in y$, we have reached a
    contradiction. So this case cannot obtain, and so there is nothing
    more to prove.

  \item Case: $\lnot B^{a\lor b}\lnot a\in z$.

    From the assumption of this case and $B^{a\lor b}\varphi\in z$ we
    obtain by (RM) that $B^{(a\lor b)\land a}\varphi\in z$.  Applying
    (LE), it follows that $B^a\varphi\in z$.  Hence
    $B^bB^a\varphi\in z$ by (PI).  Since $z^b\subseteq y$, it follows
    that $B^a\varphi\in y$.  Since $y^a\subseteq x$, we obtain
    $\varphi\in x$.
  \end{itemize}
  Conclusion: $\leq$ is transitive.

  We prove that $\leq$ is total on each connected component.
  Proceeding, suppose we have $w\in W$ and
  $(x,y)\in\cc(w)\times\cc(w)$.  Since $x\in\cc(w)$ and $y\in\cc(w)$,
  it follows that $v\leq w$ or $w\leq v$ for each
  $v\in\{x,y\}$. Applying the definition of $\leq$ and the Agreement
  Lemma, we obtain $x\cap B=y\cap B$.  We wish to prove that $x\leq y$
  or $y\leq x$.  We consider two cases.
  \begin{itemize}
  \item Case: $B^{\bar x\lor \bar y}\lnot\bar x\in x\cap y$.

    By (Succ), we have
    $B^{\bar x\lor\bar y}(\bar x\lor\bar y)\in x\cap y$.  Applying the
    assumption of this case and modal reasoning, we obtain
    $B^{\bar x\lor\bar y}\bar y\in x\cap y$.  We shall now prove that
    $y\leq x$.  Proceeding, take an arbitrary
    $B^{\bar x\lor y}\varphi\in x$. It follows from this by
    $B^{\bar x\lor\bar y}\bar y\in x$ and (CM) that
    $B^{(\bar x\lor\bar y)\land\bar y}\varphi\in x$. Applying (LE), we
    obtain $B^{\bar y}\varphi\in x$. Since $x\cap B=y\cap B$, we have
    $B^{\bar y}\varphi\in y$. But we saw in the argument for
    reflexivity that $y^{\bar y}\subseteq y$ and therefore
    $\varphi\in y$. So we have proved that for an arbitrary
    $B^{\bar y}\varphi\in y$, we obtain $\varphi\in y$.  That is, we
    have shown that $x^{\bar x\lor\bar y}\subseteq y$. Since
    $\bar x\lor\bar y\in(x\cap y\cap C_1)$, it follows that $y\leq x$.

  \item Case: $\lnot B^{\bar x\lor\bar y}\lnot\bar x\in x\cap y$.

    We prove that $x\leq y$. Proceeding, take an arbitrary
    $B^{\bar x\lor\bar y}\varphi\in y$.  It follows from this by the
    assumption of this case and (RM) that
    $B^{(\bar x\lor\bar y)\land\bar x}\varphi\in y$. Applying (LE),
    $B^{\bar x}\varphi\in y$.  Since $y\cap B=x\cap B$, we obtain
    $B^{\bar x}\varphi\in x$. But we saw in the argument for
    reflexivity that $x^{\bar x}\subseteq x$ and therefore
    $\varphi\in x$.  So we have proved that for an arbitrary
    $B^{\bar x\lor\bar y}\varphi\in y$, we obtain $\varphi\in x$. That
    is, $y^{\bar x\lor\bar y}\subseteq x$. Since
    $\bar x\lor\bar y\in(x\cap y\cap C_1)$, it follows that $x\leq y$.
  \end{itemize}
  Conclusion: $\leq$ is total on each connected component.

  Since $w_\theta\in W$, it follows that $W$ is nonempty.  Therefore,
  $\leq$ is a reflexive and transitive binary relation over the
  nonempty finite set $W$, and $\leq$ is total on each connected
  component.  It follows from the finiteness of $W$ that $\leq$ is
  well-founded.  Therefore, $M$ is a locally well-ordered plausibility
  model. We now prove a few lemmas that will be of assistance.
  
  \emph{Consistency Lemma\/}: for each $x\in W$, if $\psi\in x$, then
  $x^\psi$ is consistent.  We prove this now.  Proceeding, suppose
  $x^\psi$ is not consistent.  It follows that there exists a nonempty
  $\{\chi_1,\dots,\chi_n\}\subseteq x^\psi$ such that
  $\vdash(\bigwedge_{i\leq n}\chi_i)\to\bot$.  By modal reasoning,
  $\vdash(\bigwedge_{i\leq n}B^\psi\chi_i)\to B^\psi\bot$.  Since
  $\chi_i\in x^\psi$ and hence $B^\psi\chi_i\in x$ for each $i\leq n$,
  it follows that $B^\psi\bot\in x$.  Applying (WCon),
  $\lnot\psi\in x$, which contradicts the consistency of $x$ because
  we assumed $\psi\in x$.  Conclusion: $x^\psi$ is consistent.

  \emph{Minimality Lemma\/}: for each $\psi\in C_1$ and $x\in[\psi]$,
  where
  \[
  [\psi]\coloneqq\{x\in W\mid \psi\in x\}\enspace,
  \]
  we have $x\in\min[\psi]$ iff $\lnot B^\psi\lnot\bar x\in x$.
  \begin{itemize}
  \item Left to right: for $\psi\in C_1$ and $x\in[\psi]$, we prove
    $x\in\min[\psi]$ implies $\lnot B^\psi\lnot\bar x\in x$.

    Assume $\psi\in C_1$ and $x\in\min[\psi]$.  Toward a
    contradiction, suppose $B^\psi\lnot\bar x\in x$. Applying the
    Consistency Lemma, $x^\psi$ is consistent and so may be extended
    to a maximal consistent $y\in W$.  Since $B^\psi\psi\in x$ by
    (Succ), it follows that $\psi\in y$. And since
    $B^\psi\lnot\bar x\in x$, it follows that $\lnot\bar x\in y$ and
    therefore that $y\neq x$.  But then $\psi\in(x\cap y\cap C_1)$ and
    $y^\psi\subseteq x$, from which it follows that $y\leq x$.  Since
    $y\in[\psi]$, $y\leq x$, and $x\in\min[\psi]$, it follows that
    $x\leq y$. That is, there exists $\delta\in(x\cap y\cap C_1)$ such
    that $y^\delta\subseteq x$. Now if we had
    $B^\psi\lnot\delta\in x$, then it would follow that
    $\lnot\delta\in y$, contradicting the fact that $\delta\in y$.
    Hence $\lnot B^\psi\lnot\delta\in x$.  Since
    $B^\psi\lnot\bar x\in x$ as well, it follows by (RM) and (Comm)
    that $B^{\delta\land\psi}\lnot\bar x\in x$. Applying (Inc),
    $B^\delta(\psi\to\lnot\bar x)\in x$. Since $x\leq y$ implies
    $x\cap B=y\cap B$, it follows that
    $B^\delta(\psi\to\lnot\bar x)\in y$, from which it follows by
    $y^\delta\subseteq x$ that $\psi\to\lnot\bar x\in x$. Since
    $\psi\in x$, we obtain $\lnot\bar x\in x$, a contradiction. Our
    assumption $B^\psi\lnot\bar x\in x$ must have been incorrect, and
    so we must have $\lnot B^\psi\lnot\bar x\in x$ after all.

  \item Right to left: for $\psi\in C_1$ and $x\in[\psi]$, we prove
    $\lnot B^\psi\lnot\bar x\in x$ implies $x\in\min[\psi]$.

    Assume $\psi\in C_1$, $x\in[\psi]$ and,
    $\lnot B^\psi\lnot\bar x\in x$.  It suffices to show that for each
    $y\in[\psi]\cap\cc(x)$, we have $x\leq y$.  Proceeding, take an
    arbitrary $y\in[\psi]\cap\cc(x)$. It follows that
    $y\cap B=x\cap B$ and that $\psi\in(x\cap y\cap C_1)$.  So from
    $\lnot B^\psi\lnot\bar x\in x$ we obtain
    $\lnot B^\psi\lnot\bar x\in y$.  Now take an arbitrary
    $B^\psi\varphi\in y$.  It follows from this and
    $\lnot B^\psi\lnot\bar x\in y$ by (RM) that
    $B^{\psi\land\bar x}\varphi\in y$.  Since $\psi\in x\cap C_1$, we
    have $\vdash \bar x\to\psi$ and therefore it follows by (LE) that
    $B^{\bar x}\varphi\in y$.  Since $x\cap B=y\cap B$, we obtain
    $B^{\bar x}\varphi\in x$. However, we saw in the argument for
    reflexivity that $x^{\bar x}\subseteq x$, so it follows that
    $\varphi\in x$. That is, we have shown that
    $\psi\in(x\cap y\cap C_1)$ and $y^\psi\subseteq x$. Hence
    $x\leq y$. Since $y\in[\psi]\cap\cc(x)$ was chosen arbitrarily and
    $x\in[\psi]$, it follows that $x\in\min[\psi]$.
  \end{itemize}
  This completes the proof of the Minimality Lemma.

  \emph{Truth Lemma\/}: for each $\varphi\in C_1$, we have
  $[\varphi]=\sem{\varphi}_M$. We prove this now.  The proof is by
  induction on the construction of formulas in $C_1$.  The induction
  base case and Boolean induction step cases are standard, so we only
  consider the induction step case for formulas
  $B^\psi\varphi\in C_1$.  Note: by the definition of $C_1$, we have
  $B^\psi\varphi\in C_1$ iff $B^\psi\varphi\in C_0$. Further, from
  $B^\psi\varphi\in C_0$, it follows by the definition of $C_0$ that
  $\psi\in C_0$ and $\varphi\in C_0$.
  \begin{itemize}
  \item Induction step $B^\psi\varphi$ (left to right): if $x\in W$
    and $B^\psi\varphi\in x\cap C_1$, then $M,x\models B^\psi\varphi$.
    
    Assume $B^\psi\varphi\in x\cap C_1$.  If
    $\min\sem{\psi}\cap\cc(x)=\emptyset$, then, since $M$ is
    well-founded, the result follows immediately by
    Theorem~\ref{theorem:CDL-wf}.  So assume
    $\min\sem{\psi}\cap\cc(x)\neq\emptyset$ and take an arbitrary
    $y\in\min\sem{\psi}\cap\cc(x)$.  Applying the induction
    hypothesis, $y\in\min[\psi]$, from which it follows by the
    Minimality Lemma that $\lnot B^\psi\lnot\bar y\in y$. Since
    $y\in\cc(x)$ implies $x\cap B=y\cap B$, it follows that
    $\lnot B^\psi\lnot\bar y\in x$. It follows from this by the
    assumption and (RM) that $B^{\psi\land\bar y}\varphi\in x$.  Since
    $\psi\in y\cap C_1$, we have $\vdash\bar y\to\psi$ and therefore
    it follows from $B^{\psi\land\bar y}\varphi\in x$ by (LE) that
    $B^{\bar y}\varphi\in x$. Since $x\cap B=y\cap B$, we have
    $B^{\bar y}\varphi\in y$.  But as we saw that
    $y^{\bar y}\subseteq y$ in the argument for reflexivity, it
    follows that $\varphi\in y$ and hence that
    $y\in[\varphi]$. Applying the induction hypothesis,
    $y\in\sem{\varphi}$.  Since $y\in\min\sem{\psi}\cap\cc(x)$ was
    chosen arbitrarily, we have shown that
    $\min\sem{\psi}\cap\cc(x)\subseteq\sem{\varphi}$.  Since $M$ is
    well-founded, it follows by Theorem~\ref{theorem:CDL-wf} that
    $M,w\models B^\psi\varphi$.

  \item Induction step $B^\psi\varphi$ (right to left): if $x\in W$,
    $B^\psi\varphi\in C_1$, and $M,x\models B^\psi\varphi$, then
    $B^\psi\varphi\in x$.
    
    Assume $B^\psi\varphi\in C_1$ and $M,x\models B^\psi\varphi$.
    Since $M$ is well-founded, it follows by
    Theorem~\ref{theorem:CDL-wf} that
    $\min\sem{\psi}\cap\cc(x)\subseteq\sem{\varphi}$.  By the
    induction hypothesis, $\min[\psi]\cap\cc(x)\subseteq[\varphi]$.
    We wish to prove that $B^\psi\varphi\in x$.  We consider two
    cases.
    
    \emph{Case\/}: $\min[\psi]\cap\cc(x)=\emptyset$. Since $M$ is
    well-founded, it follows that $[\psi]\cap\cc(x)=\emptyset$.
    Toward a contradiction, assume $B^\psi\bot\notin x$.  Now if
    $x^\psi$ were not consistent, then there would exist
    $\{\chi_1,\dots,\chi_n\}\subseteq x^\psi$ such that
    $\vdash(\bigwedge_{i\leq n}\chi_i)\to\bot$, from which it would
    follow by modal reasoning and the fact that $B^\psi\chi_i\in x$
    for each $i\leq n$ that $B^\psi\bot\in x$, contradicting our
    assumption that $B^\psi\bot\notin x$. So $x^\psi$ is consistent
    after all, and we may extend this set to some $y\in W$.  Since
    $B^\psi\psi\in x$ by (Succ), it follows that $\psi\in y$ and
    therefore that $y\in[\psi]$.  But $[\psi]\cap\cc(x)=\emptyset$,
    and so we must have that $y\notin\cc(x)$. However,
    $x^\psi\subseteq y$, so it follows by the Agreement Lemma that
    $x\cap B=y\cap B$ and hence
    $x^{\bar x\lor\bar y}=y^{\bar x\lor\bar y}$.  If
    $x^{\bar x\lor\bar y}$ were not consistent, then there would exist
    $\{\chi_1,\dots,\chi_n\}\subseteq x^{\bar x\lor\bar y}$ such that
    $\vdash(\bigwedge_{i\leq n}\chi_i)\to\bot$, from which it would
    follow by modal reasoning and the fact that
    $B^{\bar x\lor\bar y}\chi_i\in x$ for each $i\leq n$ that
    $B^{\bar x\lor\bar y}\bot\in x$, from which we would obtain by
    (WCon) that $\lnot(\bar x\lor\bar y)\in x$ and hence that
    $\lnot\bar x\in x$, a contradiction. So
    $x^{\bar x\lor\bar y}=y^{\bar x\lor\bar y}$ is consistent and so
    may be extended to some $z\in W$.  By (Succ),
    $B^{\bar x\lor\bar y}(\bar x\lor\bar y)\in x\cap y$ and therefore
    $\bar x\lor\bar y\in z$.  But then
    $\bar x\lor\bar y\in(x\cap y\cap z\cap C_1)$,
    $x^{\bar x\lor\bar y}\subseteq z$, and
    $y^{\bar x\lor\bar y}\subseteq z$; that is, $z\leq x$ and
    $z\leq y$, from which it follows that $y\in\cc(x)$, a
    contradiction. Therefore our original assumption that
    $B^\psi\bot\notin x$ must have been incorrect, and so we must have
    $B^\psi\bot\in x$ after all. Applying modal reasoning, we obtain
    $B^\psi\varphi\in x$, as desired.

    \emph{Case\/}: $\min[\psi]\cap\cc(x)\neq\emptyset$. It follows
    that there exists $y\in\min[\psi]\cap\cc(x)$. Toward a
    contradiction, assume $B^\psi\varphi\notin y$.  If
    $y^\psi\cup\{\lnot\varphi\}$ were not consistent, then we would
    have a finite $\{\chi_1,\dots,\chi_n\}\subseteq y^\psi$ such that
    $\vdash(\bigwedge_{i\leq n}\chi_i)\to\varphi$, from which it would
    follow by modal reasoning and the fact that $B^\psi\chi_i\in y$
    for each $i\leq n$ that $B^\psi\varphi\in y$, contradicting the
    consistency of $y$ by our assumption that $B^\psi\varphi\notin y$.
    So $y^\psi\cup\{\lnot\varphi\}$ is indeed consistent and may be
    extended to some $z\in W$.  Since we have $B^\psi\psi\in y$ by
    (Succ), it follows that $\psi\in z\cap y\cap C_1$. So since
    $y^\psi\subseteq z$, it follows that $z\leq y$. But then
    $z\leq y$, $\psi\in z$, and $y\in\min[\psi]\cap\cc(x)$, so it
    follows that $z\in\min[\psi]\cap\cc(x)$.  Since
    $\min[\psi]\cap\cc(x)\subseteq[\varphi]$, it follows that
    $\varphi\in z$. However, by the construction of $z$ as a maximal
    consistent extension of $y^\psi\cup\{\lnot\varphi\}$, we also have
    $\lnot\varphi\in z$. So $z$ is inconsistent, a contradiction.  It
    follows that our assumption $\lnot B^\psi\varphi\in y$ must have
    been incorrect and therefore we must have $B^\psi\varphi\in y$
    after all.  Since $y\in\cc(x)$, we have $x\cap B=y\cap B$ and
    therefore that $B^\psi\varphi\in x$, as desired.
  \end{itemize}
  This completes the proof of the Truth Lemma.  Since
  $\lnot\theta\in w_\theta\cap C_1$, it follows by the Truth Lemma
  that $M,w_\theta\not\models\theta$.  But then we have shown that
  $\nvdash\theta$ implies $M\not\models\theta$ for our locally
  well-founded model $M\in\mathfrak{P}_L$.  Applying
  Theorem~\ref{theorem:CDL-lwo}, $\nvdash\theta$ implies
  $\not\models\varphi$. By contraposition, we have that
  $\models\theta$ implies $\vdash\theta$. So completeness obtains.
\end{proof}

\subsection{Results for AGM revision}

\begin{proof}[Proof of Theorem~\ref{theorem:AGM-wo}]
  For \eqref{i:cbs-base}, assume $T\in\mathfrak{B}_\CPL$ is
  consistent. To say a $S$ is \emph{maximal} $\CPL$\emph{-consistent}
  means that $S\subseteq\Lang_\CPL$, $S$ is consistent, and adding to
  $S$ any $\psi\in\Lang_\CPL$ not already present would result in a
  set that is inconsistent. We make tacit use of various well-known
  facts about maximal consistent sets.  Let $\mathfrak{S}$ be the
  collection of all maximal $\CPL$-consistent sets.  Let
  $\{\psi_i\}_{0<i<\omega}$ be an enumeration of $\Lang_\CPL$. Define
  $\omega^+\coloneqq\omega-\{0\}$ and
  $S+\psi\coloneqq\CPL(S\cup\{\psi\})$. Take
  \begin{align*}
    W \coloneqq\;
    &
      \{ (S,0)\in\mathfrak{S}\times\{0\}\mid
      T\subseteq S\}\;\cup
    \\
    &
      \{ (S,i)\in\mathfrak{S}\times\omega^+\mid
      \lnot\psi_i\in T-\CPL(\emptyset)
      \text{ and } 
      \CPL(\psi_i)\subseteq S\}
      \enspace,
    \\
    {\leq}\coloneqq\;
    &
      \{ ((S,i),(S',i'))\in W\times W \mid i\leq i' \}\enspace,
    \\
    V\coloneqq\;
    &
      \{((S,i),P)\in W\times\wp(\Prop)\mid P=S\cap\Prop\}\enspace,
    \\
    M\coloneqq\;
    &
      (W,\leq,V)\enspace.
  \end{align*}
  Since $T$ is consistent, it follows that there exists
  $S\in\mathfrak{S}$ such that $S\supseteq T$, which implies
  $(S,0)\in W$.  Therefore $W\neq\emptyset$.  Similarly, if
  $i\in\omega^+$ and $\lnot\psi_i\in\CPL(T)-\CPL(\emptyset)$, then it
  follows that $\psi_i$ is consistent and therefore there exists
  $S\in\mathfrak{S}$ such that $S\supseteq\CPL(\psi_i)$, which implies
  $(S,i)\in W$.  The relation $\leq$ on $W$ is a well-order because
  the relation $\leq$ on $\omega$ is a well-order. So $M$ is a
  well-ordered plausibility model.

  \emph{Truth Lemma\/}: for each $\varphi\in\Lang_\CPL$ and
  $(S,i)\in W$, we have $M,(S,i)\models\varphi$ iff $\varphi\in S$.
  The proof is by induction on the construction of
  $\varphi\in\Lang_\CPL$. Induction base: if $\varphi=\bot$, then
  $\bot\notin S$ since $S$ is consistent and $M,(S,i)\not\models\bot$
  by Definition~\ref{definition:CDL-truth}.  Induction base: if
  $\varphi=p\in\Prop$, then the result follows by the definition of
  $V$ and Definition~\ref{definition:CDL-truth}. Induction step: if
  $\varphi=\varphi_1\to\varphi_2$, then the result follows by the
  induction hypothesis and Definition~\ref{definition:CDL-truth}. The
  lemma follows.
  
  \emph{Theory Lemma\/}: for each $\psi\in\Lang_\CPL$, we have
  \[
  M^\down*_M\psi =
  \begin{cases}
    T+\psi & \text{if }\lnot\psi\notin T,
    \\
    \CPL(\psi) & \text{if }\lnot\psi\in T-\CPL(\emptyset),
    \\
    \Lang_\CPL & \text{if } \lnot\psi\in\CPL(\emptyset).
  \end{cases}
  \]
  We prove this now by a case distinction.  Proceeding, for each
  $W'\subseteq W$, define
  \[
  T(W')\coloneqq\{\varphi\in\Lang_\CDL\mid \forall w\in W':
  M,w\models\varphi\} \enspace.
  \]
  \begin{itemize}
  \item Case: $\lnot\psi\notin T$.
    
    Let
    $\mathfrak{S}_\psi\coloneqq\{S\in\mathfrak{S}\mid S\supseteq
    T\cup\{\psi\}\}$.
    It follows by the assumption of this case that
    $\mathfrak{S}_\psi\neq\emptyset$ and hence
    $\mathfrak{S}_\psi\times\{0\}\subseteq W$.  By the Truth Lemma and
    the definition of $\leq$, it follows that
    $\min\sem{\psi}_M=\mathfrak{S}_\psi\times\{0\}$.  Now notice that
    since $S\in\mathfrak{S}_\psi$ is maximal $\CPL$-consistent, it
    follows that for each $S'\subseteq S$ we have
    $\CPL(S')\subseteq S$.  Therefore, since
    $T\cup\{\psi\}\subseteq S$ for each $S\in\mathfrak{S}_\psi$, it
    follows that $T+\psi\subseteq S$ for each $S\in\mathfrak{S}_\psi$.
    So $T+\psi\subseteq\bigcap\mathfrak{S}_\psi$.  Now if we had some
    $\varphi\in\bigcap\mathfrak{S}_\psi$ such that
    $\varphi\notin T+\psi$, then it would follow that
    $\varphi\notin\CPL(T\cup\{\psi\})$ and therefore we could extend
    $T\cup\{\psi,\lnot\varphi\}$ to a maximal $\CPL$-consistent
    $S\in\mathfrak{S}_\psi$ having $\varphi\notin S$, contradicting
    the fact that $\varphi\in\bigcap\mathfrak{S}_\psi$. Therefore it
    must be the case that $\bigcap\mathfrak{S}_\psi\subseteq T+\psi$.
    Hence $T+\psi=\bigcap\mathfrak{S}_\psi$.  But then we have
    \[
    \textstyle
    T(\mathfrak{S}_\psi\times\{0\}) = \bigcap\mathfrak{S}_\psi =
    T+\psi\enspace,
    \]
    where the leftmost equality follows by the Truth Lemma.  Since
    $\min\sem{\psi}_M=\mathfrak{S}_\psi\times\{0\}$, it follows by the
    definition of $*_M$ that $M^\down*_M\psi=T+\psi$.

  \item Case: $\lnot\psi\in T-\CPL(\emptyset)$.

    Define
    $\mathfrak{S}_\psi\coloneqq\{S\in\mathfrak{S}\mid \psi\in S\}$.
    It follows by the definition of $W$ that there exists
    $j\in\omega^+$ and $S\in\mathfrak{S}_\psi$ such that $\psi_j=\psi$
    and $(S,j)\in W$.  So let $i\in\omega^+$ be the least positive
    integer satisfying the property that there exists $(S,i)\in W$
    with $\psi\in S$. By the Truth Lemma, the definition of $\leq$,
    and our choice of $i$, it follows that
    $\min\sem{\psi}_M=\mathfrak{S}_\psi\times\{i\}$. Since
    $S\in\mathfrak{S}_\psi$ is maximal $\CPL$-consistent, it follows
    that for each $S'\subseteq S$ we have $\CPL(S')\subseteq S$.
    Therefore, since $\psi\in S$ for each $S\in\mathfrak{S}_\psi$, it
    follows that $\CPL(\psi)\subseteq S$ for each
    $S\in\mathfrak{S}_\psi$.  So
    $\CPL(\psi)\subseteq\bigcap\mathfrak{S}_\psi$. If we had some
    $\varphi\in\bigcap\mathfrak{S}_\psi$ such that
    $\varphi\notin\CPL(\psi)$, then we could extend
    $\{\psi,\lnot\varphi\}$ to a maximal $\CPL$-consistent
    $S\in\mathfrak{S}_\psi$ having $\varphi\notin S$, contradicting
    the fact that $\varphi\in\bigcap\mathfrak{S}_\psi$. Therefore it
    must be the case that
    $\bigcap\mathfrak{S}_\psi\subseteq\CPL(\psi)$.  Hence
    $\CPL(\psi)=\bigcap\mathfrak{S}_\psi$. But then we have
    \[
    \textstyle T(\mathfrak{S}_\psi\times\{i\})=
    \bigcap\mathfrak{S}_\psi=\CPL(\psi)\enspace,
    \]
    where the leftmost equality follows by the Truth Lemma.  Since
    $\min\sem{\psi}_M=\mathfrak{S}_\psi\times\{i\}$, it follows by the
    definition of $*_M$ that $M^\down*_M\psi=\CPL(\psi)$.

  \item Case: $\lnot\psi\in\CPL(\emptyset)$.  

    It follows from $\lnot\psi\in\CPL(\emptyset)$ by the soundness of
    Classical Propositional Logic that $\not\models\psi$ and therefore
    $\min\sem{\psi}_M=\emptyset$. Hence
    $\min\sem{\psi}_M\subseteq\sem{\varphi}_M$ for each
    $\varphi\in\Lang_\CPL$.  Applying the definition of $*_M$, it
    follows that $M^\down*_M\psi=\Lang_\CPL$.
  \end{itemize}
  The lemma follows.

  % % ZZZ
  % \emph{Level Lemma\/}: defining the sets
  % \begin{align*}
  %   L_i&\coloneqq\{(S,k)\in W\mid k=i\}\enspace\text{ and }
  %   \\
  %   T_i&\coloneqq \{\varphi\in\Lang_\CPL\mid \forall w\in L_i:
  %        M,w\models\varphi\}\enspace,
  % \end{align*}
  % we have
  % \begin{align*}
  %   T 
  %   &=
  %     T_0
  %     \enspace,
  %   \\
  %   T+\psi_i
  %   &=
  %     T_i
  %     \text{ if }
  %     \lnot\psi_i\notin\CPL(T)
  %     \enspace,
  %   \\
  %   \CPL(\psi_i)
  %   &=
  %     T_i
  %     \text{ if }
  %     \lnot\psi_i\in\CPL(T)-\CPL(\emptyset)
  %     \enspace.
  % \end{align*}
  % We prove this now.  Proceeding, for each equality, the inclusion
  % ``$\subseteq$'' follows by the Truth Lemma and the definition of
  % $W$. So it suffices to prove the inclusion ``$\supseteq$'' for each
  % equality.  For each $i\in\omega$, define the set
  % $S_i\subseteq\Lang_\CPL$ as follows: $S_0\coloneqq T$ and, for
  % $i\in\omega^+$,
  % \[
  % S_i\coloneqq
  % \begin{cases}
  %   T+\psi_i & \text{if } \lnot\psi_i\notin\CPL(T), \\
  %   \CPL(\psi_i) & \text{if } \lnot\psi_i\in\CPL(T)-\CPL(\emptyset).
  % \end{cases}
  % \]
  % It is easy to see that each $S_i$ is a theory: $S_i=\CPL(S_i)$. Now
  % take an arbitrary $\varphi\in T_i$. It follows by the Truth Lemma
  % that every maximal $\CPL$-consistent extension of $S_i$ contains
  % $\varphi$. But then it must be that $\varphi\in S_i=\CPL(S_i)$, for
  % otherwise we could extend $S_i\cup\{\lnot\varphi\}$ to a maximal
  % $\CPL$-consistent $S\in\mathfrak{S}$ such that $S\supseteq S_i$, an
  % impossibility.  So we indeed have $\varphi\in S_i$.  The lemma
  % follows.
  
  We now prove that $M$ is a Grove system for $T$. First, since
  $M^\down=L_0$, it follows by the Theory Lemma that $M^\down=T$.
  Second, we have already seen that $M$ is a well-ordered plausibility
  model.  So all that remains is to prove that $M^\down*_M(-)$
  satisfies the AGM revision postulates.  So given an arbitrary
  $\psi\in\Lang_\CPL$, we check each postulate in turn.
  \begin{itemize}
  \item Closure: $M^\down*_M\psi=\CPL(M^\down*_M\psi)$.

    By the Theory Lemma, $M^\down*_M\psi$ is either $T+\psi$,
    $\CPL(\psi)$, or $\Lang_\CPL$.  However, for each of these sets
    $S$, we have $\CPL(S)=S$.

  \item Success:  $\psi\in M^\down*_M\psi$.

    By the Theory Lemma, $M^\down*_M\psi$ is either $T+\psi$,
    $\CPL(\psi)$, or $\Lang_\CPL$. However, each of these sets
    contains $\psi$.

  \item Inclusion: $M^\down*_M \psi\subseteq M^\down+\psi$.

    By the Theory Lemma, $M^\down*_M\psi$ is $T+\psi$ if
    $\lnot\psi\notin T$, is $\CPL(\psi)$ if
    $\lnot\psi\in T-\CPL(\emptyset)$, and is $\Lang_\CPL$ if
    $\lnot\psi\in\CPL(\emptyset)$.  Further $M^\down=T$.  Inclusion
    obviously follows for the case $\lnot\psi\notin T$.  for the case
    $\lnot\psi\in T-\CPL(\emptyset)$, Inclusion follows because
    $\CPL(\psi)\subseteq T+\psi=\CPL(T\cup\{\psi\})$. For the case
    $\lnot\psi\in\CPL(\emptyset)$, Inclusion follows because
    $T+\psi=\CPL(T\cup\{\psi\})=\Lang_\CPL$.

  \item Vacuity: if $\lnot\psi\notin M^\down$, then
    $M^\down*_M\psi=M^\down+\psi$.

    Since $M^\down=T$, the result follows by the Theory Lemma.
 
  \item Consistency: if $\lnot\psi\notin\CPL(\emptyset)$, then
    $\bot\notin\CPL(M^\down*_M\psi)$.

    If $\lnot\psi\notin\CPL(\emptyset)$, then either
    $\lnot\psi\notin T$ or $\lnot\psi\in T-\CPL(\emptyset)$.  Applying
    the Theory Lemma, $M^\down*_M\psi$ is either $T+\psi$ or
    $\CPL(\psi)$.  Since $T$ is consistent and we assumed $\psi$ is
    consistent, each of $T+\psi$ and $\CPL(\psi)$ is consistent.
    Hence $\bot\notin\CPL(M^\down*_M\psi)$.
 
  \item Extensionality: if
    $(\psi\leftrightarrow\psi')\in\CPL(\emptyset)$, then
    $M^\down*_M\psi=M^\down*_M\psi'$.

    Assume $(\psi\leftrightarrow\psi')\in\CPL(\emptyset)$.  It follows
    that $T+\psi=T+\psi'$ and $\CPL(\psi)=\CPL(\psi')$.  The result
    therefore follows by the Theory Lemma.

  \item Superexpansion:
    $M^\down*_M(\psi\land\varphi)\subseteq (M^\down*_M\psi)+\varphi$.

    \textbf{Case:} $\lnot(\psi\land\varphi)\notin T$.  By the Theory
    Lemma, $M^\down*_M(\psi\land\varphi)=T+(\psi\land\varphi)$. Since
    $T$ is a theory, it follows from the assumption of this case that
    $\lnot\psi\notin T$. Applying the Theory Lemma,
    $M^\down*_M\psi=T+\psi$.  The result follows because
    \[
    T+(\psi\land\varphi)= \CPL(T\cup\{\psi\land\varphi\})
    =\CPL(\CPL(T\cup\{\psi\})\cup\{\varphi\}) =(T+\psi)+\varphi\enspace.
    \]

    \textbf{Case:} $\lnot(\psi\land\varphi)\in T-\CPL(\emptyset)$.  By
    the Theory Lemma,
    $M^\down*_M(\psi\land\varphi)=\CPL(\psi\land\varphi)$.  If
    $\lnot\psi\notin T$, then it follows by the Theory Lemma that
    $M^\down*_M\psi=T+\psi$ and hence
    \[
    \CPL(\psi\land\varphi)\subseteq \CPL(T\cup\{\psi,\varphi\})=
    \CPL(\CPL(T\cup\{\psi\})\cup\{\varphi\})=(T+\psi)+\varphi
    \enspace.
    \]
    And if $\lnot\psi\in T-\CPL(\emptyset)$, then it follows by the
    Theory Lemma that $M^\down*_M\psi=\CPL(\psi)$ and hence
    $\CPL(\psi\land\varphi)= \CPL(\{\psi\}\cup\{\varphi\})=
    \CPL(\psi)+\varphi$.
    Finally, if $\lnot\psi\in\CPL(\emptyset)$, then it follows by the
    Theory Lemma that $M^\down*_M\psi=\Lang_\CPL$, from which we
    obtain
    \[
    \CPL(\psi\land\varphi)=\Lang_\CPL= \Lang_\CPL+\varphi\enspace.
    \]

    \textbf{Case:} $\lnot(\psi\land\varphi)\in\CPL(\emptyset)$.  By
    the Theory Lemma, $M^\down*_M(\psi\land\varphi)=\Lang_\CPL$.  If
    $\lnot\psi\notin T$, then it follows by the Theory Lemma that
    $M^\down*_M\psi=T+\psi$; however, since
    $\varphi\land\psi\in(T+\psi)+\varphi$, it follows from the
    assumption of the case that $(T+\psi)+\varphi=\Lang_\CPL$, which
    implies the result.  And if $\lnot\psi\in T-\CPL(\emptyset)$, then
    it follows by the Theory Lemma that $M^\down*_M\psi=\CPL(\psi)$;
    however, since $\varphi\land\psi\in\CPL(\psi)+\varphi$, it follows
    from the assumption of the case that
    $\CPL(\psi)+\varphi=\Lang_\CPL$, which implies the result.
    Finally, if $\lnot\psi\in\CPL(\emptyset)$, then it follows by the
    Theory Lemma that $M^\down*_M\psi=\Lang_\CPL$ and the result
    follows because $\Lang_\CPL+\varphi=\Lang_\CPL$.

  \item Subexpansion: if $\lnot \varphi\notin\CPL(M^\down*_M\psi)$, then
    $(M^\down*_M\psi)+\varphi\subseteq M^\down*_M(\psi\land\varphi)$.
    
    Suppose $\lnot \varphi\notin\CPL(M^\down*_M\psi)$. We consider a
    few cases.

    \textbf{Case:} $\lnot\psi\notin T$. By the Theory Lemma,
    $\CPL(M^\down*_M\psi)=\CPL(T+\psi)=T+\psi$.  If
    $\lnot(\psi\land\varphi)\notin T$, then it follows by the Theory
    Lemma that $M^\down*_M(\psi\land\varphi)=T+(\psi\land\varphi)$,
    from which we obtain the result because
    \[
    (T+\psi)+\varphi= \CPL(\CPL(T\cup\{\psi\})\cup\{\varphi\})=
    \CPL(T\cup\{\psi\}\cup\{\varphi\})=
    \CPL(T\cup\{\psi\land\varphi\})= T+(\psi\land\varphi)\enspace.
    \]
    And if $\lnot(\psi\land\varphi)\in T-\CPL(\emptyset)$, then we
    have $(T+\psi)+\varphi=\Lang_\CPL$ by the definition of $+$;
    however, we originally assumed that
    $\lnot\varphi\notin\CPL(M^\down*_M\psi)=T+\psi$, which implies
    $(T+\psi)+\varphi\neq\Lang_\CPL$, a contradiction that allows us
    to conclude that the hypothesis
    $\lnot(\psi\land\varphi)\in T-\CPL(\emptyset)$ cannot obtain under
    the assumption of this case.  Finally, if
    $\lnot(\psi\land\varphi)\in\CPL(\emptyset)$, then it follows by
    the Theory Lemma that $M^\down*_M(\psi\land\varphi)=\Lang_\CDL$,
    which trivially implies the result.

    \textbf{Case:} $\lnot\psi\in T-\CPL(\emptyset)$. By the Theory
    Lemma, $\CPL(M^\down*_M\psi)=\CPL(\CPL(\psi))=\CPL(\psi)$.  Since
    $T$ is a theory, it follows from the assumption of this case that
    $\lnot(\psi\land\varphi)\in T-\CPL(\emptyset)$. Applying the
    Theory Lemma,
    $M^\down*_M(\psi\land\varphi)=\CPL(\psi\land\varphi)$.  But then
    the result follows because
    \[
    \CPL(\psi)+\varphi=
    \CPL(\CPL(\psi)\cup\{\varphi\})=
    \CPL(\{\psi\}\cup\{\varphi\})=
    \CPL(\psi\land\varphi)
    \enspace.
    \]

    \textbf{Case:} $\lnot\psi\in\CPL(\emptyset)$. By the Theory Lemma
    $\CPL(M^\down*_M\psi)=\CPL(\Lang_\CPL)=\Lang_\CPL$.  It follows
    from the assumption of this case that
    $\lnot(\psi\land\varphi)\in\CPL(\emptyset)$.  Applying the Theory
    Lemma, $M^\down*_M\psi=\Lang_\CPL$. The result follows.
  \end{itemize}
  Conclusion: $M$ is a Grove system for $T$.

  For \eqref{i:*=>AGM}, if $T$ is inconsistent, then since $M_*$ is a
  system of spheres, it follows that $T*(-)=M_*^\down*_M(-)$ satisfies
  the AGM revision postulates.  And if $T$ is consistent, then since
  $T*\psi=T*_{M_T}\psi$ and $M_T$ is a Grove system for $T$, we have
  $M^\down=T$ and that $M$ is a system of spheres; therefore,
  $T*_{M_T}(-)=T*(-)$ satisfies the AGM revision
  postulates. Conclusion: $*$ is an AGM revision operator.
\end{proof}

\subsection{Results for \texorpdfstring{$\JCDL$}{JCDL}}

\begin{proof}[Proof of Lemma~\ref{lemma:elim-t-necess}]
  For this argument, a \emph{derivation} is a $\JCDL$-derivation.  We
  wish to establish the following result:
  \begin{equation}
    \pi\vdash_\JCDL^n\varphi
    \quad\Rightarrow\quad
    \exists \pi_*\supseteq\pi,\;\;
    \pi_*\vdash_\JCDL^0\varphi\enspace.
    \label{eq:N-elim}
  \end{equation}
  The proof of \eqref{eq:N-elim}, an adaptation of
  \cite[Lemma~4.6]{BalRenSme14:APAL} to the present setting, is by
  induction on $n$ with a sub-induction on the length $|\pi|$ of
  $\pi$.
  \begin{itemize}
  \item \textsl{Induction base:} $n=0$. Take $\pi_*=\pi$.

  \item \textsl{Induction step:} assume the result holds for all $m<n$
    (this is the ``induction hypothesis''), and prove it holds for $n$
    by a sub-induction on $|\pi|$.

    \medskip\textsl{Sub-induction base:} $|\pi|=1$. Since $\pi$
    contains a single line, $\varphi$ is an axiom and therefore
    $\pi\vdash_\JCDL^0\varphi$. Take $\pi_*=\pi$.

    \medskip\textsl{Sub-induction step:} assume the result holds for
    all derivations $\pi'$ having $|\pi'|<|\pi|$ (this is the
    ``sub-induction hypothesis''), and prove it holds for $\pi$.
    Proceeding, assume $\pi\vdash_\JCDL^n\varphi$.  If
    $\pi\vdash_\JCDL^0\varphi$, we are done. So let us assume further
    that $\pi\nvdash_\JCDL^0\varphi$. Therefore
    \[
    \pi = \theta_1,\dots,\theta_{|\pi|-1},\varphi
    \]
    contains at least one troublesome necessitation.  It follows that
    there exists a shortest prefix $\pi'$ of $\pi$ whose last line is
    a troublesome necessitation: we have
    \[
    \pi' = \theta_1,\dots,\theta_{|\pi'|-1},\dot B^\delta\theta_m
    \]
    with $|\pi'|$ the minimum value such that $\dot B^\delta\theta_m$
    neither is not a possibly necessitated axiom nor follows from a
    previous line by (MP).  We consider two cases.

    \medskip Case: $|\pi'|<|\pi|$.  By the sub-induction hypothesis,
    there exists $\pi'_*\supseteq\pi'$ such that
    $\pi'_*\vdash_\JCDL^0\dot B^\delta\theta_m$. Let $\sigma$ be the
    suffix of $\pi$ such that $\pi'\sigma=\pi$, where we have denoted
    sequence concatenation by juxtaposition.  Since $\pi'_*$ is a
    derivation and $\pi'_*\supseteq\pi'$, it follows that
    $\pi'_*\sigma$ is a derivation and this derivation has the same
    last line as $\pi$.  Therefore, since we chose $\pi'$ as the
    shortest prefix $\pi'$ of $\pi$ whose last line is a troublesome
    necessitation, we have $\pi'_*\sigma\vdash_\JCDL^{n-1}\varphi$.
    Applying the induction hypothesis, it follows that there exists a
    derivation $\pi_*\supseteq\pi'_*\sigma\supseteq\pi'\sigma=\pi$
    such that $\pi_*\vdash_\JCDL^0\varphi$.

    \medskip Case: $\pi'=\pi$.  Hence $\varphi=\dot B^\delta\theta_m$.
    Since the shortest prefix of $\pi$ whose last line is a
    troublesome necessitation is $\pi$ itself, it follows that
    $\pi\vdash_\JCDL^1\dot B^\delta\theta_m$. But then $\theta_m$ is not
    a troublesome necessitation. Moreover, $\theta_m$ is not a
    possibly necessitated axiom (since if it were,
    $\dot B^\delta\theta_m$ would not be a troublesome necessitation,
    contrary to our assumption implying that it is).  So $\theta_m$
    must follow by way of (MN) from lines $\theta_n\to\theta_m$ and
    $\theta_n$ appearing earlier in $\pi$ than line $m$.  That is,
    \[
    \pi = \begin{array}[t]{l}
            \sigma,\theta_m,\tau,X\theta_m \\[.3em]
            \text{with both $\theta_n\to\theta_m$ and $\theta_n$ in $\sigma$}
          \end{array}
    \]
    where $\sigma$ and $\tau$ denote sequences of formulas (and we
    note that $\tau$ may be empty). By (eMN), we have derivations
    \begin{align*}
      \pi^1 &= \sigma,\dot B^\delta\theta_n 
      & \text{with }|\pi^1|\leq m<|\pi| 
      \\
      \pi^2 &= \sigma,\dot B^\delta(\theta_n\to\theta_m) 
      & \text{with }|\pi^2|\leq m<|\pi|
    \end{align*}
    such that $\pi^1\vdash_\JCDL^1\dot B^\delta\theta_n$ and
    $\pi^2\vdash_\JCDL^1\dot B^\delta(\theta_n\to\theta_m)$.  Applying
    the sub-induction hypothesis to $\pi^1$ and to $\pi^2$, there
    exist $\pi^1_*\supseteq\pi^1$ and $\pi^2_*\supseteq\pi^2$ such
    that $\pi^1_*\vdash_\JCDL^0\dot B^\delta\theta_n$ and
    $\pi^2_*\vdash_\JCDL^0\dot B^\delta(\theta_n\to\theta_m)$.
    Recalling the abbreviation
    $\dot B^\delta\chi=c_\chi\colu\delta\chi$ and making use of an
    instance of (eK), the sequence
    \begin{align*}
      \pi_* =
      &\quad
        \pi^1_*,\pi^2_*,\theta_m,\tau,
      \\
      &\quad
        \dot B^\delta(\theta_n\to\theta_m)\to
        (\dot B^\delta\theta_n\to
        (c_{\theta_n\to\theta_m}\cdot{}c_{\theta_n})\colu\delta\theta_m),
      \\
      &\quad
        \dot B^\delta\theta_n\to
        (c_{\theta_n\to\theta_m}\cdot{}c_{\theta_n})\colu\delta\theta_m,
      \\
      &\quad
        (c_{\theta_n\to\theta_m}\cdot{}c_{\theta_n})\colu\delta\theta_m
    \end{align*}
    is a derivation satisfying $\pi_*\supseteq\pi$
    and $\pi_*\vdash_\JCDL^0\varphi$.
  \end{itemize}
  This completes the proof that \eqref{eq:N-elim} holds.
\end{proof}

\begin{proof}[Proof of Theorem~\ref{theorem:internalization}]
  In light of Lemma~\ref{lemma:elim-t-necess}, to prove the statement
  of the present theorem, it suffices to prove the following: if
  $\vdash_\JCDL^0\varphi$, then there exists a logical term $t$ such
  that $\vdash_\JCDL^0 t\colu\delta\varphi$.  This we prove by
  induction on the length of derivation.
  \begin{itemize}
  \item Induction base and induction step for (eMN): $\varphi$ is a
    possibly necessitated axiom.  It follows that
    $\vdash_\JCDL^0 c_\varphi\colu\delta\varphi$ for the logical term
    $c_\varphi$.

  \item Induction step for (MP): we assume
    $\vdash_\JCDL^0\varphi\to\psi$ and $\vdash_\JCDL^0\varphi$ along
    with the following ``induction hypothesis'': there exist logical
    terms $t$ and $s$ such that
    $\vdash_\JCDL^0 t\colu\delta(\varphi\to\psi)$ and
    $\vdash_\JCDL^0 s\colu\delta\varphi$. But then it follows by the
    induction hypothesis, (eK), and two applications of (MP) that
    $\vdash_\JCDL^0 (t\cdot s)\colu\delta\psi$ for the logical term
    $t\cdot s$.
  \end{itemize}
  The result follows.
\end{proof}

\begin{proof}[Proof of Theorem~\ref{theorem:realization}]
  For Item~\ref{i:projection}, we prove by induction on the length of
  derivation in $\JCDL$ from hypotheses $\Gamma$ that
  \[
  \Gamma\vdash_\JCDL\varphi \quad\text{implies}\quad
  \Gamma^\circ\vdash_\CDL\varphi^\circ\enspace.
  \]
  \begin{itemize}
  \item Induction base for hypotheses: if $\varphi\in\Gamma$, then
    $\varphi^\circ\in\Gamma^\circ$.  So
    $\Gamma^\circ\vdash_\CDL\varphi^\circ$.

  \item Induction base for axioms: $\varphi$ is an axiom. But for each
    axiom scheme (e$X$) of $\JCDL$ for which there is a ``matching''
    axiom scheme ($X$) of $\CDL$, the forgetful projection of an
    instance of the $\JCDL$ axiom is an instance of the matching
    $\CDL$ scheme. Regarding the three remaining $\JCDL$ schemes
    (eSum), (eCert), and (eA) that have no matching $\CDL$ scheme:
    each of (eSum) and (eCert) is mapped to an instance of the
    $\CDL$-theorem $p\imp p$, and (eA) is mapped to an instance of the
    $\CDL$-theorem $q\to(p\to p)$.  Conclusion:
    $\Gamma^\circ\vdash_\CDL\varphi^\circ$.

  \item Induction step for (MP): we assume the result holds for
    $\JCDL$-theorems $\varphi\to\psi$ and $\varphi$ derived from
    hypotheses $\Gamma$, and we prove that
    $\Gamma^\circ\vdash_\CDL\psi^\circ$. By our assumption,
    $\varphi^\circ\to\psi^\circ$ and $\varphi^\circ$ are
    $\CDL$-theorems derived from hypotheses $\Gamma^\circ$. Applying
    (MP), $\psi^\circ$ as a $\CDL$-theorem derived from hypotheses
    $\Gamma^\circ$.

  \item Induction step for (eMN): we assume the result holds for the
    $\JCDL$-theorem $\varphi$ derived from hypotheses $\Gamma$, and we
    prove that the result holds for the $\JCDL$-theorem
    $\dot B^\psi\varphi$ derived from hypotheses $\Gamma$; that is, we
    prove that $\Gamma^\circ\vdash_\CDL (\dot B^\psi\varphi)^\circ$.
    By our assumption, $\varphi^\circ$ is a $\CDL$-theorem derived
    from hypotheses $\Gamma^\circ$.  Applying (MN) and the definition
    of the forgetful projection,
    $B^{\psi^\circ}\varphi^\circ=(\dot B^\psi\varphi)^\circ$ is a
    $\CDL$-theorem derived from hypotheses in $\Gamma^\circ$.
  \end{itemize}
  This completes the proof of Item~\ref{i:projection}. For
  Item~\ref{i:t-realization}, we by induction on the length of
  derivation in $\CDL$ from hypotheses $\Delta$ that
  \[
  \Delta\vdash_\CDL\psi \quad\text{implies}\quad
  \Delta^t\vdash_\JCDL\psi^t\enspace.
  \]
  \begin{itemize}
  \item Induction base for hypotheses: If $\psi\in \Delta$, then
    $\psi^t\in\Delta^t$. So
    $\Delta^t\vdash_\CDL\psi^t$.

  \item Induction base for axiom scheme: $\psi$ is an axiom scheme of
    $\CDL$, so we consider each possibility.
    \begin{itemize}
    \item $\text{(CL)}^t$ is an instance of (CL) and therefore a
      $\JCDL$-derivable from hypotheses $\Delta^t$.

    \item $\text{(K)}^t$ is the scheme
      $\dot B^{\gamma^t}(\varphi_1^t\to\varphi_2^t)\to (\dot
      B^{\gamma^t}\varphi_1^t\to \dot B^{\gamma^t}\varphi_2^t)$.
      This has the form
      \[
      \dot B^\delta(\chi_1\to\chi_2)\to (\dot B^\delta\chi_1\to \dot
      B^\delta\chi_2)\enspace.
      \]
      Applying (eK), we obtain
      \[
      \Delta^t\vdash_\JCDL\dot B^\delta(\chi_1\to\chi_2)\to (\dot
      B^\delta\chi_1\to (c_{\chi_1\to\chi_2}\cdot
      c_{\chi_1})\colu\delta\chi_2)\enspace.
      \]
      Applying (eCert) and classical reasoning,
      \[
      \Delta^t\vdash_\JCDL\dot B^\delta(\chi_1\to\chi_2)\to (\dot
      B^\delta\chi_1\to \dot B^\delta\chi_2)\enspace.
      \]
      That is, $\Delta^t\vdash_\JCDL \text{(K)}^t$.

    \item For
      $X\in\{\text{Succ},\text{RM},\text{Inc},\text{Comm},\text{PI},\text{NI}\}$:
      $(X)^t$ is a schematic instance of $(\text{e}X)$.

    \item Similar to the argument for $\text{(K)}^t$:
      \begin{itemize}
      \item $\text{(KM)}^t$ follows by (eKM) and (eCert), and
        
      \item $\text{(WCon)}^t$ follows by (eWCon) and (eCert).
      \end{itemize}
    \end{itemize}
    
  \item Induction step for (MP): we assume the result holds for the
    $\CDL$-theorems $\varphi_1\to\varphi_2$ and $\varphi_1$ derived
    from hypotheses $\Delta$, and we prove the result holds for the
    $\CDL$-theorem $\varphi_2$ derived from hypothesis $\Delta$.  By
    our assumption,
    \[
    (\varphi_1\to\varphi_2)^t=\varphi_1^t\to\varphi_2^t
    \quad\text{and}\quad \varphi_1^t
    \]
    is a $\JCDL$-theorem derivable from hypotheses
    $\Delta^t$. So it follows by (MN) that $\varphi_2^t$
    is a $\JCDL$-theorem derivable from hypotheses $\Delta^t$ as
    well.

  \item Induction step for (MN): we assume the result holds for the
    $\CDL$-theorem $\varphi$ derived from hypotheses $\Delta$, and we
    prove the result holds for the $\CDL$-theorem $B^\gamma\varphi$
    derived from hypotheses $\Delta$; that is, we prove that
    $\Delta^t\vdash_\JCDL (B^\gamma\varphi)^t$.  By our
    assumption, the formula $\varphi^t$ is a $\JCDL$-theorem
    derived from hypotheses $\Delta^t$, but then it follows by
    (eMN) that
    $\dot B^{\gamma^t}\varphi^t=(\dot
    B^\gamma\varphi)^t$.
    is a $\JCDL$-theorem derived from hypotheses $\Delta^t$.
  \end{itemize}
  This completes the proof of Item~\ref{i:t-realization}.
\end{proof}

\begin{proof}[Proof of Theorem~\ref{theorem:JCDL-determinacy}]
  This proof uses much of the work from the proof of
  Theorem~\ref{theorem:CDL-completeness} (the ``old proof'').  In
  utilizing portions of the argument from the old proof in the present
  argument (the ``new proof''), we adopt the following conventions
  (the ``$\JCDL$-conventions''):
  \begin{itemize}
  \item the language (and set of formulas) is assumed to be
    $\Lang_\JCDL$;

  \item occurrences of ``$B\,$'' from the old proof are replaced by
    occurrences of ``$\dot B\,$'';
   
  \item truth or validity for $\CDL$ is replaced by truth or validity
    for $\JCDL$;

  \item use of Theorem~\ref{theorem:CDL-wf} or \ref{theorem:CDL-wo} is
    replaced by use of Theorem~\ref{theorem:JCDL-wf-B};

  \item use of Theorem~\ref{theorem:CDL-lwo} is replaced by
    Theorem~\ref{theorem:JCDL-lwo};

  \item derivability in $\CDL$ is replaced by derivability in $\JCDL$

  \item use of the non-subscripted turnstile $\vdash$ denotes
    $\vdash_\JCDL$;

  \item use of Theorem~\ref{theorem:CDL-theorems} is replaced by
    Theorem~\ref{theorem:JCDL-theorems}; and

  \item use of a derivable principle $(X)$ of $\CDL$ (perhaps by tacit
    use of Theorem~\ref{theorem:CDL-theorems}) is replaced by use of
    the corresponding derivable principle $(\text{e}X)$ of $\JCDL$
    (with corresponding tacit use of
    Theorem~\ref{theorem:JCDL-theorems} when appropriate).
  \end{itemize}
  Having established the above $\JCDL$-conventions and the ``old/new
  proof'' terminology, we proceed.

  For soundness, we proceed by induction on the length of
  derivation. In the induction base, we must show that each axiom
  scheme is valid. (CL) is straightforward, so we proceed with the
  remaining schemes.  Let $(M,w)$ be an arbitrary well-ordered pointed
  Fitting model. We make tacit use of
  Theorems~\ref{theorem:wf-smooth}\eqref{i:wo} and
  \ref{theorem:JCDL-wo}.
  
  \bigskip
  \noindent (eCert) is valid
  $\models t\colu\psi\varphi\to\dot B^\psi\varphi$.
  \begin{pquote}
    Assume $(M,w)$ satisfies $t\colu\psi\varphi$.  Then
    $w\in A(t,\varphi)$ and $\min\sem{\psi}\subseteq\sem{\varphi}$.
    Applying Theorem~\ref{theorem:JCDL-wf-B}, $(M,w)$ satisfies
    $B^\psi\varphi$.
  \end{pquote}
  
  \bigskip\noindent (eK) is valid:
  $\models t\colu\psi(\varphi_1\imp\varphi_2)\imp
  (s\colu\psi\varphi_1\imp(t\cdot s)\pcolu\psi\varphi_2)$.
  \begin{pquote}
    Assume $(M,w)$ satisfies $t\colu\psi(\varphi_1\to\varphi_2)$ and
    $s\colu\psi\varphi_1$. Then
    $w\in A(t,\varphi_1\to\varphi_2)\cap A(s,\varphi_1)$,
    $\min\sem{\psi}\subseteq\sem{\varphi_1\to\varphi_2}$, and
    $\min\sem{\psi}\subseteq\sem{\varphi_1}$.  Hence
    $\min\sem{\psi}\subseteq\sem{\varphi_2}$ and, by Application
    (Definition~\ref{definition:F-model}),
    $w\in A(t\cdot s,\varphi_2)$.  Conclusion: $(M,w)$ satisfies
    $(t\cdot s)\pcolu\psi\varphi_2$.
    
  \end{pquote}
  
  \bigskip\noindent (eSum) is valid:
  $\models (t\colu\psi\varphi\lor s\colu\psi\varphi)\imp
  (t+s)\colu\psi\varphi$.
  \begin{pquote}
    Assume $(M,w)$ satisfies
    $t\colu\psi\varphi\lor s\colu\psi\varphi$.  Then
    $w\in A(t,\varphi)\cup A(s,\varphi)$ and
    $\min\sem{\psi}\subseteq\sem{\varphi}$.  By Sum
    (Definition~\ref{definition:F-model}), $w\in A(t+s,\varphi)$.
    Conclusion: $(M,w)$ satisfies $(t+s)\colu\psi\varphi$.
  \end{pquote}

  \bigskip\noindent (eSucc) is valid: $\models \dot B^\psi\psi$.
  \begin{pquote}
    We have $\min\sem{\psi}\subseteq\sem{\psi}$.  Applying
    Theorem~\ref{theorem:JCDL-wf-B}, $(M,w)$ satisfies
    $\dot B^\psi\psi$.
  \end{pquote}

  \bigskip\noindent (eKM) is valid:
  $\models t\colu\psi\bot\to t\colu{\psi\land\varphi}\bot$.
  \begin{pquote}
    Assume $(M,w)$ satisfies $t\colu\psi\bot$.  Then $w\in A(t,\bot)$
    and $\min\sem{\psi}\subseteq\sem{\bot}$. It follows from the
    latter by the old proof that
    $\min\sem{\psi\land\varphi}\subseteq\sem{\bot}$. Conclusion:
    $(M,w)$ satisfies $t\colu{\psi\land\varphi}\bot$.
  \end{pquote}

  \bigskip\noindent (eRM) is valid:
  $\models \lnot\dot B^\psi\lnot\varphi\to (t\colu\psi\chi\to
  t\colu{\psi\land\varphi}\chi)$.
  \begin{pquote}
    Assume $(M,w)$ satisfies $\lnot\dot B^\psi\lnot\varphi$ and
    $t\colu\psi\chi$. 
    
    Since
    $\lnot\dot B^\psi\lnot\varphi=\lnot
    c_{\lnot\varphi}\colu\psi\lnot\varphi$,
    it follows by Certification (Definition~\ref{definition:F-model})
    that $\min\sem{\psi}\nsubseteq\sem{\varphi}$.  And it follows from
    $t\colu\psi\chi$ that $w\in A(t,\chi)$ and
    $\min\sem{\psi}\subseteq\sem{\chi}$.  By the old proof, we have by
    $\min\sem{\psi}\nsubseteq\sem{\varphi}$ and
    $\min\sem{\psi}\subseteq\sem{\chi}$ that
    $\min\sem{\psi\land\varphi}\subseteq\sem{\chi}$.  Since
    $w\in A(t,\chi)$, we conclude that $(M,w)$ satisfies
    $t\colu{\psi\land\varphi}\chi$.
  \end{pquote}

  \bigskip\noindent (eInc) is valid:
  $\models t\colu{\psi\land\varphi}\chi\to \dot
  B^\psi(\varphi\to\chi)$.
  \begin{pquote}
    Suppose $(M,w)$ satisfies $t\colu{\psi\land\varphi}\chi$.  Then
    $w\in A(t,\chi)$ and
    $\min\sem{\psi\land\varphi}\subseteq\sem{\chi}$.  It follows from
    the latter by the old proof that
    $\min\sem{\psi}\subseteq\sem{\varphi\to\chi}$.  Conclusion:
    $(M,w)$ satisfies $\dot B^\psi(\varphi\to\chi)$.
  \end{pquote}

  \bigskip\noindent (eComm) is valid:
  $\models t\colu{\psi\land\varphi}\chi\to
  t\colu{\varphi\land\psi}\chi$.
  \begin{pquote}
    Suppose $(M,w)$ satisfies $t\colu{\psi\land\varphi}\chi$.  Then
    $w\in A(t,\chi)$ and
    $\min\sem{\psi\land\varphi}\subseteq\sem{\chi}$.  Hence
    $\min\sem{\varphi\land\psi}\subseteq\sem{\chi}$.  Conclusion:
    $(M,w)$ satisfies $t\colu{\varphi\land\psi}\chi$.
  \end{pquote}

  \bigskip\noindent (ePI) is valid:
  $\models t\colu\psi\chi\to\dot B^\varphi(t\colu\psi\chi)$.
  \begin{pquote}
    Suppose $(M,w)$ satisfies $t\colu\psi\chi$.  Then $w\in A(t,\chi)$
    and $\min\sem{\psi}\subseteq\sem{\chi}$.  The latter implies that
    $(M,v)$ satisfies $\dot B^\psi\chi$ for any given $v\in W$.  Since
    $M$ is well-ordered, we have $v\in\cc(w)$ for each $v\in W$, and
    therefore it follows from $w\in A(t,\chi)$ by Admissibility
    Indefeasibility (Definition~\ref{definition:F-model}) that
    $A(t,\chi)=W$.  But then for each $v\in W$, we have
    $v\in A(t,\chi)$ and $M,v\models\dot B^\psi\chi$.  Applying
    Theorem~\ref{theorem:JCDL-truth-B}, we have for each $v\in W$ that
    $M,v\models t\colu\psi\chi$.  Therefore, $\sem{t\colu\psi\chi}=W$,
    from which it follows that
    $\min\sem{\varphi}\subseteq\sem{t\colu\psi\chi}$.  Applying
    Theorem~\ref{theorem:JCDL-wf-B}, we conclude that $(M,w)$
    satisfies $\dot B^\varphi(t\colu\psi\chi)$.
  \end{pquote}

  \bigskip\noindent (eNI) is valid:
  $\models \lnot t\colu\psi\chi\to\dot B^\varphi(\lnot
  t\colu\psi\chi)$.
  \begin{pquote}
    Suppose $(M,w)$ satisfies $\lnot t\colu\psi\chi$. Then
    $w\notin A(t,\varphi)$ or $\min\sem{\psi}\nsubseteq\sem{\chi}$.
    
    Case: $w\notin A(t,\varphi)$.  Since $M$ is well-ordered, we have
    $v\in\cc(w)$ for each $v\in W$, and so it follows from
    $w\notin A(t,\varphi)$ by Admissibility Indefeasibility
    (Definition~\ref{definition:F-model}) that
    $A(t,\varphi)=\emptyset$.  Therefore,
    $\sem{\lnot t\colu\psi\chi}=W$, from which it follows that
    $\min\sem{\varphi}\subseteq\sem{\lnot t\colu\psi\chi}$.  Applying
    Theorem~\ref{theorem:JCDL-wf-B}, we conclude that $(M,w)$
    satisfies $\dot B^\varphi(\lnot t\colu\psi\chi)$.

    Case: $\min\sem{\psi}\nsubseteq\sem{\chi}$.  It follows that
    $M,v\models\lnot t\colu\psi\chi$ for each $v\in\cc(w)=W$.
    Therefore, $\sem{\lnot t\colu\psi\chi}=W$, from which it follows
    that $\min\sem{\varphi}\subseteq\sem{\lnot t\colu\psi\chi}$.
    Applying Theorem~\ref{theorem:JCDL-wf-B}, we conclude that $(M,w)$
    satisfies $\dot B^\varphi(\lnot t\colu\psi\chi)$.
  \end{pquote}

  \bigskip\noindent (eWCon) is valid: $\models t\colu\psi\bot\to\lnot\psi$.
  \begin{pquote}
    Suppose $(M,w)$ satisfies $t\colu\psi\bot$.  By the old proof,
    $\sem{\lnot\psi}=W$. So $(M,w)$ satisfies $\lnot\psi$.
  \end{pquote}

  \bigskip\noindent (eA) is valid:
  $\models t\colu\psi\varphi\to(\dot B^\chi\varphi\to
  t\colu\chi\varphi)$.
  \begin{pquote}
    Suppose $(M,w)$ satisfies $t\colu\psi\varphi$ and
    $\dot B^\chi\varphi$.  Then $w\in A(t,\varphi)$ and
    $\min\sem{\chi}\subseteq\sem{\varphi}$.  Conclusion: $(M,w)$
    satisfies $t\colu\chi\varphi$.
  \end{pquote}

  \bigskip\noindent This completes the induction base. For the
  induction step, we must show that validity is preserved under the
  rules of (MP) and (eMN).  The argument for (MP) is standard, so let
  us focus on (eMN).  We assume $\models\varphi$ for the
  $\JCDL$-derivable $\varphi$ (this is the ``induction hypothesis''),
  and we prove that $\models\dot B^\psi\varphi$. Proceeding, since
  $\models\varphi$, it follows that $\sem{\varphi}=W$ and hence that
  $\min\sem{\psi}\subseteq\sem{\varphi}$.  By Certification
  (Definition~\ref{definition:F-model}, $A(c_\varphi,\varphi)=W$.  But
  then $(M,w)$ satisfies
  $c_\varphi\colu\psi\varphi=\dot B^\psi\varphi$.  Soundness has been
  proved.

  Since $\JCDL$ is sound with respect to the class of well-ordered
  Fitting models we note that $\JCDL$ is consistent (i.e.,
  $\nvdash_\JCDL\bot$).  In particular, take any pointed Fitting model
  $(M,w)$ containing only the single world $w$.  It is simple to
  construct such a model: take
  \[
  M\coloneqq(\{w\},\{(w,w)\},\{(w,\emptyset)\},A)\enspace,
  \]
  where $A$ is the ``total'' admissibility function defined by setting
  $A(t,\varphi)\coloneqq W$ for all
  $(t,\varphi)\in\Term_\JCDL\times\Lang_\JCDL$. It is obvious that the
  requisite properties from Definition~\ref{definition:F-model}
  obtain. Since there is only one world, $M$ is well-ordered.
  Further, by soundness, we have that $\vdash_\JCDL\varphi$ implies
  $M,w\models\varphi$. Therefore, since $M,w\not\models\bot$ by
  Definition~\ref{definition:JCDL-truth}, it follows that
  $\nvdash_\JCDL\bot$. That is, $\JCDL$ is consistent.  We make use of
  this fact tacitly in what follows.

  For completeness, take a formula $\theta$ such that
  $\nvdash\lnot\theta$.  Define consistency, inconsistency, maximal
  consistency in (``maxcons in'') a set of formulas, the set
  $\sub(\varphi)$ of subformulas of $\varphi$ (including $\varphi$
  itself), the set $\sub(S)$ containing all subformulas of each
  formula in the set $S$ (including the formulas themselves), and the
  Boolean closure ${\oplus}S$ as in the old proof (but of course using
  the $\JCDL$-conventions).  For a set $S$ of formulas, we define:
  \begin{align*}
    {\vec B}S
    &\coloneqq
      S\cup\{\dot B^\psi\varphi\mid t\colu\psi\varphi\in S\}
      \enspace,
    \\
    {\pm}S 
    &\coloneqq 
      S\cup\{\lnot\varphi\mid\varphi\in S\}
      \enspace,
    \\
    \dot B_0S
    &\coloneqq S\enspace,
    \\
    \dot B_{i+1}S
    &\coloneqq
      {\pm}\{
      \dot B^\psi\varphi \mid 
      \psi\in S \text{ and } 
      \varphi\in \dot B_iS
      \}
      \enspace,
    \\
    \dot B_\omega S
    &\coloneqq\textstyle\bigcup_{0<i<\omega}\dot B_iS
      \enspace,
    \\
    C_0
    &\coloneqq{\pm}\vec{B}\,\sub(\{\theta,\bot,\top\})
      \enspace,
    \\
    C_1
    &\coloneqq{\oplus}C_0
      \enspace,
    \\
    \dot B
    &\coloneqq\dot B_\omega C_1\enspace,
    \\
    C 
    &\coloneqq C_1\cup\dot B
      \enspace,
    \\
    T_0
    &\coloneqq 
      {\pm}\{ t\colu\psi\varphi\in\Lang_\JCDL \mid
      t\colu\psi\varphi\in C_0\}
      \enspace.
  \end{align*}
  Key differences from the old proof:
  \begin{itemize}
  \item the new operator $\vec{B}S$ adds the ``certified version''
    $\dot B^\psi\varphi=c_\varphi\colu\psi\varphi$ of each formula
    $t\colu\psi\varphi$ in $S$,

  \item $C_0$ has been changed by adding the operator $\vec{B}$,

  \item every ``$B\,$'' in the old proof has been replaced by
    ``$\dot B\,$'', 

  \item we use the $\JCDL$-conventions (e.g., we are working in the
    language of $\JCDL$), and
    
  \item the set $T_0$ (used later) is new.
  \end{itemize}
  We then define:
  \begin{align*}
    W
    &\coloneqq\{x\subseteq C\mid x \text{ is maxcons in } C\}
      \enspace,
    \\
    \bar x
    &\textstyle\coloneqq
      \bigwedge(x\cap C_0) \text{ for } x\in W
      \enspace,
    \\
    x^\psi
    &\textstyle\coloneqq
      \{\varphi\mid\dot B^\psi\varphi\in x\}
      \text{ for } x\in W \text{ and } \psi\in C_1
      \enspace,
    \\
    {\leq }
    &\coloneqq
      \{(x,y)\in W\times W\mid
      \exists \psi\in(x\cap y\cap C_1),\,
      y^\psi\subseteq x
      \}
      \enspace,
    \\
    V(x)
    &\coloneqq\Prop\cap x \text{ for } x\in W
      \enspace,
    \\
    \hat x
    &\coloneqq
      \textstyle\bigwedge(x\cap T_0)
      \text{ for } x\in W \enspace,
    \\
    A(t,\varphi)
    &\coloneqq
      \{
      x\in W\mid
      \exists\psi\in\Lang_\JCDL,\,
      {}\vdash\hat x\to t\colu\psi\varphi
      \}
      \enspace,
    \\
    M
    &\coloneqq(W,\leq ,V,A)
      \enspace.
  \end{align*}
  Key differences from the old proof:
  \begin{itemize}
  \item every ``$B\,$'' in the old proof has been replaced by
    ``$\dot B\,$'',

  \item the mapping taking a world $x\in W$ to a formula $\hat x$ is
    new,

  \item the function $A$ of type
    $(\Term_\JCDL\times\Lang_\JCDL)\to\wp(W)$ is new, and

  \item $M$ has been expanded to contain $A$.
  \end{itemize} 
  
  The arguments in the old proof (modulo the $\JCDL$-conventions) are
  used to prove $W$ is finite and nonempty, $\leq$ is reflexive and
  transitive, $\leq$ is total on each connected component, and $\leq$
  is well-founded.  We prove the following result particular to the
  new proof:
  \begin{equation}
    y\in\cc(x)
    \quad\Rightarrow \quad
    (\forall t\colu\psi\varphi\in C_1,\;\;
    t\colu\psi\varphi\in x \;\;\text{iff}\;\;
    t\colu\psi\varphi\in y)
    \enspace.
    \label{eq:JCDL<=}
  \end{equation}
  Proceeding, suppose $y\in\cc(x)$. Since $\leq$ is total on each
  connected component, we may assume without loss of generality that
  $x\leq y$.  (The case where $y\leq x$ is argued similarly.) Now
  $x\leq y$ implies there exists $\delta\in(x\cap y\cap C_1)$ such
  that $y^\delta\subseteq x$.  For the left-to-right direction:
  suppose $t\colu\psi\varphi\in x\cap C_1$.  If we had
  $t\colu\psi\varphi\notin y$, then it would follow by maximal
  consistency that $\lnot t\colu\psi\varphi\in y$, hence
  $\dot B^\delta(\lnot t\colu\psi\varphi)\in y$ by (eNI), hence
  $\lnot t\colu\psi\varphi\in x$ by $y^\delta\subseteq x$, thereby
  contradicting the consistency of $x$. So it must be the case that
  $t\colu\psi\varphi\in y$ after all.  Now for the right-to-left
  direction: suppose $t\colu\psi\varphi\in y\cap C_1$.  It follows by
  (ePI) that $\dot B^\delta(t\colu\psi\varphi)\in y$.  Since
  $y^\delta\subseteq x$, we obtain $t\colu\psi\varphi\in x$.  So
  \eqref{eq:JCDL<=} indeed obtains.

  In order to conclude that $M$ is a Fitting model, we must prove that
  $A$ satisfies the properties required of an admissibility function
  (Definition~\ref{definition:F-model}). We state and prove these in
  turn.
  \begin{itemize}
  \item Certification: $A(c_\varphi,\varphi)=W$.
    
    Take an arbitrary $x\in W$. We have
    $\vdash\hat x\to\dot B^\varphi\varphi$ by (Succ).  Since
    $\dot B^\varphi\varphi=c_\varphi\colu\varphi\varphi$, it follows
    that $\vdash\hat x\to c_\varphi\colu\varphi\varphi$.  Applying the
    definition of $A$, we obtain $x\in A(c_\varphi,\varphi)$. Since
    $x\in W$ was chosen arbitrarily, we conclude that
    $A(c_\varphi,\varphi)=W$.

  \item Application:
    $A(t,\varphi_1\to\varphi_2)\cap A(s,\varphi_1) \subseteq
    A(t\cdot s,\varphi_2)$.

    Assume $x\in A(t,\varphi_1\to\varphi_2)\cap A(s,\varphi_1)$.
    Applying the definition of $A$, this means there exists
    $\psi_1\in\Lang_\JCDL$ and $\psi_2\in\Lang_\JCDL$ such that
    $\vdash\hat x\to t\colu{\psi_1}(\varphi_1\to\varphi_2)$ and
    $\vdash\hat x\to s\colu{\psi_2}\varphi_1$.  Now we have each of
    $\vdash B^{\varphi_1\land\varphi_2}(\varphi_1\to\varphi_2)$ and
    $\vdash B^{\varphi_1\land\varphi_2}\varphi_1$ by (Succ) and modal
    reasoning.  So it follows that
    \[
    \vdash\hat x\to
    t\colu{\psi_1}(\varphi_1\to\varphi_2)\land
    B^{\varphi_1\land\varphi_2}(\varphi_1\to\varphi_2)\land
    s\colu{\psi_2}\varphi_1\land
    B^{\varphi_1\land\varphi_2}\varphi_1\enspace.
    \]
    Applying (eA), we obtain
    \[
    \vdash\hat x\to
    t\colu{\varphi_1\land\varphi_2}(\varphi_1\to\varphi_2) \land
    s\colu{\varphi_1\land\varphi_2}\varphi_1\enspace,
    \]
    from which it follows by (eK) that
    $\vdash\hat x\to (t\cdot
    s)\colu{\varphi_1\land\varphi_2}\varphi_2$.
    Applying the definition of $A$, it follows that
    $x\in A(t\cdot s,\varphi_2)$.
    
  \item Sum:
    $A(t,\varphi)\cup A(s,\varphi)\subseteq
    A(t+s,\varphi)$.

    Suppose $x\in A(t,\varphi)\cup A(s,\varphi)$. Applying the
    definition of $A$, this means there exists $\psi_t\in\Lang_\JCDL$
    such that $\vdash\hat x\to t\colu{\psi_t}\varphi$ or there exists
    $\psi_s\in\Lang_\JCDL$ such that
    $\vdash\hat x\to s\colu{\psi_s}\varphi$. Applying (eSum), it
    follows that $\vdash\hat x\to(t+s)\colu\psi\varphi$ for some
    $\psi\in\{\psi_t,\psi_s\}$. But then we obtain by the definition
    of $A$ that $x\in A(t+s,\varphi)$.

  \item Admissibility Indefeasibility: if $x\in A(t,\varphi)$ and
    $y\in\cc(x)$, then $y\in A(t,\varphi)$.
    
    Suppose $x\in A(t,\varphi)$ and $y\in\cc(x)$.  It follows from
    $x\in A(t,\varphi)$ by the definition of $A$ that there exists
    $\psi\in\Lang_\JCDL$ such that
    $\vdash\hat x\to t\colu\psi\varphi$.  But $y\in\cc(x)$, and so it
    follows by \eqref{eq:JCDL<=} that $\hat y=\hat x$.  Therefore
    $\vdash\hat y\to t\colu\psi\varphi$, from which it follows by the
    definition of $A$ that $y\in A(t,\varphi)$.
  \end{itemize}
  So $A$ satisfies the properties required of an admissibility
  function.
  
  Since $W$ is nonempty, $A$ satisfies the properties of an
  admissibility function, and $\leq$ is a locally well-ordered, it
  follows that $M$ is a locally well-ordered Fitting model.  The
  arguments in the old proof (modulo the $\JCDL$-conventions) are then
  used to prove that the Consistency Lemma holds and that the
  Minimality Lemma holds.
  
  By the old proof (modulo the $\JCDL$-conventions), all of the
  induction base and step cases of the Truth Lemma hold, except for
  the induction step case for formulas $t\colu\psi\varphi$. What
  remains is to check this remaining case.  Before we do this, note:
  by the definition of $C_1$ as the Boolean closure of $C_0$, it
  follows from $t\colu\psi\varphi\in C_1$ that
  $t\colu\psi\varphi\in C_0$ and therefore $t\colu\psi\varphi\in T_0$.
  We now proceed with the argument.
  \begin{itemize}
  \item Induction step $t\colu\psi\varphi$ (left to right):
    if $x\in W$ and $t\colu\psi\varphi\in x\cap C_1$, then
    $M,x\models t\colu\psi\varphi$.

    Assume $t\colu\psi\varphi\in x\cap C_1$. By (eCert) and the
    definition of $C_1$, it follows that
    $\dot B^\psi\varphi\in x\cap C_1$.  Using the old proof (modulo
    the $\JCDL$-conventions and using
    Theorem~\ref{theorem:JCDL-wf-B} in place of
    Theorem~\ref{theorem:CDL-wf}), we obtain
    $M,x\models\dot B^\psi\varphi$.  Also, since
    $t\colu\psi\varphi\in x\cap C_1$ implies
    $t\colu\psi\varphi\in x\cap T_0$, it follows that
    $\vdash\hat x\to t\colu\psi\varphi$, from which we obtain
    $x\in A(t,\varphi)$ by the definition of $A$.  So since
    $x\in A(t,\varphi)$ and $M,x\models\dot B^\psi\varphi$, it follows
    by Theorem~\ref{theorem:JCDL-truth-B} that
    $M,x\models t\colu\psi\varphi$.

  \item Induction step $t\colu\psi\varphi$ (right to left): if
    $x\in W$, $t\colu\psi\varphi\in C_1$, and
    $M,x\models t\colu\psi\varphi$, then $t\colu\psi\varphi\in x$.
    
    Assume $t\colu\psi\varphi\in C_1$ and
    $M,x\models t\colu\psi\varphi$.  By
    Theorem~\ref{theorem:JCDL-truth-B}, it follows that
    $M,x\models\dot B^\psi\varphi$ and $x\in A(t,\varphi)$.  Using the
    old proof (modulo the $\JCDL$-conventions and using
    Theorem~\ref{theorem:JCDL-wf-B} in place of
    Theorem~\ref{theorem:CDL-wf}), it follows from
    $M,x\models\dot B^\psi\varphi$ that $\dot B^\psi\varphi\in x$.
    Applying the definition of $A$, it follows from
    $x\in A(t,\varphi)$ that there exists $\chi\in\Lang_\JCDL$ such
    that $\vdash\hat x\to t\colu\chi\varphi$.  But then we have
    $\vdash\hat x\to t\colu\chi\varphi \land \dot B^\psi\varphi$.
    Applying (eA), we obtain $\vdash\hat x\to t\colu\psi\varphi$.
    Since $t\colu\psi\varphi\in C_1$ implies
    $t\colu\psi\varphi\in T_0$, it follows from
    $\vdash\hat x\to t\colu\psi\varphi$ by the maximal consistency of
    $x$ in $C\supseteq T_0$ that $t\colu\psi\varphi\in x$.
  \end{itemize}
  This completes the proof of the Truth Lemma.  We then apply the
  argument as in the old proof (modulo the $\JCDL$-conventions and
  using Theorem~\ref{theorem:JCDL-lwo} in place of
  Theorem~\ref{theorem:CDL-lwo}) to conclude that $\JCDL$ is
  complete with respect to the class of well-ordered Fitting models.
\end{proof}

%%
%% bibliography
%%

\bibliographystyle{plain}
\bibliography{brs-JCDL}

%%
%% end of document
%%
\end{document}